\documentclass[preprint,12pt,authoryear]{elsarticle}

\usepackage{amssymb}
\usepackage{amsmath}
\usepackage{booktabs}
\usepackage{graphicx} 

\usepackage{pdflscape}
\usepackage{pifont}
\usepackage{hyperref}
\usepackage{mathtools} 
\usepackage{xcolor}
\usepackage{tabularx}
\usepackage{float}      
\usepackage{placeins}   
\usepackage{array}
\usepackage{makecell}
\newcommand{\cmark}{\ding{51}} 
\usepackage{siunitx}    
\usepackage{amsthm}
\newtheorem{theorem}{Theorem}
\newtheorem{proposition}{Proposition}   
\usepackage{enumitem}
\usepackage[ruled,vlined,linesnumbered]{algorithm2e}

\journal{Computers \& Chemical Engineering}

\begin{document}

\begin{frontmatter}



\title{Privacy-Preserving Coordinated Operation of Multi-Player Industrial Network Using Secure Aggregation} 


\author[aff1]{Akshdeep Singh Ahluwalia}
\author[aff2]{Zachary Wilson}
\author[aff3]{Jeffrey E.~Arbogast}
\author[aff1]{Can Li\corref{cor1}}

\cortext[cor1]{Corresponding author}
\ead{canli@purdue.edu}

\affiliation[aff1]{organization={Davidson School of Chemical Engineering, Purdue University},
            city={West Lafayette},
            state={IN},
            postcode={47907},
            country={USA}}

\affiliation[aff2]{organization={Air Liquide Innovation Campus, Air Liquide},
            city={Newark},
            state={DE},
            postcode={19702},
            country={USA}}

\affiliation[aff3]{organization={Digital \& AI, Air Liquide},
            city={Paris},
            postcode={75011},
            country={France}}


\begin{abstract}
Electrified chemical industries with operational flexibility can reduce operating costs by shifting production and distribution decisions in response to time-varying electricity prices. However, chemical plants are rarely isolated; they operate within process networks where coordinated demand response can exploit flexibility across multiple stakeholders. A centralized coordination scheme would require access to stakeholders' local scheduling models and proprietary operational data, which is often incompatible with data-privacy requirements. Distributed optimization with an independent central coordinator (ICC) avoids direct model sharing, but the iterative exchange of coupling variables can still reveal information about private model parameters.

We propose a privacy-preserving distributed coordination framework for coordinated demand response in industrial networks. The framework integrates secure aggregation with an ICC-based alternating direction method of multipliers (ADMM) algorithm, so that plant-level messages are numerically masked and become useful to the ICC only after aggregation. We test the framework on a multi-plant industrial gas network in which three air-separation units jointly schedule production and shipment decisions to shared customer regions. To support stable participation, we incorporate a two-phase revenue-sharing mechanism that reallocates realized savings so that every plant improves relative to its decentralized status quo. In a 31-day rolling-horizon simulation using synthetic data representing heterogeneous electricity prices and demand, the coordinated policy reduces total network cost by 19.77\% relative to decentralized operation and achieves a full-month cost within 3.08\% of that obtained under a centralized social-welfare-maximization benchmark. We further quantify a conservative worst-case collusion mode, showing how unmasked iterates and auxiliary information can expose private objective parameters.
\end{abstract}




\begin{keyword}
Coordinated demand response \sep Distributed optimization \sep Secure aggregation \sep Privacy-preserving optimization \sep Industrial scheduling \sep Adversarial inference
\end{keyword}
\end{frontmatter}

\newpage

\section{Introduction}
\label{sec:introduction}

The urgent need to address climate change has driven global initiatives like the Paris Agreement of 2015 to set stringent carbon emission standards. Chemical industries, contributing 7\% to global greenhouse gas emissions, are increasingly focusing on electrification as an established strategy for swift decarbonization \citep[]{Tickner2021TransitioningCrises}. However, integrating electrification faces challenges due to volatile electricity prices caused by the intermittent nature of renewable energy in the grid \cite[]{Kyritsis2017ElectricityImplications}. Demand Response (DR) refers to the mechanism by which electricity consumers, including residential, commercial, and industrial users, modify their electricity consumption patterns in response to fluctuations in electricity prices \cite[]{Zhang2016}. A wide range of chemical industries, including air separation, steel and cement production, aluminum processing, and chlor-alkali, possess inherent potential for demand response participation \citep[]{Allman2022}. Their operational flexibility allows them to increase production when electricity prices are low and reduce consumption during high-price periods. Moreover, their large scale and inherent product storage capabilities make them particularly well suited for demand response programs \cite[]{Strbac2008,Tsay2019,Zhang2015}. However, optimizing process schedules in isolation may limit the effectiveness of DR \citep[]{Klaucke2020}. This is primarily because these processes are typically components of a larger interconnected network. Coordinated operations across this network can optimize overall energy usage and cost, leveraging synergies and shared capacities that individual operations cannot achieve alone \citep[]{Wassick2009,Allman2020,Allman2022}.

Early studies on coordinated optimization in industrial supply chain networks established that substantial cost savings can be achieved through system-wide planning across multiple plants \citep{Marchetti2014,Neiro2022}. However, these studies assumed either the existence of a trusted decision maker with access to all plant-level scheduling models or an institutional setting in which all assets belong to the same parent organization. Such assumptions are often unrealistic in practice, since supply chain networks commonly involve independent stakeholders that regard their process models, cost structures, and operational parameters as proprietary. Consequently, any practically relevant framework for coordinated demand response must address two fundamental challenges: (i) preserving the privacy of local scheduling models and associated operational data, and (ii) allocating coordination gains in a way that provides all participants with sufficient incentive to cooperate.

Recently, \citet{Allman2022} proposed a fairness-guided framework for coordinated demand response that combines distributed optimization for operational coordination with game-theoretic mechanisms for benefit allocation. Their framework uses the alternating direction method of multipliers (ADMM) to coordinate plant-level scheduling decisions without requiring full model sharing, and subsequently applies mechanisms such as Nash bargaining to distribute the realized benefits fairly. In this setting, ADMM enables each plant to solve a local scheduling problem iteratively while exchanging only limited coordination messages, such as shipping quantities, with an independent central coordinator (ICC) \citep[]{Boyd2010}. At each iteration, the ICC aggregates plant-level shipping estimates and returns the residual quantity required to satisfy network-wide demand. This approach has demonstrated the economic potential of coordinated demand response across several case studies \citep{Allman2020,Allman2022}. However, while game-theoretic mechanisms address the allocation of realized coordination gains, the privacy of the information exchanged during distributed coordination remains a distinct and unresolved challenge. Although ADMM avoids direct disclosure of complete plant-level scheduling models, the sequence of communicated iterates may still encode sensitive information about local feasible regions and private cost parameters. Indeed, prior work has shown that sensitive parameters of entities participating in ADMM-based coordination can be inferred under adversarial attacks \citep{Zhang2019,Dvorkin2020}. This leakage risk has neither been explicitly assessed nor incorporated into prior coordinated demand response studies. Therefore, privacy-by-limited-disclosure does not provide formal privacy protection, leaving unresolved the practical challenge of ensuring privacy while leveraging ADMM for coordinated demand response.

To facilitate the practical adoption of coordinated demand response, ADMM-based coordination frameworks must therefore incorporate explicit privacy protection for the messages exchanged during iterative coordination. Our framework modifies the ICC--ADMM architecture by requiring each plant to numerically mask its shared shipping estimates so that the masks cancel only upon aggregation. As a result, the coordinator recovers only the aggregate quantity required for the ADMM update, while individual plant-level shipping information remains concealed from the coordinator. This privacy-enhancing mechanism is an instance of secure aggregation, whose purpose is to ensure that communicated information remains useful only in aggregated form. Secure aggregation has also been employed in other privacy-sensitive distributed settings, including federated learning and smart-metering systems \citep[]{bonawitz2017practical,thoma2012secure}. After coordination is achieved through secure-aggregation-based ADMM in the first phase, the realized savings are redistributed in a second phase based on Nash bargaining. Our main contributions are summarized as follows:

\begin{itemize}
    \item \textbf{Privacy-aware ICC--ADMM coordination mechanism:} We propose an ICC--ADMM coordination scheme in which plants exchange only masked messages via Diffie--Hellman-based neighbor masking, enabling secure aggregation of the quantities required for coordination while reducing plant-level information disclosure.

    \item \textbf{Benchmarking on an industrial-scale case study:} We demonstrate the practical efficacy of the proposed architecture on a simulated multi-plant industrial gas network using industrial-scale mixed-integer linear programming (MILP) scheduling models. The network jointly optimizes production and shipping decisions over a 31-day horizon while preserving privacy during coordination through secure aggregation.
    
    \item \textbf{Limitation analysis under worst-case collusion:} We evaluate privacy under an extreme and practically unlikely collusion regime in which all but one stakeholder collude against the remaining participants. Although masking-based secure aggregation protects against collusion as long as at least two plants remain honest, this worst-case setting falls outside that protection regime and therefore serves as a conservative stress test. We use it to quantify residual privacy leakage and show that, when revealed iterates are combined with plausible side information, an adversary may formulate an empirical data-fitting recovery model to infer private cost parameters of the remaining participant.
\end{itemize}

\paragraph{Paper organization}
Section~\ref{sec:literature-review} provides a detailed literature review and summarizes the state of the art in privacy-aware industrial coordination. Section~\ref{sec:background} gives an overview of secure aggregation methods and motivates our choice of neighbor masking. Section~\ref{sec:methods} presents the system model, the status quo and social-welfare formulations, and the ICC--ADMM coordination scheme with secure aggregation. Section~\ref{sec:case-study-and-results} describes the industrial gas-network-based case study, the simulation setup, and the coordination and revenue-sharing results. Section~\ref{sec:adv_model} introduces the worst-case collusion threat model and the adversarial recovery model, followed by numerical evidence of privacy leakage. Section~\ref{sec:conclusion} concludes and outlines directions for strengthening privacy and improving the robustness of coordinated industrial demand response.

\section{Related Works}
\label{sec:literature-review}


Energy-intensive chemical and metallurgical industries such as air separation plants \citep[]{Ierapetritou2002,Zhang2015,Tsay2019}, chlor-alkali processes \citep[]{Bre2019,Otashu2019}, and steel plants \citep[]{Castro2020} have emerged as important application domains for demand response. In the context of air separation units (ASUs), \citet{Ierapetritou2002} employed a two-stage stochastic programming framework to account for uncertainty in electricity price realizations during scheduling. Subsequently, \citet{Zhang2015} investigated the integration of cryogenic energy storage (CES) with ASUs to enhance load-shifting flexibility and create additional opportunities for participation in ancillary service markets. \citet{Tsay2019} further proposed a data-driven optimization framework that integrates scheduling and control for demand response in industrial ASUs. They also performed Monte Carlo sensitivity analysis to quantify the impact of electricity price forecast volatility, evaluating participation in both day-ahead and real-time electricity markets and demonstrating significant economic benefits. Similarly, \citet{Bre2019} developed a mode-switching MILP framework for the chlor-alkali process, motivated by the sector's large installed capacity, high electricity intensity, and widespread industrial deployment. They considered two operating modes with distinct power demands and showed that, despite downtime penalties associated with switching, equipment oversizing combined with mode switching can yield substantial long-term cost savings. In contrast, \citet{Otashu2019} presented a dynamic membrane-cell model for the chlor-alkali process and demonstrated, through an industrial-scale simulation and optimization case study, the provision of fast demand response. Although cell temperature limits the achievable response speed and available demand response capacity, the process was still shown to provide substantial load curtailment during peak-price periods. Beyond chemical processes, \citet{Castro2020} proposed a novel MILP-based demand response framework for steel production that explicitly incorporates alternative electric arc furnace operating modes together with electrode degradation and replacement. Their formulation highlights the trade-off between shifting production toward low-price hours and the associated impacts on energy efficiency and electrode consumption.

The capacity to modulate production rates and maintain product inventories enhances the feasibility of demand response participation in chemical industries. Beyond single-plant optimization, additional value can be unlocked when multiple plants or stakeholders coordinate their decisions, thereby improving system-wide flexibility and reducing overall cost. Such gains may arise, for example, when geographically separated but operationally similar plants face different locational marginal prices (LMPs) for electricity. This spatial price heterogeneity can be exploited through coordinated production shifting, inventory repositioning, and distribution routing while satisfying shared demand. In the industrial gases sector, coordination across plants owned by a single organization has been studied primarily through integrated production-distribution planning frameworks that are closely related to coordinated demand response. For example, \citet{Marchetti2014} proposed a multi-period MILP for simultaneous production and distribution planning in industrial gas supply chains. Their model coordinates operating modes, production rates, inventories, and routing decisions across multiple ASUs under time-varying electricity prices. Although the study is framed as integrated supply-chain optimization rather than explicit electricity-market demand response, it is closely related to coordinated demand response because production and sourcing decisions are jointly adjusted across plants in response to dynamic power prices. More recently, \citet{Neiro2022} developed an integrated production--distribution planning framework for regional industrial gas supply chains with multiple ASUs. Their formulation simultaneously balances production across sites while satisfying customer demand and service requirements at minimum total cost. The model incorporates plant-specific power costs, startup and shutdown decisions, inter-plant argon transfers by rail, customer inventory management, and both short- and long-haul routing decisions. While this work is also framed as enterprise-wide supply-chain coordination rather than explicit demand response, it highlights the broader value of multisite coordination architectures in industrial supply chains. In particular, it shows that fully coordinated production--distribution planning can outperform production decisions taken in isolation.

While the studies reviewed above primarily consider coordination within a single organization, similar system-level benefits may also be achievable through coordinated demand response across plants owned by distinct organizations. In such settings, however, centralized scheduling formulations are often impractical. They require stakeholders to disclose local scheduling models and operational data to a central planner, while also producing large-scale integrated optimization problems. This creates both a confidentiality concern, because process characteristics, cost structures, and production parameters may be proprietary, and a computational concern, because the resulting multi-plant models can be difficult to solve at industrial scale.

Coordinated demand response among independent stakeholders requires a distributed architecture in which each participant retains its local scheduling model and exchanges only the information required to implement the coordination algorithm. Recent studies have begun to develop such formulations for self-interested and operationally coupled stakeholders. In particular, \citet{Allman2020} studied cooperative industrial demand response between an energy-intensive producer and its downstream customers. In their formulation, customer demand profiles are decision variables that may deviate from reference schedules, and compensation payments are co-optimized so that the producer reduces electricity-driven operating costs while each participating customer improves relative to its status-quo solution. The resulting model treats downstream demand flexibility as an explicit source of system-level value, rather than as an exogenous input to a single-plant demand response problem. \citet{Allman2022} extended this setting to process networks with multiple interconnected stakeholders coupled through material flows, showing that coordinated scheduling can improve system-wide operating outcomes relative to decentralized operation. In both studies, the coordination problem is solved using ADMM, where stakeholders exchange selected coupling variables rather than complete process models.

However, limited model disclosure should not be interpreted as formal privacy protection. In ADMM-based coordination, the sequence of exchanged primal, dual, or coupling-variable iterates is generated by solving local optimization problems whose objectives, constraints, and parameters are private. These iterates may therefore encode information about proprietary feasible regions, operating limits, and cost parameters. Prior work has shown that, under adversarial settings and with suitable side information, sensitive parameters of participants in ADMM-based coordination can be inferred from the communicated iterates \citep{Zhang2019,Dvorkin2020}. This attack surface has not been explicitly assessed or incorporated into prior coordinated demand response studies. Thus, although distributed optimization reduces the need for centralized model sharing, it does not by itself provide a formal privacy guarantee for the information exchanged during coordination.

Taken together, the aforementioned literature shows that coordinated demand response is economically compelling, but that its practical deployment among independent stakeholders requires exchanging information that may reveal sensitive plant data. This paper addresses the challenge by proposing a secure aggregation-based approach for coordinated demand response.


\section{Background}
\label{sec:background}

The core of this work is to mask the ADMM messages exchanged during multi-plant coordination using secure aggregation. We first introduce a motivating example based on the aggregation step performed by the independent central coordinator (ICC) in ICC--ADMM. This example illustrates why the coordinator may need an aggregate quantity for coordination while individual plant-level messages remain commercially sensitive. We then present three secure aggregation mechanisms and define the threat models associated with distributed communication. Finally, we analyze the security properties of these mechanisms under the considered threat models and provide intuition for how secure aggregation can be embedded within ADMM for coordinated demand response.

To illustrate the role of secure aggregation in ICC--ADMM, consider one aggregation step within a coordinated demand response problem involving a set of plants $\mathcal{P}$, with $|\mathcal{P}| = P$. At a given ADMM iteration, each plant $p \in \mathcal{P}$ solves a local scheduling problem and obtains a demand-related coordination message
\[
d_p := (d_p^1,\dots,d_p^T) \in \mathbb{R}^T,
\]
where $d_p^t$ denotes plant $p$'s contribution to the network-level demand balance in time period $t$. The ICC does not need to observe each individual vector $d_p$ separately. Rather, to perform the coordination update, it only requires the aggregate quantity
\begin{equation}
D \;=\; \sum_{p \in \mathcal{P}} d_p \in \mathbb{R}^T .
\label{eq:aggregate_demand}
\end{equation}
This aggregate vector is sufficient for computing the network-level residuals or consistency updates required in the ADMM coordination step, as discussed in subsequent sections. However, the individual message $d_p$ may be commercially sensitive. Across the scheduling horizon, its temporal structure may reveal information about plant-level production schedules, operating limits, outages, ramping behavior, and operational flexibility. Therefore, requiring each plant to disclose $d_p$ directly to the ICC may expose proprietary operational information, even though the ICC only requires the aggregate quantity $D$. This motivates the use of secure aggregation mechanisms, which allow the ICC to recover $D = \sum_{p \in \mathcal{P}} d_p$ exactly while preventing direct observation of any individual plant-level vector. In this way, secure aggregation addresses the mismatch between the aggregate information required for coordination and the proprietary plant-level information that should remain concealed during coordination.

\subsection{Secure aggregation methods}
\label{subsec:secagg_methods}

Secure aggregation refers to a class of cryptographic protocols that enable a collection of agents to compute a prescribed aggregate of their local messages while preventing disclosure of the individual messages themselves \cite[]{bonawitz2017practical}. In other words, each plant transforms its local message into a privacy-protected representation, and the protocol is designed so that only the target aggregate can be recovered, not any single plant's contribution. In the present setting, for example, the coordinator should be able to recover only the desired aggregate (e.g., $\sum_{p\in\mathcal{P}} d_p$) while being unable to recover any individual demand profile. As summarized in Figure~\ref{fig:secaggmethods}, three commonly used families of secure aggregation mechanisms are: (i) neighbor masking, (ii) homomorphic encryption, and (iii) secret-sharing-based protocols.

These three families differ in both cryptographic structure and communication requirements. Since this work integrates secure aggregation into an iterative coordination procedure, the practical communication overhead of repeated aggregation is a central consideration. For this reason, the main text focuses on masking-based secure aggregation, which is the mechanism adopted in our implementation. Homomorphic encryption and secret-sharing-based approaches are introduced briefly here for context, while additional technical details for these two approaches are deferred to~\ref{app:app-sec-methods}.

\begin{figure}[h]
  \centering
  \includegraphics[width=0.9\linewidth]{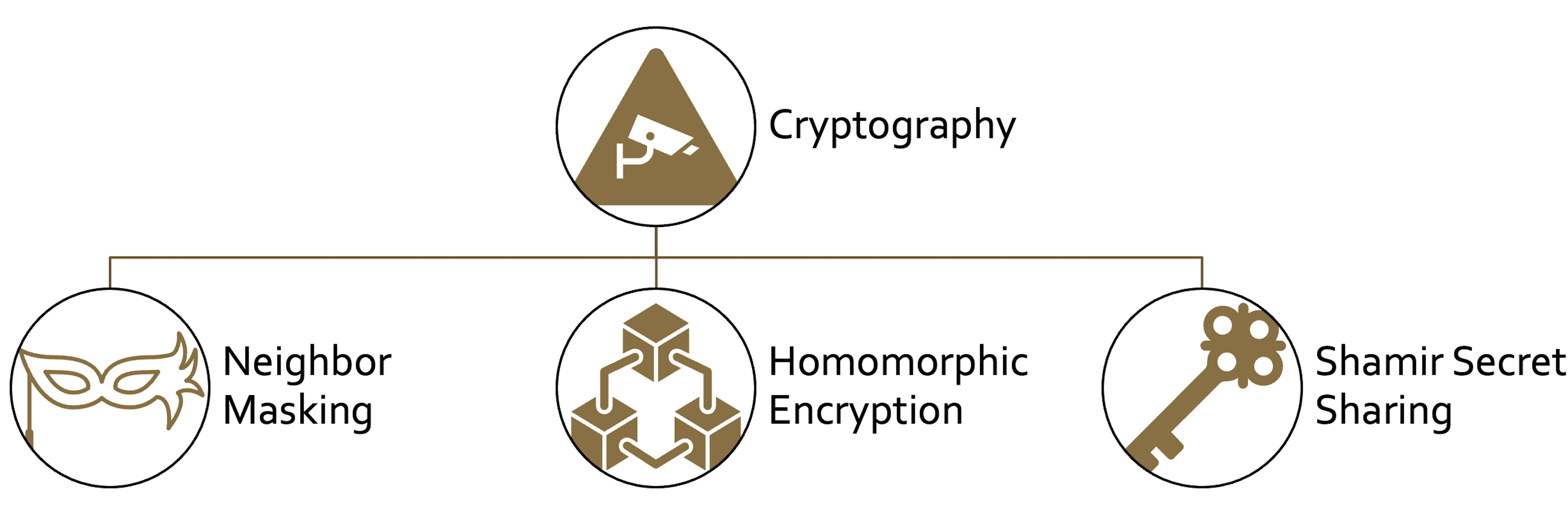}
  \caption{Secure aggregation methods.}
  \label{fig:secaggmethods}
\end{figure}

\paragraph{Masking-based secure aggregation (neighbor masking)}
Masking-based protocols hide each plant's message by adding random masks that cancel out when aggregated. In neighbor masking, pairs of plants establish shared secrets (e.g., via Diffie--Hellman key exchange \cite[]{Diffie1976}) and use these to generate equal-and-opposite masks. Each plant transmits a masked value to the coordinator; when the coordinator sums the masked messages, the pairwise masks cancel, leaving exactly the desired aggregate while the individual terms remain concealed.

We return to the ICC--ADMM aggregation setting introduced above, where the ICC requires only the aggregate demand-related vector $D$ in \eqref{eq:aggregate_demand} and does not require direct access to any individual plant-level vector $d_p$. We describe neighbor masking, a lightweight secure aggregation mechanism widely used in privacy-preserving distributed learning and coordination \citep{bonawitz2017practical}. Neighbor masking proceeds in three conceptual steps: (i) pairwise seed establishment via Diffie--Hellman, (ii) pseudorandom mask generation via a mask generator, and (iii) mask cancellation under summation.

\paragraph{(i) Pairwise seed establishment via Diffie--Hellman}
Diffie--Hellman (DH) key exchange allows two parties to establish a shared secret over a public channel \citep{Diffie1976}. Concretely, for each communicating pair $(p,q)$, plants execute DH to derive a shared seed $s_{pq}=s_{qp}$ that is computationally infeasible for an eavesdropper to compute (assuming standard hardness assumptions underlying DH). This seed is never revealed to the coordinator.

\paragraph{(ii)  Deterministic mask generation via a PRG}
Given a shared seed $s_{pq}$, plants derive a pseudorandom mask vector using a pseudorandom generator (PRG) \citep[]{Goldreich2001}.
A PRG is a deterministic algorithm that expands a random seed into a longer sequence that is computationally indistinguishable from uniform randomness \citep{KatzLindell}. We denote this expansion as
\begin{equation}
m_{pq} \;=\; \mathrm{PRG}\!\left(s_{pq}; T\right) \in \mathbb{R}^T,
\label{eq:prg_mask}
\end{equation}
where $T$ is the time horizon. In practice, one also includes context such as the day index, iteration counter, and plant identifiers in the PRG input to avoid mask reuse across rounds (e.g., by hashing these into the seed); for clarity, we suppress that notation here.

\paragraph{(iii)  Masking rule and cancellation in the aggregate}
Let $\mathcal{N}(p)$ denote the set of plants with which plant $p$ shares pairwise seeds (e.g., $\mathcal{N}(p)=\mathcal{P}\setminus\{p\}$ in the fully connected case). Each plant sends a single masked message to the coordinator:
\begin{equation}
y_p \;=\; d_p
\;+\;\sum_{q\in\mathcal{N}(p):\, p<q} m_{pq}
\;-\;\sum_{q\in\mathcal{N}(p):\, q<p} m_{qp}
\label{eq:neighbor_masking_message}
\end{equation}
The $p<q$ ordering is a notational convention that ensures each pairwise mask appears once with a plus sign and once with a minus sign across the network.
Upon receiving $\{y_p\}_{p\in\mathcal{P}}$, the coordinator computes
\begin{align}
\sum_{p\in\mathcal{P}} y_p
&=\sum_{p\in\mathcal{P}} d_p
+\sum_{p\in\mathcal{P}}\sum_{q\in\mathcal{N}(p):\, p<q} m_{pq}
-\sum_{p\in\mathcal{P}}\sum_{q\in\mathcal{N}(p):\, q<p} m_{qp} \nonumber\\
&=\sum_{p\in\mathcal{P}} d_p \;=\; D,
\label{eq:mask_cancellation}
\end{align}
because each pairwise mask $m_{pq}$ appears exactly once with a $+$ sign (in $y_p$) and exactly once with a $-$ sign (in $y_q$), and therefore cancels in the sum. Consequently, the coordinator recovers the required aggregate $D$ in \eqref{eq:aggregate_demand}, while any individual original message $d_p$ remains masked by terms that are pseudorandom to any party lacking the corresponding pairwise seeds \citep{bonawitz2017practical,KatzLindell}.

\paragraph{Homomorphic encryption}
Homomorphic encryption (HE) enables computation directly on encrypted values. In an additive HE scheme such as the Paillier cryptosystem \citep{Paillier1999}, a public key is used to encrypt local plaintext values, while the corresponding private key is used to decrypt. Thus, each plant can encrypt its local quantity (e.g., $d_p$) using the public key and send only the resulting ciphertext to the coordinator. The coordinator can then combine these ciphertexts algebraically so that the encrypted result corresponds to the sum of the original plaintext values. In this way, the coordinator can form an encrypted aggregate without observing any plant's plaintext demand profile.

A key architectural question is where the decryption capability resides. In the simplest Paillier setting, a trusted party generates the key pair, distributes the public key for encryption, and retains the private key for decryption. After receiving encrypted messages from the plants and aggregating them in ciphertext form, the coordinator sends the encrypted aggregate to this trusted party, which decrypts it and reveals the aggregate plaintext. This approach is conceptually simple, but it places substantial trust in a single entity that can decrypt the aggregate.

A more decentralized alternative is threshold Paillier, in which the decryption capability is split across multiple plants in the network \citep{pallier_threshold_2023}. A trusted setup authority generates the public key for encryption and distributes private-key shares to selected plants. The coordinator can still aggregate ciphertexts, but no single plant can decrypt the aggregate independently. Instead, at least a prescribed threshold $\tau$ of these plants must provide partial decryptions, which are combined to recover the aggregate plaintext. The threshold $\tau$ can be set so that all participating plants are required for decryption, i.e. \(\tau =  |\mathcal P| \). This reduces reliance on a single trusted decryptor, but increases communication and protocol complexity because key setup and decryption require additional interactive steps. The main cryptographic details of encryption, ciphertext aggregation, and threshold decryption are summarized in~\ref{app:app-pallier}.


\paragraph{Shamir secret sharing} Secret sharing provides privacy by splitting a private value into multiple pieces, called shares, that are individually uninformative. In Shamir's secret sharing \citep{Shamir1979}, the basic idea is to represent a scalar secret as the constant term of a randomly generated polynomial. The remaining polynomial coefficients are chosen at random, which hides the scalar within a random polynomial representation. Shares are then created by evaluating this polynomial at a collection of distinct labels assigned to the participants. The key property is that the original secret can be reconstructed only if at least \(k\) shares are collected, whereas any collection of fewer than \(k\) shares reveals no information about it. In the coordinator-mediated implementation considered here, we set the reconstruction threshold to \(k = |\mathcal{P}|\), so that recovering any one plant's secret requires all shares of that plant's polynomial.

In the present setting, each plant in the set \(\mathcal{P}\) is assigned a distinct scalar label, say \(\alpha_p\) for plant \(p \in \mathcal{P}\). Each plant then forms its own random polynomial, with its private scalar value placed in the constant term. When the local message is a vector, such as the demand profile \(q_p\), the same construction is applied one component at a time. Since the reconstruction threshold is taken as \(k = |\mathcal{P}|\), each plant uses a polynomial of degree \(k-1 = |\mathcal{P}|-1\). Each plant evaluates this polynomial at the labels of all plants in the network, including its own label. It retains the share corresponding to its own label and sends the remaining \(|\mathcal{P}|-1\) recipient-labeled shares to the coordinator for forwarding. In this way, a single private scalar is transformed into a collection of shares, with one share generated for each participant.

The first communication step is then as follows. Each plant sends to the coordinator the $|\mathcal{P}|-1$ shares intended for the other plants, together with the corresponding labels needed to identify the intended recipients. The coordinator does not learn the underlying secret from these shares; its role is only to forward each received share to the correct plant. After this redistribution step, every plant holds one retained self-share and one forwarded share from each of the other plants, all evaluated at its own label. Thus, plant $p$ possesses a complete set of shares at $\alpha_p$: one from itself and one from every other plant in the network.

The key observation for secure aggregation is that these received shares can be summed locally. Once plant $p$ sums all shares available at its own label, it obtains a single share of a new polynomial whose constant term equals the sum of all plants' private values. In other words, each plant now holds one share of the aggregate secret rather than a share of any individual secret. Each plant then sends this aggregated share to the coordinator. Because the coordinator receives one aggregated share from each plant, it can reconstruct only the aggregate value through interpolation. At no point does the coordinator obtain enough shares to reconstruct the secret of any individual plant. Hence, the aggregate is revealed, while each plant's private contribution remains hidden. The mathematical details of this procedure are summarized in~\ref{app:app-shamir}.


\subsection{Threats models}
\label{subsec:threat_models}

In this section, we briefly discuss the relevant privacy attacks and then clarify the level of protection that secure aggregation can provide against each of them. Treating plants as independent entities in the network, the adversarial attacks can be categorized as follows \citep{Liu2024}:

\begin{itemize}
    \item \textit{Honest-but-curious plants or central coordinator}: These plants or the central coordinator execute the protocol as prescribed for communication and computation. However, they may log intermediate messages and variables to infer sensitive information about other plants.

    \item \textit{Colluding entities}: A subset of plants may collude with the central coordinator or among themselves to recover private information about a targeted victim plant. For encryption-based schemes with a single trusted decryptor, this threat model also includes possible collusion or compromise of that decryptor, since it holds the full private key. 

    \item \textit{Outside eavesdropper}: This attacker can observe and intercept all exchanged messages during the execution of the protocol, but does not inject false messages or disrupt message delivery.
\end{itemize}

\subsection{Secure aggregation against threat models}
\label{subsec:threats_secagg}

This section summarizes the adversarial robustness and privacy guarantees offered by the secure aggregation primitives discussed in Section \ref{subsec:secagg_methods} under the previously considered threat models. Table~\ref{tab:secagg_threats} provides a concise comparison.

\begin{table}[t]
\centering
\small
\setlength{\tabcolsep}{4pt}
\renewcommand{\arraystretch}{1.15}
\caption{Threat-model coverage of secure aggregation primitives ($P : = |\mathcal P|$; number of plants). Here, $\tau$ denotes the decryption threshold in threshold encryption.}
\label{tab:secagg_threats}

\begin{tabularx}{\linewidth}{p{4.2cm} *{3}{>{\centering\arraybackslash}X}}
\toprule
\textbf{Method} &
\makecell{\textbf{Honest-}\\\textbf{but-curious}\\\textbf{coordinator or}\\\textbf{plants}} &
\makecell{\textbf{Colluding}\\\textbf{entities}} &
\makecell{\textbf{Outside}\\\textbf{eavesdropper}} \\
\midrule
Neighbor masking                           & \cmark & \makecell{Yes\\(up to $P-2$)} & \cmark \\
Homomorphic encryption (trusted dealer setup) & \cmark & \makecell{Yes\\(if decryptor\\remains trusted)}     & \cmark \\
Threshold homomorphic encryption ($\tau = \mathcal P$) & \cmark & \makecell{Yes\\(up to $P-2$)}     & \cmark \\
Shamir secret sharing                      & \cmark & \makecell{Yes\\(up to $P-2$)} & \cmark \\
\bottomrule
\end{tabularx}
\end{table}

\paragraph{Honest-but-curious coordinator or plants}
In the honest-but-curious setting, parties follow the protocol but may log all intermediate messages in an attempt to infer private plant information. Neighbor masking protects privacy because each transmitted message is offset by pseudorandom masks unknown to the coordinator; only the aggregate is revealed due to algebraic cancellation in \eqref{eq:mask_cancellation}. Similarly, Paillier encryption prevents the coordinator from observing plaintext values, since only ciphertexts are received and aggregation is performed in encrypted space. Shamir secret sharing also provides protection in this setting because individual shares are information-theoretically uninformative, and only the reconstructed aggregate is revealed.

\paragraph{Colluding entities}
The collusion setting is more subtle: a subset of plants may share their local information (e.g., masks, seeds, or shares) and potentially cooperate with the coordinator to infer a targeted plant's value. For neighbor masking and Shamir secret sharing, privacy is retained as long as at least two honest plants remain outside the colluding set. Intuitively, even if $|\mathcal P|-2$ plants collude against a victim plant, there remain at least two unknown contributions in the aggregate, and the victim's individual value cannot be isolated from the aggregate alone. Under the same assumption (i.e., fewer than $ |\mathcal P|-1$ parties collude), both protocols can therefore tolerate collusion by up to $|\mathcal P|-2$ plants.

For conventional Paillier with a single trusted decryptor, collusion with the party that holds the private decryption key breaks confidentiality. Once the coordinator obtains decryption capability, it can decrypt individual ciphertexts rather than only the aggregate ciphertext. Thus, the basic Paillier deployment relies critically on the trusted decryptor remaining honest and uncompromised. This limitation can be mitigated using threshold Paillier, where the private key is split across multiple plants and decryption requires partial decryptions from at least $\tau$ key-share-holding plants. If $\tau= \mathcal |\mathcal P|$, no strict subset of plants can decrypt an individual ciphertext on its own, which removes the single trusted-decryptor bottleneck. However, as with all secure aggregation mechanisms, once the aggregate itself is revealed, collusion by all but one plant can still infer the remaining plant's contribution by subtraction.

\paragraph{Outside eavesdropper}
An outside eavesdropper that can observe messages but cannot alter them does not learn any individual plant values when secure aggregation is used. In neighbor masking, intercepted messages remain protected by pairwise masking terms. In Paillier-based aggregation, intercepted ciphertexts do not reveal plaintext because the eavesdropper does not possess the private decryption key, under the standard security assumptions of the scheme. In Shamir secret sharing, intercepted shares do not reveal the underlying secret unless the reconstruction threshold is met; in the coordinator-mediated implementation considered here, all shares of any individual plant's secret are never exposed in transit, so the threshold required for reconstruction is not reached.

\paragraph{Summary}
Overall, all three primitives protect against honest-but-curious behavior and passive eavesdropping when implemented correctly. Their resistance to collusion depends on the trust structure of the protocol. Neighbor masking and Shamir secret sharing tolerate plant-only collusion of up to $|\mathcal P|-2$ plants, which is the strongest possible guarantee once the aggregate itself is revealed. In contrast, Paillier with a single trusted decryptor additionally requires that the private-key holder remain honest and uncompromised. Comparable robustness can be obtained using threshold Paillier when private-key shares are distributed among the plants and the decryption threshold is set to $\tau=|\mathcal P|$, so that no strict subset of plants can decrypt independently. We next provide a brief quantitative comparison of the communication efficiency of these secure aggregation methods.

\subsection{Secure aggregation for iterative coordination}
\label{subsec:sec_agg_for_iter_coord}

In this section, we compare the online communication complexity of the secure aggregation schemes introduced above. Since secure aggregation is invoked repeatedly within an iterative coordination algorithm such as ADMM, the relevant metric is the per-iteration online communication overhead. We quantify this overhead using two simple measures: the number of online communication stages and the total number of message transmissions in one aggregation step. 

Neighbor masking has the lightest online communication pattern. After one-time pairwise seed establishment, each plant generates its local mask and sends one masked message to the coordinator. Thus, each aggregation step requires only one communication stage and a total of $|\mathcal{P}|$ transmissions. By contrast, Paillier-based aggregation requires three online stages: plants first upload encrypted messages to the coordinator, the coordinator then sends the aggregated ciphertext for decryption, and finally the decrypted aggregate is returned. In the trusted-decryptor setting, this gives a total of $|\mathcal{P}|+2$ transmissions per aggregation step. In the threshold variant, the coordinator must instead collect $\tau$ partial decryptions, giving $|\mathcal{P}|+2\tau$ transmissions. Coordinator-mediated Shamir secret sharing also requires three online stages: plants first send recipient-labeled shares to the coordinator, the coordinator forwards them to the intended plants, and each plant finally uploads one aggregate share. This yields a total of $2|\mathcal{P}|(|\mathcal{P}|-1)+|\mathcal{P}|$ transmissions per aggregation step.

Table~\ref{tab:secagg_compare} summarizes these communication patterns. Although all three families protect individual messages, masking-based protocols are the most practical for iterative optimization because they minimize online interaction at each iteration. For the same reason, masking-based secure aggregation is also widely used in large-scale federated learning systems \citep{bonawitz2017practical}. Accordingly, this work adopts neighbor masking as the secure aggregation mechanism for privacy-preserving coordinated demand response among plants.

\newcolumntype{C}{>{\centering\arraybackslash}X}
\begin{table}[t]
\centering
\caption{Online communication complexity of secure aggregation mechanisms for one aggregation step. Here $P := |\mathcal{P}|$, and $\tau$ denotes the decryption threshold in threshold encryption. One-time setup costs are excluded.}
\label{tab:secagg_compare}

\begin{tabularx}{\textwidth}{CCC}
\toprule
\textbf{Method} &
\textbf{Online communication stages} &
\textbf{Total online transmissions} \\
\midrule
Neighbor masking & 1 & $P$ \\
Homomorphic encryption (trusted dealer setup) & 3 & $P + 2$\\
Threshold  homomorphic encryption & 3 & $P + 2\tau$ \\
Shamir secret sharing  & 3 & $2P(P-1) + P$ \\
\bottomrule
\end{tabularx}
\end{table}

\subsection{Intuition behind using neighbor masking in ADMM}
\label{subsec:intuition_sec_agg_admm}

Thus far, secure aggregation has been introduced through a market-facing example, such as joint bidding, to illustrate privacy-preserving information exchange. The focus of this work, however, is coordinated demand response on the industrial side to improve system-wide operational efficiency. As shown by \citet{Allman2022}, coordinated demand response across multiple plants can be implemented through distributed optimization, in which each plant solves a local scheduling problem and shares only limited, scheduling-relevant summaries, such as shipment variables, with a coordinator. In a standard coordinator-based ADMM architecture for a given network demand, plants iteratively solve local subproblems, transmit their shipment-estimate iterates to a central coordinator, and receive updated coordination signals, such as residual shipments and dual multipliers, computed from aggregated information \citep{Boyd2010,Franke2024}. The coordinator's role is therefore primarily to perform the linear aggregation of shipment variables required by the ADMM updates and to broadcast the resulting quantities to all plants. However, repeated exchanges of coordination messages can still leak sensitive information, leaving ADMM vulnerable to inference attacks on private plant parameters. This motivates the central idea of this paper: to mask coordination messages in a way that preserves the aggregation structure required by ADMM. Specifically, we use neighbor masking so that each plant transmits a masked version of its local iterate; the coordinator can still recover the aggregate sum or average needed for the ADMM update because the masks cancel under aggregation, while individual plant iterates remain concealed.

In the remainder of this paper, we formalize this integration by (i) specifying the optimization model and the coordinator-based ADMM updates, and (ii) replacing the coordinator's plaintext aggregation step with a secure aggregation primitive based on neighbor masking. We then study privacy under a worst-case collusion scenario in which $|\mathcal{P}|-1$ plants collude against a victim plant. This case should be viewed as a limitation benchmark rather than the typical operating regime: masking-based secure aggregation remains protected against collusion of up to $|\mathcal{P}|-2$ plants, whereas the $|\mathcal{P}|-1$ case represents the extreme setting in which only one honest plant remains. Although such a scenario is unlikely in practice, it provides a conservative benchmark for quantifying the residual information leakage of the proposed approach over the course of coordination.

\section{Methods}
\label{sec:methods}

\subsection{Problem Setting}
Consider a set of plants $\mathcal{P}$ with $|\mathcal{P}| = P$ that manufacture the same product family $\mathcal{I}$ and ship to a common set of customers. Customers are indifferent to the producing plant as long as orders are fulfilled. At the start of each day, each plant $p\in\mathcal{P}$ receives a day-ahead electricity price profile from the grid and computes a production schedule by solving a local scheduling problem (e.g., a mixed-integer model) that trades off electricity cost, process constraints, and shipment commitments. When plants act independently, each plant can only shift its own load. However, when products are substitutable across plants, there is an additional coordination opportunity: plants can reallocate production and shipments so that more energy-intensive production is executed at plants during hours with lower-carbon or lower-cost electricity, while still meeting customer demand. Such coordinated demand response can reduce operating cost and support decarbonization by exploiting spatial and temporal differences in electricity prices and renewable availability.

We consider coordinated demand response among the set of plants with the help of an independent central coordinator (ICC). Plants are assumed to have the same portfolio of products $\mathcal{I}$ (index $i$) and to ship to the same set of customer regions $\mathcal{R}$ (index $r$). The electricity price profile is taken as an exogenous parameter of the model, and plant-level operating or coordination decisions are not assumed to influence market prices.

We first describe the status-quo optimization solved by each plant independently, and then formulate the ideal social-welfare problem. Since the centralized social-welfare formulation is not directly deployable in our setting, we subsequently solve it in a distributed manner using ICC-coordinated ADMM (ICC--ADMM) with neighbor masking.

\subsection{Status quo optimization}
We assume each plant $p$ is required to satisfy a feasible status-quo demand over its product portfolio $\mathcal{I}$, customer regions $\mathcal{R}$, and shipping epochs $\mathcal{W}$ (index $w$). Customer regions aggregate the demand of local customers near a plant site; this aggregation is reasonable when transportation costs can be represented using an average distance from the plant to the region.

We define the plant-level (status-quo) demand vector as $\tilde d_p \in \mathbb{R}^{m}$, where
\[
m := |\mathcal{I}|\,|\mathcal{R}|\,|\mathcal{W}|.
\]
Hereafter, for any plant $p$ and any variable indexed across multiple sets, e.g., $x_{p i r w}$ with
$i \in \mathcal{I}$, $r \in \mathcal{R}$, and $w \in \mathcal{W}$, we use $x_p$ (the variable name without indices) to denote the stacked vector containing all components:
\[
x_p := \big(x_{p i r w}\big)_{i \in \mathcal{I},\, r \in \mathcal{R},\, w \in \mathcal{W}} .
\]
The status-quo optimization solved independently by each plant is the following mixed-integer program:
\begin{equation}
\begin{aligned}
\min_{x_p}\quad & f_p(x_p) \\
\text{s.t.}\quad & P^{u}_{p}\, x_p = \tilde d_p, \\
& x_p \in \mathcal{F}_p (S_p^{\text{init}}) \cap\left(\mathbb R_+^{r_p}\times\{0,1\}^{b_p}\right)  \\
\label{eq:sq_model}
\end{aligned}
\end{equation}

Here, $x_p$ is a mixed-integer decision vector. The integers $r_p$ and $b_p$ denote, respectively, the number of nonnegative continuous variables, e.g., production quantities, energy consumption, etc., and the number of binary scheduling variables for plant $p$, so that $x_p$ has $r_p + b_p$ components. The plant model is presented in~\ref{app:asumodel} and is adapted from \citet{Zhang2015}.

Let $P^{u}_{p}$ denote the linear projection that extracts the shipping subvector from $x_p$. Define
\begin{equation}
\begin{aligned}
u_p &:= P^{u}_{p}\, x_p, \\
u_p &\in \mathbb{R}_+^{m}.
\end{aligned}
\end{equation}

The feasible set $\mathcal{F}_p$ encodes all plant-side constraints, such as production rates, capacities, material balances, bounds, and inventory dynamics. It also depends on the initial plant state (e.g., existing inventory levels and mode-switch history). For example, if a plant has operational modes such as off, startup, and production, then a plant previously in the off mode may have reduced effective production capability due to startup requirements. We denote the initial state compactly as $S_p^{\text{init}}$. Hence, the feasible set is essentially a function of the initial plant state. We denote the status quo cost of the plant by \(\widetilde{f}_p\). Additionally, we note that parameters such as electricity prices at the plant sites affect only the cost and, as such, do not affect the feasible region of the optimization problem.

\subsection{Social welfare optimization}
In the social-welfare case, plants collectively satisfy the total network demand, i.e., aggregated across all plants and regions. For each plant, we make a copy of the product shipping vector of the plant \( P^{u}_{p}\, x_p\) as \(u_p\), and then using \(u_p\) in the global optimization problem. The centralized social-welfare problem is:
\begin{equation}
\begin{aligned}
\min_{\{x_p \in \mathcal{F}_p\},\,\{u_p\}}\quad & \sum_{p\in\mathcal{P}} f_p(x_p) \\
\text{s.t.}\quad & P^{u}_{p}\, x_p = u_p, \qquad \forall p\in\mathcal{P},\\
& \sum_{p\in\mathcal{P}} u_p = D, \\
& x_p \in \mathcal{F}_p (S_p^{\text{init}}) \cap\left(\mathbb R_+^{r_p}\times\{0,1\}^{b_p}\right)  \qquad \forall p\in\mathcal{P} \\
& u_p \in \mathbb{R}_+^{m}  \qquad \forall p\in\mathcal{P} 
\end{aligned}
\label{eq:icc-consensus}
\end{equation}
where
\[
D := \sum_{p\in\mathcal{P}} \tilde d_p \in \mathbb{R}^{m}.
\]
Since solving the above using an off-the-shelf solver such as Gurobi would require sharing all the plant models. We resort to ICC-governed ADMM with neighbor masking instead to solve the above instance as discussed next. We denote the social welfare cost of the plant by \(\bar{f}_p\).

\subsection{Distributed optimization using secure aggregation}
\label{subsec:distributed-secure-agg}

Solving the social-welfare problem \eqref{eq:icc-consensus} is complicated by the global aggregation constraint on the plants' shipping vectors. To enable a distributed solution, we introduce scaled dual variables $\lambda_p \in \mathbb{R}^{m}$ for the local coupling constraints $P^{u}_{p}x_p = u_p$ and $\nu \in \mathbb{R}^{m}$ for the global balance constraint $\sum_{p}u_p = D$. For a penalty parameter $\rho>0$, the scaled augmented Lagrangian is
\begin{equation}
\mathcal{L}_\rho\bigl(\{x_p\},\{u_p\},\{\lambda_p\},\nu\bigr)
=
\sum_{p\in\mathcal{P}}
\Bigl[
f_p(x_p)
+\tfrac{\rho}{2}\,\bigl\|P^{u}_{p}x_p - u_p + \lambda_p\bigr\|_2^2
\Bigr]
+\tfrac{\rho}{2}\,
\Bigl\|\textstyle\sum_{p\in\mathcal{P}}u_p - D + \nu\Bigr\|_2^2 
\label{eq:icc-al}
\end{equation}

With \eqref{eq:icc-al}, the vanilla ADMM iterations (for $k=0,1,2,\dots$) are:
\begin{subequations}
\begin{align}
x_p^{k+1}
&:= \arg\min_{x_p\in\mathcal{F}_p}
\mathcal{L}_\rho\bigl(\{x_p\},\{u_p^{k}\},\{\lambda_p^{k}\},\nu^{k}\bigr),
\qquad \forall p\in\mathcal{P},
\label{eq:plant-x-update}\\[2mm]
\{u_p^{k+1}\}_{p\in\mathcal{P}}
&:= \arg\min_{\{u_p\}}
\mathcal{L}_\rho\bigl(\{x_p^{k+1}\},\{u_p\},\{\lambda_p^{k}\},\nu^{k}\bigr),
\label{eq:icc-u-update}\\[2mm]
\lambda_p^{k+1}
&:= \lambda_p^{k} + \bigl(P^{u}_{p}x_p^{k+1} - u_p^{k+1}\bigr),
\qquad \forall p\in\mathcal{P},
\label{eq:lambda-update-icc}\\[1mm]
\nu^{k+1}
&:= \nu^{k} + \Bigl(\sum_{p\in\mathcal{P}}u_p^{k+1} - D\Bigr)
\label{eq:nu-update-icc}
\end{align}
\end{subequations}

\paragraph{Residuals and stopping criteria}
Define the primal residuals associated with the two constraint blocks:
\begin{equation}
r_{p}^{k+1} := P^{u}_{p}x_p^{k+1} - u_p^{k+1},\qquad
r_{0}^{k+1} := \sum_{p\in\mathcal{P}}u_p^{k+1} - D
\label{eq:admm_[primal_residuals]}
\end{equation}

We use the aggregated primal residual norm:
\begin{equation}
\|r^{k+1}\|_2
:=
\sqrt{\sum_{p\in\mathcal{P}}\|r_{p}^{k+1}\|_2^2 + \|r_{0}^{k+1}\|_2^2 }
\label{eq:primal_residual}
\end{equation}
For the dual residual, we use the standard ADMM form induced by changes in the consensus variables:
\begin{equation}
\|s^{k+1}\|_2
:=
\rho\,
\sqrt{\sum_{p\in\mathcal{P}}\|u_p^{k+1}-u_p^{k}\|_2^2}
\label{eq:dual_residual}
\end{equation}
The algorithm terminates either upon reaching the maximum iteration limit (e.g., $k=K_{\max}-1$) or when
$\|r^{k+1}\|_2\le \varepsilon_{\mathrm{pri}}$ and $\|s^{k+1}\|_2\le \varepsilon_{\mathrm{dual}}$,
where $\varepsilon_{\mathrm{pri}}$ and $\varepsilon_{\mathrm{dual}}$ are prescribed tolerances. The explicit plant and coordinator subproblems, as well as the derivation of the residual expressions, are provided in~\ref{app:app-admm-subproblems}.

\paragraph{Closed-form central coordinator update}
The coordinator update \eqref{eq:icc-u-update} admits a closed form. Define the shifted local shipments
\[
\bar{z}_p^{\,k}:=P^{u}_{p}x_p^{k+1}+\lambda_p^{k},
\qquad
\bar{z}^{\,k}:=\frac{1}{|\mathcal{P}|}\sum_{p\in\mathcal{P}}\bar{z}_p^{\,k},
\qquad
s^{k}:=D-\nu^{k}
\]
Minimizing \eqref{eq:icc-u-update} over $\{u_p\}_{p\in\mathcal{P}}$ yields, component-wise,
\begin{equation}
u_p^{k+1} \;=\; \bar{z}_p^{\,k} - \delta^{k},
\qquad
\delta^{k}:=\frac{1}{|\mathcal{P}|+1}\Bigl(|\mathcal{P}|\,\bar{z}^{\,k}-s^{k}\Bigr),
\qquad \forall p\in\mathcal{P}
\label{eq:icc-u-closed}
\end{equation}

While \eqref{eq:icc-u-closed} is computationally simple, it still requires the coordinator to (i) aggregate $\{\bar{z}_p^{\,k}\}_{p\in\mathcal{P}}$ to compute $\bar{z}^{\,k}$, and (ii) broadcast $\delta^{k}$ (or equivalently $u_p^{k+1}$) back to all plants. This dependence on aggregated information motivates the use of secure aggregation (via neighbor masking) so that the coordinator can compute the required sums without learning individual plant shipments. Moreover, we can further restructure the iterations so that the coordinator's role becomes purely aggregative by maintaining a local copy of the coordinator dual variable at each plant, which we describe next.

\paragraph{Secure communication}
Algorithm~\ref{alg:icc-admm-masked} is the main working engine of our coordination framework. Its distinctive feature is that all plant-to-ICC vector communications are protected via neighbor masking: every plant message is transmitted in masked form and denoted by $\widetilde{(\cdot)}$. Individually, $\widetilde{(\cdot)}$ has no physical meaning; it becomes useful only through aggregation, where pairwise masks cancel under summation. Consequently, the ICC only ever observes masked vectors and can recover only the aggregates required by the algorithm (e.g., sums/means), not any plant's individual shipment trajectory. Since Algorithm~\ref{alg:icc-admm-masked} is a distributed optimization algorithm, it invokes several supporting routines, including local plant subproblem solves, coordinator-side aggregation steps, and ADMM update rules. The auxiliary algorithms called within Algorithm~\ref{alg:icc-admm-masked} are provided in~\ref{app:app-icc-algorithms}. We now provide a step-by-step description of Algorithm~\ref{alg:icc-admm-masked}, beginning with its objective and incumbent upper-bound state.

\setcounter{topnumber}{5}
\setcounter{bottomnumber}{5}
\setcounter{totalnumber}{10}
\renewcommand{\topfraction}{0.95}
\renewcommand{\bottomfraction}{0.95}
\renewcommand{\textfraction}{0.05}
\renewcommand{\floatpagefraction}{0.85}

\FloatBarrier
\SetAlFnt{\small}
\SetAlgoSkip{smallskip}

\begin{algorithm}[!htbp]
\caption{\textsc{ICC--ADMM}}
\label{alg:icc-admm-masked}
\DontPrintSemicolon
\SetKwInOut{Input}{Inputs}
\SetKwInOut{Output}{Outputs}

\Input{$D\in\mathbb{R}^{m}$; damping $\beta\in[0,1)$;  $(\tau,\theta, \rho^f)$; $(\varepsilon_{\mathrm{pri}},\varepsilon_{\mathrm{dual}})$; $K_{\max}$;
$\{\rho^{0}_p>0, \{ u_p^{0},\lambda_p^{0}, \nu_p^{0}, z_p^{*,0}\}=\mathbf{0}_m, \tilde{f}_p\}_{p\in\mathcal{P}}$;\\
}

\Output{UB information $ \{ J^\star,h^\star, \texttt{feas}\star, k\star \}$, $\{J_p^\star,x_p^\star, , S_p^{\text{new}} \}_{p\in\mathcal{P}} $.}

\For{$k=0,1,2,\dots,K_{\max}-1$}{
    \tcp{\emph{Local plant solves (in parallel)  cf.\ \eqref{eq:plant-x-update} }}
    \ForPar{$p\in\mathcal{P}$}{
        $(\widetilde{z}_p^{k+1},\widetilde{\overline{z}}_p^{k}) 
\leftarrow \textsc{LocalPlantADMMSolve}(p,k)$\;
  SendToICC($\widetilde{{z_p}}^{k+1}, \widetilde{\overline{z_p}}^{k}$)
    }

    $\bar z^{\,k} \leftarrow \frac{1}{|\mathcal{P}|}\sum_{p\in\mathcal{P}} \widetilde{\overline z}_p^{k}$ \ \ \ \ \ \ \ \  \tcp{\emph{ICC aggregation (masks cancel)}}
    $\{\epsilon_p^{(h,k)}\}_{p\in\mathcal{P},\,h\in\mathcal{H}} \leftarrow \textsc{ICCUBHeuristics}(D,\{\widetilde{z}_p^{k+1}\}_{p\in\mathcal{P}}, k) $\;
    BroadcastToPlants$(\bar z^{\,k}, \{\epsilon_p^{(h,k)}\}_{p\in\mathcal{P},\,h\in\mathcal{H}} ) $\;

    \tcp{\emph{Plants perform \eqref{eq:icc-u-update} locally and run Heuristics}}
    \ForPar{$p\in\mathcal{P}$}{
    $(\widetilde u_p^{k+1},\{(J_{p}^{(h,k)},\texttt{feas}_{p}^{(h,k)})\}_{h\in\mathcal{H}})
    \leftarrow
    \textsc{PlantCoordinatorUpdate\&UBEval}(p,k,\bar z^{\,k},D,\{\epsilon_p^{(h,k)}\}_{h\in\mathcal{H}})$

    SendToICC$\bigl(\widetilde u_p^{k+1},\{(J_{p}^{(h,k)},\texttt{feas}_{p}^{(h,k)})\}_{h\in\mathcal{H}}\bigr)$\;
}

    $\bar u^{\,k+1} \leftarrow \frac{1}{|\mathcal{P}|}\sum_{p\in\mathcal{P}} \widetilde u_p^{k+1}$ \ \ \ \ \tcp{\emph{ICC aggregation (masks cancel)}}

    $(J^\star, h^\star,k^{\star},\texttt{feas}^\star) \leftarrow
    \textsc{ICCUBEval}\bigl(k, \{(J_{p}^{(h,k)},\texttt{feas}_{p}^{(h,k)})\}_{p\in\mathcal{P},\,h\in\mathcal{H}}, \{ \tilde{f}_p\}_{p\in\mathcal{P}} \bigr)$\;
    BroadcastToPlants$(h^\star,k^{\star},\texttt{feas}^\star,\bar u^{\,k+1})$\;
    
    \tcp{\emph{Update state, residual and dual ascent (at plants)}}
    \ForPar{$p\in\mathcal{P}$}{
        $\bigr( (S_p^{\text{new}}, x_p^{\star}, J_p^{\star}), (\lambda_p^{k+1}, \nu_p^{k+1}), (\alpha_p^{k+1},\gamma_p^{k+1}) \bigl)
        \leftarrow
        \textsc{PlantUpdateState\&Dual\&Residuals}(p,k,\bar u^{\,k+1},h^\star,\texttt{feas}^\star, k^\star)$
        SendToICC$\bigl(\alpha_p^{k+1},\gamma_p^{k+1}\bigr)$\;
    }

    \tcp{\emph{ICC forms global residual norms}}
    $\|r^{k+1}\|_2 \leftarrow \sqrt{ \| (|\mathcal{P}|\,\bar u^{\,k+1}-D) \|_2^2 +   \sum_{p\in\mathcal{P}}\alpha_p^{k+1} }$; \;
    $\|s^{k+1}\|_2 \leftarrow \sqrt{\sum_{p\in\mathcal{P}}\gamma_p^{k+1}}$\;
 
    BroadcastToPlants$(\|r^{k+1}\|_2,\|s^{k+1}\|_2)$\;
    \ForPar{$p\in\mathcal{P}$}{

        $(\rho_p^{k+1}, \lambda_p^{k+1}, \nu_p^{k+1})
        \leftarrow
        \textsc{PlantRhoUpdateAndDualRescale}(p,k,\|r^{k+1}\|_2,\|s^{k+1}\|_2)$
    }

    \If{$\|r^{k+1}\|_2 \le \varepsilon_{\mathrm{pri}}$ \textbf{and} $\|s^{k+1}\|_2 \le \varepsilon_{\mathrm{dual}} \ \texttt{and}  \ J^{\star} \leq \sum_p \tilde{f}_p $}{
        \textbf{break}
    }    
}
\end{algorithm}

\subsection{Step-by-step description of Algorithm~\ref{alg:icc-admm-masked}}
\label{subsec:icc-admm-algo-descp}

The goal of Algorithm~\ref{alg:icc-admm-masked} is to identify a deployable feasible operating point. This corresponds to plant schedules and shipment decisions that satisfy network demand and plant-side constraints while achieving a total system cost strictly below the network status quo. The total system cost is tracked through an incumbent network upper-bound (UB) state $ (J^\star,\,h^\star,\,k^\star,\,\texttt{feas}^\star),$ where $J^\star$ denotes the best feasible network cost found so far, $h^\star$ is the heuristic that generated it, and $k^\star$ is the iteration at which it was attained. The flag $\texttt{feas}^\star\in\{0,1\}$ indicates whether a network-feasible, status-quo-improving UB has been found. In parallel, each plant maintains a consistent local UB record $ \{ \bigl(J_p^\star,\,x_p^\star,\,S_p^{\text{new}}\bigr) \}_{\ p\in\mathcal{P}},$ corresponding to the incumbent schedule. Both the ICC and plants initialize their UB state to the status quo scenario $ \tilde{f}_p$ (equivalently $J^\star=\sum_p \tilde{f}_p$).

A key safeguard is that the ICC updates $J^\star$ only if the best UB at iteration $k$ is (i) feasible across all plants and (ii) improves the network status quo $\sum_{p\in\mathcal{P}}\tilde f_p$. Thus, any committed incumbent is guaranteed to be system-improving relative to baseline.

\paragraph{Initialization and broadcast}
At initialization, the plants broadcast their current status quo demands \( \tilde d_p \in \mathbb{R}^m  \) and cost $ \tilde{f}_p$ to the ICC. The ICC broadcasts the total network demand $D$ and all required hyperparameters, including damping $\beta$, residual-balancing parameters $(\tau,\theta)$, penalty update frequency $\rho^{f}$, and initial penalties $\{\rho_p^0\}_{p\in\mathcal{P}}$. Plants are instructed to initialize their primal and dual iterates consistently as $\{u_p^{0},\lambda_p^{0},\nu_p^{0},z_p^{\star,0}\}_{p\in\mathcal{P}}=\mathbf{0}_m$. Neighbor-masking seeds are established once using Diffie--Hellman mechanism by the plants, and used to generate iteration-indexed masks with a PRG.

\paragraph{Iterative structure and communication overview}
Algorithm~\ref{alg:icc-admm-masked} then repeats the following steps for $k=0,1,\dots$:

\begin{enumerate}[leftmargin=*, label=(\roman*)]
\item \textbf{Local plant ADMM step (masked reporting):}
Based on its current internal state \(S_p^{\text{init}}\) from the previous day, each plant first solves its local MIQP subproblem, defined in Equation~\ref{eq:plant-x-update}, to obtain \(x_p^{k+1}\). We define the extracted shipment vector \(P_p^u x_p^{k+1}\) as \(z_p^{k+1}\). This shipment vector may be optionally damped using the previous iterate to obtain \(z_p^{\star,k+1}\). The plant then forms the shifted shipment
\[
\overline z_p^{k}=z_p^{\star,k+1}+\lambda_p^{k}.
\]
This shifted shipment is sent to the ICC because the ICC must aggregate it across plants and return the aggregate quantity needed for each plant to perform the coordinator update locally (\( u_p^{k+1}\)), as defined in Equation~\ref{eq:icc-u-closed}. This local implementation is possible because each plant maintains its own local copy of the ICC dual variable \(\nu_p^k\), which is also updated locally in subsequent steps. Finally, the plant generates the neighbor mask \(mask_p^k\) and transmits only the following masked vectors to the ICC:
\begin{equation}
\widetilde z_p^{k+1}=z_p^{\star,k+1}+mask_p^k,\qquad
\widetilde{\overline z}_p^{k}=\overline z_p^{k}+mask_p^k.
\label{eq:alg-exp-1}
\end{equation}
These are the only shipment-related vectors revealed to the ICC, and the masking preserves privacy while still allowing aggregate quantities to be recovered. Algorithm~\ref{alg:local-plant-solve} in~\ref{app:app-icc-algorithms} specifies the plant-side ADMM step, including the local MIQP solve, damping hyperparameter, neighbor-mask generation, and the masked shipment vectors revealed to the ICC. This step reflects Lines 2--3 of Algorithm~\ref{alg:icc-admm-masked}.

\item \textbf{Purely aggregative ICC step:}
After receiving \( \widetilde z_p^{k+1} \) and \( \widetilde{\overline z}_p^{k} \) from all plants, the ICC performs only aggregate-level computations, marked by Lines 5--6 of Algorithm~\ref{alg:icc-admm-masked}. First, using secure aggregation, it computes
\begin{equation}
\bar z^{\,k}=\frac{1}{|\mathcal{P}|}\sum_{p\in\mathcal{P}}\widetilde{\overline z}_p^{k},
\label{eq:alg-exp-2}
\end{equation}
which equals the plaintext mean of $\{\overline z_p^{k}\}_{p\in\mathcal{P}}$ because the neighbor masks cancel in aggregate, i.e., $\sum_{p\in\mathcal{P}} mask_p^k=0$. The ICC broadcasts $\bar z^{\,k}$ back to the plants, where it is used to carry out the local coordinator update as also mentioned in step~(i).

The ICC also computes the aggregate shipment mismatch
\begin{equation}
\epsilon^k =\Bigl(\sum_{p\in\mathcal{P}}\widetilde z_p^{k+1}\Bigr)-D,
\label{eq:alg-exp-3}
\end{equation}
where the summation again reveals only an aggregate quantity. Based on this mismatch, the ICC instantiates a set $\mathcal H$ of aggregate-only upper-bound heuristics using Algorithm~\ref{alg:icc-ub-aggregate} given in~\ref{app:app-icc-algorithms}. Each heuristic $h\in\mathcal H$ generates a candidate offset allocation
\(\{\epsilon_p^{(h,k)}\}_{p\in\mathcal{P}}\) satisfying \(\sum_{p\in\mathcal{P}}\epsilon_p^{(h,k)}=\epsilon^k \). Thus, if all plants apply their assigned offsets, the aggregate shipment is corrected to satisfy the network demand exactly. For a given heuristic $h$, the vector $\epsilon_p^{(h,k)}$ specifies the shipment perturbation assigned to plant $p$. A simple instance-independent rule is \(h=\textsc{SplitEpsilonEqual}\), which assigns $\epsilon^k /|\mathcal P|$ to every plant, thereby distributing the mismatch uniformly across the network. Moreover, instance-dependent heuristics can also be implemented using public problem-specific information, such as electricity prices at the plant sites, to guide the split. A detailed discussion of the upper-bound heuristics used in this work is provided in~\ref{app:app-ub-heuristics}. Importantly, these candidate perturbations are demand-correcting at the aggregate level, but they are not guaranteed to be feasible for each plant's local scheduling model. Therefore, each plant next evaluates the local feasibility and cost of its assigned candidate perturbations, as described below.

\item \textbf{Local coordinator update and UB evaluation at plants:}

This step reflects Lines 8--9 of Algorithm~\ref{alg:icc-admm-masked}. After receiving $\bar z^{\,k}$ and the candidate offset allocations $\{\epsilon_p^{(h,k)}\}_{h\in\mathcal H}$ for the current iteration $k$, each plant $p$ performs two operations. First, it carries out the coordinator update locally using $\bar z^{\,k}$ and its local copy $\nu_p^{k}$ of the coordinator dual:
\begin{equation}
\delta_p^{k}=
\frac{1}{|\mathcal{P}|+1}
\Bigl(|\mathcal{P}|\,\bar z^{\,k}-(D-\nu_p^{k})\Bigr),
\qquad
u_p^{k+1}=\overline z_p^{k}-\delta_p^{k}.
\label{eq:alg-exp-4}
\end{equation}
Equation~\eqref{eq:alg-exp-4} is the local-copy implementation of the coordinator update in \eqref{eq:icc-u-closed}. The distinction is that \eqref{eq:icc-u-closed} is written using the coordinator dual $\nu^k$, whereas \eqref{eq:alg-exp-4} uses the plant-local copy $\nu_p^k$. Maintaining these local copies allows the ICC operations to remain purely aggregative. The resulting $u_p^{k+1}$ is needed by plant $p$ in the next local ADMM subproblem solve. A masked version of $u_p^{k+1}$ is also sent to the ICC so that the ICC can compute the aggregate quantity $\bar u^{\,k+1}$ required for the residual evaluation.

Second, for each heuristic $h\in\mathcal H$, plant $p$ evaluates the candidate shipment schedule induced by its assigned offset $\epsilon_p^{(h,k)}$. The candidate shipment is defined as
\begin{equation}
\hat d_p^{(h,k)} = z_p^{\star,k+1} - \epsilon_p^{(h,k)} .
\label{eq:cand_ub_shipment}
\end{equation}
Plant $p$ then solves the local feasibility and cost-evaluation problem in \eqref{eq:sq_model} with $\tilde d_p$ replaced by the candidate shipment $\hat d_p^{(h,k)}$. This step is detailed in Algorithm~\ref{alg:plant-upperbound} presented in~\ref{app:app-icc-algorithms}. For each heuristic $h$, the plant obtains
\[
\bigl(J_{p}^{(h,k)}, \texttt{feas}_{p}^{(h,k)}, S_p^{\text{new},(h,k)}\bigr),
\]
where $J_{p}^{(h,k)}$ is the local production-and-shipping cost, $\texttt{feas}_{p}^{(h,k)}$ indicates whether the candidate shipment is feasible for the plant's scheduling model, and $S_p^{\text{new},(h,k)}$ is the terminal internal state that would result if the candidate schedule were committed. This terminal state is used to initialize the plant's next-day optimization only if the corresponding heuristic and iteration are selected as the incumbent network schedule. The candidate production schedule and terminal state remain local to the plant; the plant returns to the ICC only the feasibility flag and local cost for each heuristic candidate. Algorithm~\ref{alg:plant-coord-ub-eval} summarizes all of the operations performed in this step.

\item \textbf{ICC aggregation and UB selection:}

After receiving the masked coordinator-update vectors $\{\widetilde u_p^{k+1}\}_{p\in\mathcal P}$ and the per-plant upper-bound summaries
$\{(J_p^{(h,k)},\texttt{feas}_p^{(h,k)})\}_{p\in\mathcal P,\,h\in\mathcal H}$, the ICC again performs only aggregate-level operations. First, using secure aggregation, it computes
\begin{equation}
\bar u^{\,k+1}
=
\frac{1}{|\mathcal P|}
\sum_{p\in\mathcal P}\widetilde u_p^{k+1},
\label{eq:alg-exp-5}
\end{equation}
which equals the plaintext mean of $\{u_p^{k+1}\}_{p\in\mathcal P}$ because the neighbor masks cancel in aggregate. This quantity is broadcast back to the plants and is used in the subsequent local dual and residual updates.

The ICC then evaluates the candidate upper bounds using Algorithm~\ref{alg:icc-ub-eval} as shown in~\ref{app:app-icc-algorithms}. For each heuristic $h\in\mathcal H$, the ICC first checks network-wide feasibility by aggregating the feasibility flags across plants:
\begin{equation}
\texttt{feas}^{(h,k)}
=
\prod_{p\in\mathcal P}\texttt{feas}_p^{(h,k)}.
\label{eq:alg-exp-6}
\end{equation}
Thus, a heuristic is globally feasible only if every plant reports that its assigned candidate shipment is locally feasible. For each such heuristic, the corresponding network upper-bound cost is computed as
\begin{equation}
J^{(h,k)}
=
\sum_{p\in\mathcal P} J_p^{(h,k)} ,
\label{eq:alg-exp-7}
\end{equation}
whereas infeasible candidates are assigned $J^{(h,k)}=+\infty$. The ICC then selects the best feasible heuristic at iteration $k$,
\begin{equation}
h^k \in \arg\min_{h\in\mathcal H} J^{(h,k)},
\qquad
J^k = J^{(h^k,k)} .
\label{eq:alg-exp-8}
\end{equation}
As mentioned the global incumbent upper bound is initialized as \(J^\star=\sum_{p\in\mathcal P}\tilde f_p\), with \(\texttt{feas}^\star=0\), indicating that no improving coordinated schedule has yet been found. At iteration \(k\), the ICC compares the best feasible heuristic candidate \(J^k\) against the current incumbent \(J^\star\). If \(J^k<J^\star\), then the incumbent is updated as
\[
J^\star \leftarrow J^k, 
\qquad
h^\star \leftarrow h^k,
\qquad
k^\star \leftarrow k.
\]
Since \(J^\star\) is initialized at the status-quo cost, any such improvement implies that a deployable coordinated schedule with cost below the network status quo has been identified; hence, \(\texttt{feas}^\star\) is set to one. Thus, \((J^\star,h^\star,k^\star,\texttt{feas}^\star)\) records the best improving network schedule found over all ADMM iterations and heuristic candidates. The ICC broadcasts the incumbent label \((h^\star,k^\star)\), the feasibility flag \(\texttt{feas}^\star\), and \(\bar u^{\,k+1}\) to the plants. This step reflects Lines 11--13 of Algorithm~\ref{alg:icc-admm-masked}.

\item \textbf{Plant state, dual, and residual updates:}

After the ICC broadcasts $\bar u^{\,k+1}$ together with the incumbent label $(h^\star,k^\star)$ and feasibility flag $\texttt{feas}^\star$, each plant performs the local state, dual, and residual updates in Algorithm~\ref{alg:plant-dual-state-residual}. First, plant $p$ updates its local copies of the scaled dual variables as defined by Equation~\ref{eq:lambda-update-icc} and~\ref{eq:nu-update-icc}:
\begin{equation}
\nu_p^{k+1}
=
\nu_p^{k}
+
\Bigl(|\mathcal P|\,\bar u^{\,k+1}-D\Bigr),
\qquad
\lambda_p^{k+1}
=
\lambda_p^{k}
+
\Bigl(z_p^{\star,k+1}-u_p^{k+1}\Bigr).
\label{eq:alg-exp-9}
\end{equation}
The update of $\nu_p$ uses only the aggregate demand residual $|\mathcal P|\bar u^{\,k+1}-D$, while the update of $\lambda_p$ enforces local consistency between plant $p$'s shipment iterate and its coordinator-update variable.

The same routine also handles incumbent state commitment. If $\texttt{feas}^\star=1$ and $k^\star=k$, then the current iteration has produced the best improving network upper bound so far. In that case, each plant commits the locally stored candidate associated with the selected heuristic $h^\star$:
\[
S_p^{\mathrm{new}} \leftarrow S_p^{\mathrm{new},(h^\star,k)},
\qquad
x_p^\star \leftarrow x_p^{(h^\star,k)},
\qquad
J_p^\star \leftarrow J_p^{(h^\star,k)}.
\]
If this condition is not satisfied, the plant retains its previous incumbent state. 
Finally, each plant computes scalar residual contributions

\begin{equation}
\alpha_p^{k+1}
=
\bigl\|z_p^{*,k+1}-u_p^{k+1}\bigr\|_2^2,
\qquad
\gamma_p^{k+1}
=
(\rho_p^k)^2
\bigl\|u_p^{k+1}-u_p^{k}\bigr\|_2^2,
\label{eq:alg-exp-10}
\end{equation}
and sends only $(\alpha_p^{k+1},\gamma_p^{k+1})$ to the ICC. The ICC then forms the global residual norms as defined by Equation~\ref{eq:primal_residual} and~\ref{eq:dual_residual}
\begin{equation}
\|r^{k+1}\|_2
=
\sqrt{
\bigl\||\mathcal P|\,\bar u^{\,k+1}-D\bigr\|_2^2
+
\sum_{p\in\mathcal P}\alpha_p^{k+1}},
\qquad
\|s^{k+1}\|_2
=
\sqrt{\sum_{p\in\mathcal P}\gamma_p^{k+1}}.
\label{eq:alg-exp-11}
\end{equation}
These residual norms are broadcast back to the plants, which then update their penalty parameters and rescale the dual variables using Algorithm~\ref{alg:plant-rho-update-rescale} in~\ref{app:app-icc-algorithms}. This step corresponds to Lines 14--18 of Algorithm~\ref{alg:icc-admm-masked}.

\item \textbf{Penalty update and stopping criterion:}

After receiving the global residual norms from the ICC, each plant updates its local ADMM penalty parameter using the residual-balancing rule in Algorithm~\ref{alg:plant-rho-update-rescale} shown in~\ref{app:app-icc-algorithms}. The penalty parameter is updated only every $\rho^f$ iterations. If the primal residual is large relative to the dual residual, i.e., $\|r^{k+1}\|_2>\tau\|s^{k+1}\|_2$, the penalty is increased as
\[
\rho_p^{k+1}=\theta\rho_p^k.
\]
Conversely, if the dual residual is large relative to the primal residual, i.e., $\|s^{k+1}\|_2>\tau\|r^{k+1}\|_2$, the penalty is decreased as
\[
\rho_p^{k+1}=\rho_p^k/\theta.
\]
Otherwise, the penalty remains unchanged. Whenever $\rho_p$ is updated, the scaled dual variables are rescaled by the factor
\[
c_{\rho,p}^{k+1}=\frac{\rho_p^k}{\rho_p^{k+1}},
\]
so that the scaled-dual representation remains consistent after the penalty change:
\[
\lambda_p^{k+1}\leftarrow c_{\rho,p}^{k+1}\lambda_p^{k+1},
\qquad
\nu_p^{k+1}\leftarrow c_{\rho,p}^{k+1}\nu_p^{k+1}.
\]
This update is performed locally by each plant and corresponds to Lines~19--20 of Algorithm~\ref{alg:icc-admm-masked}. The ICC then applies the stopping test in Line~21, terminating only if the primal and dual residual norms satisfy their tolerances and the incumbent upper bound improves upon the network status quo.

\end{enumerate}

To conclude the algorithm description, we again state that neighbor masking ensures ICC learns only the aggregate quantities required for ADMM and UB-heuristic construction (e.g., $\sum_p \widetilde z_p^{k+1}$, $\sum_p \widetilde{\overline z}_p^{k}$, and $\sum_p \widetilde u_p^{k+1}$). Under the standard threat model for neighbor masking, this provides confidentiality against an honest-but-curious coordinator, external eavesdropper and against collusion of up to $|\mathcal{P}|-2$ plants (i.e., privacy holds as long as at least two non-colluding plants remain). Finally, if no status-quo-improving upper-bound cost is derived for the system, ICC fetches new hyperparameters, and the algorithm is implemented again. Hyperparameter selection and tuning remain instance-dependent and difficult to generalize as reported in the literature \citep[]{non_convex_admm_one,pmlr-v54-xu17a}.

\subsection{Revenue sharing and fair payoff allocation}
\label{sec:rev-sharing}

Let $h^\star$ denote the globally selected upper-bound (UB) heuristic returned by
\textsc{ICCUBEval} (Algorithm~\ref{alg:icc-ub-eval}). Each plant $p\in\mathcal{P}$ obtains a feasible
plant-side operating point under $h^\star$ and reports its corresponding UB cost $J_p^\star$.
The associated network UB cost is
\begin{equation}
J^\star := \sum_{p\in\mathcal{P}} J_p^\star.
\label{eq:Jstar-net}
\end{equation}
Although $J^\star$ is designed to be lower than the network status-quo cost $\sum_{p\in\mathcal{P}}\tilde f_p$,
it may still occur that an individual plant experiences a higher cost than its status quo, i.e.,
$J_p^\star > \tilde f_p$ for some $p$. Without compensation, such a plant has no incentive to remain in coordination. 
We therefore introduce a revenue-sharing transfer vector 
$z\in\mathbb{R}^{|\mathcal{P}|}$, where $z_p$ denotes the monetary transfer 
credited to plant $p$; a negative value of $z_p$ corresponds to a payment made by 
plant $p$, as in \citet{Allman2022}. We impose
budget balance,
\begin{equation}
\sum_{p\in\mathcal{P}} z_p = 0,
\label{eq:budget-balance-main}
\end{equation}
and define plant $p$'s realized cost after transfers as $J_p^\star - z_p$.
A transfer rule is called feasible if it guarantees individual rationality for all plants:
\begin{equation}
J_p^\star - z_p \le \tilde f_p,\qquad \forall p\in\mathcal{P}.
\label{eq:IR-cost}
\end{equation}
We show that whenever the coordination produces positive total savings denoted as \(TG \), i.e.,
\begin{equation}
TG \;:=\; \sum_{p\in\mathcal{P}}\tilde f_p \;-\; J^\star \;>\; 0,
\label{eq:TG-main}
\end{equation}
there exists at least one feasible transfer vector satisfying \eqref{eq:budget-balance-main}--\eqref{eq:IR-cost}.
See~\ref{app:app-feasible-transfers} for a constructive proof.
Having established feasibility, we next address fairness in the distribution of the benefits of coordination.
Following the two-phase approach of \citet{Allman2022}, we (i) quantify the total savings $TG$ achieved through coordination
and (ii) distribute these savings using a cooperative game-theoretic solution concept, namely Nash bargaining.
Nash bargaining determines transfers by maximizing the product of players' surpluses over the disagreement point,
subject to budget balance. Under Nash bargaining, each plant receives an equal surplus, i.e.,

\begin{equation}
\Delta_p \;=\; \tilde f_p - (J_p^\star - z_p^\star) \;=\; \frac{TG}{|\mathcal{P}|},
\qquad \forall p\in\mathcal{P}.
\label{eq:icc-compensation}
\end{equation}
Equivalently, Nash bargaining yields the closed-form transfer
\begin{equation}
z_p^\star
=
\bigl(J_p^\star-\tilde f_p\bigr)
+\frac{TG}{|\mathcal{P}|},
\qquad \forall p\in\mathcal{P},
\label{eq:zstar-main}
\end{equation}
which is budget-balanced and satisfies individual rationality whenever $TG>0$.
The derivation is provided in~\ref{app:nash-bargaining}.

\section{Case study and results}
\label{sec:case-study-and-results}

\begin{figure}[t]
  \centering
  \includegraphics[width=0.9\linewidth]{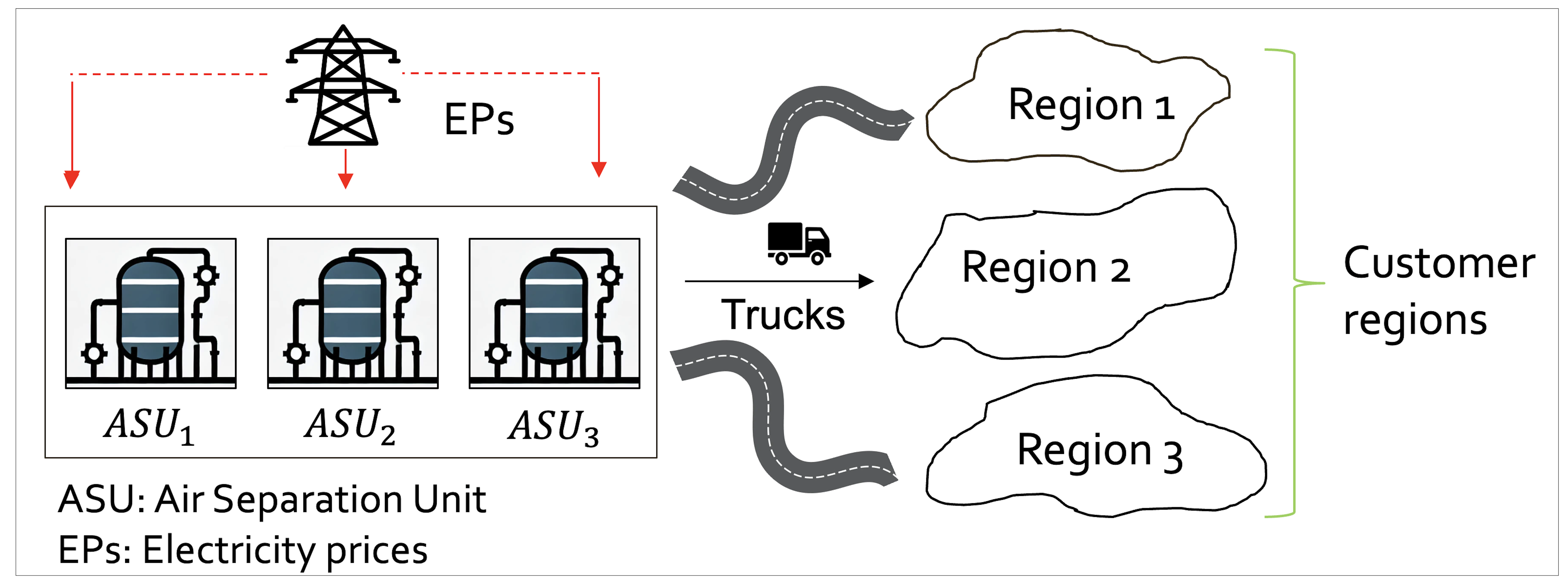}
  \caption{Map of industrial network.}
  \label{fig:networkmap}
\end{figure}

We consider a multi-stakeholder industrial gas network comprising $|\mathcal{P}|=3$ air-separation units (ASUs), denoted $\{\mathrm{ASU}_1,\mathrm{ASU}_2,\mathrm{ASU}_3\}$, coordinated by a central coordinator. The plants are geographically separated but operate within the same regional market and serve a common set of customer regions. Figure~\ref{fig:networkmap} summarizes the network topology: each ASU ships product by truck to three aggregated customer regions $\mathcal{R}=\{R_1,R_2,R_3\}$. Following standard practice in industrial gas operations, we model production and distribution decisions jointly~\citep{Marchetti2014}.

Each plant produces the same portfolio of liquid industrial gases,
$\mathcal{I}=\{\mathrm{LIN},\mathrm{LOX},\mathrm{LAR}\}$, where LIN, LOX, and LAR denote liquid nitrogen, liquid oxygen, and liquid argon, respectively. Each plant is operated in one of three scheduling modes,
$\mathcal{M}=\{\textsc{Off}, \textsc{Startup}, \textsc{Production}\}$. The ASU plant model is formulated as a mixed-integer linear program (MILP) adapted from \citet{Zhang2015}, with the full formulation provided in~\ref{app:asumodel}. The appendix details the modeling of operating modes, transitions, production, inventories, and mass and energy balances.

Customer regions are assumed to have daily product demands. Plants schedule their operations at the start of each day on an hourly grid with a rolling-horizon lookahead of one week (168 hours). Shipments are dispatched at the end of each day, i.e., after 24 hours of production. At the start of the next day, the plant state is updated (initial inventories and mode-switch history), and the optimization is re-solved in a rolling fashion.

\subsection{Simulation setup}
\label{sec:simulation-setup}

\begin{figure}[t]
  \centering
  \includegraphics[width=0.9\linewidth]{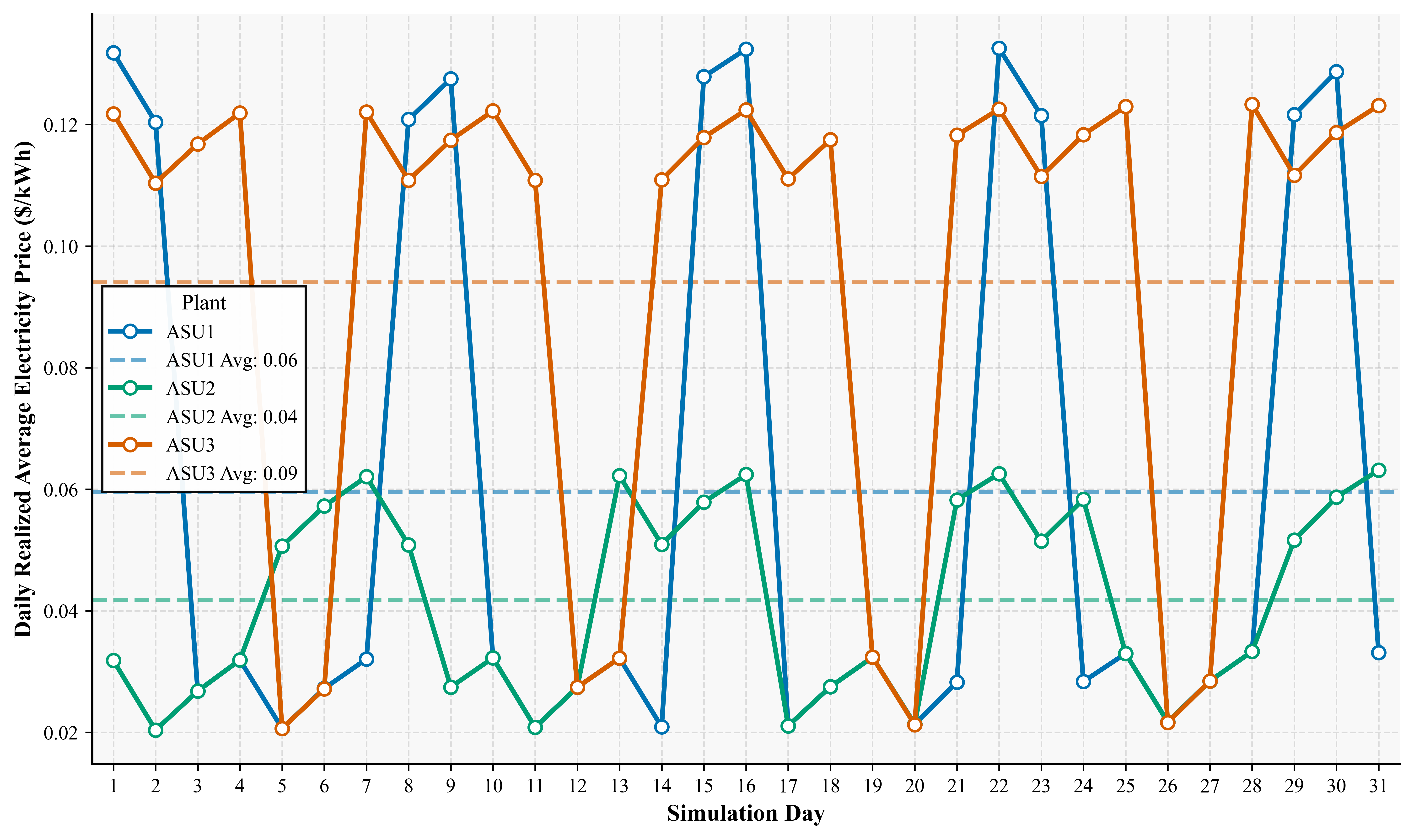}
  \caption{Average daily electricity prices received by plants over the month.}
  \label{fig:elpmonth}
\end{figure}

We simulate 31 consecutive operating days. Each day, plants forecast product demands for the 7-day horizon using historical demand (synthetic in this study). Day-ahead electricity prices are assumed available for the next 24 hours, while prices for the remaining six days are forecasted. The rolling-horizon policy then implements only the first-day decisions and updates the plant state at day end.

We benchmark coordination potential under three scenarios:
\begin{itemize}
    \item \textbf{Status quo (decentralized):} Each plant $p$ independently solves its local scheduling MILP to satisfy its own baseline forecasted demand $\tilde d_p$ (assumed feasible by construction) and incurs cost $\tilde f_p$.
    \item \textbf{Social welfare (centralized benchmark):} A centralized optimizer solves the full network problem to meet total forecasted demand $D=\sum_{p\in\mathcal{P}}\tilde d_p$, providing a system-wide lower-cost benchmark. In this case, plant incurs a cost $\bar f_p$.
    \item \textbf{ICC--ADMM (distributed):} Plants coordinate via ICC--ADMM while preserving privacy through secure aggregation. Since the underlying subproblems are MILPs, they are nonconvex, and convergence is not guaranteed. Consequently, in practice, the coordinator may need to explore multiple hyperparameter settings, upper-bound heuristics, or both to obtain feasible schedules with costs lower than the status-quo solution. The coordinated cost for plant $p$ is denoted by $J_p^\star$.
\end{itemize}

All data used in the simulation are synthetic and are intended to benchmark (i) the feasibility of distributed coordination on industrial-scale MILPs and (ii) the magnitude of achievable coordination benefits under heterogeneous electricity price profiles.

\paragraph{Air separation units and electricity prices}
All the air separation units (ASUs) are assumed to be of moderate size, with aggregated capacity (across liquid products) between 300 and 450 tons per day. We consider a case study in which $\mathrm{ASU}_1$ has the highest maximum production capacity, followed by $\mathrm{ASU}_2$ and $\mathrm{ASU}_3$. The average daily demand over the simulation horizon is assumed to be highest for $\mathrm{ASU}_1$, followed by $\mathrm{ASU}_3$, with $\mathrm{ASU}_2$ exhibiting the lowest average demand. Despite having a higher production capacity than $\mathrm{ASU}_3$, $\mathrm{ASU}_2$ is engineered to be underutilized in the simulation. Electricity prices differ across plants to reflect spatial variation in electricity market prices and congestion effects. We construct a case study in which $\mathrm{ASU}_3$ experiences the highest average electricity prices over the simulation horizon (Figure~\ref{fig:elpmonth}). This setting creates a clear incentive for coordination, enabling production and shipment responsibilities to be shifted across plants when feasible.

\begin{figure}[htpb]
  \centering
  \includegraphics[width=0.9\linewidth]{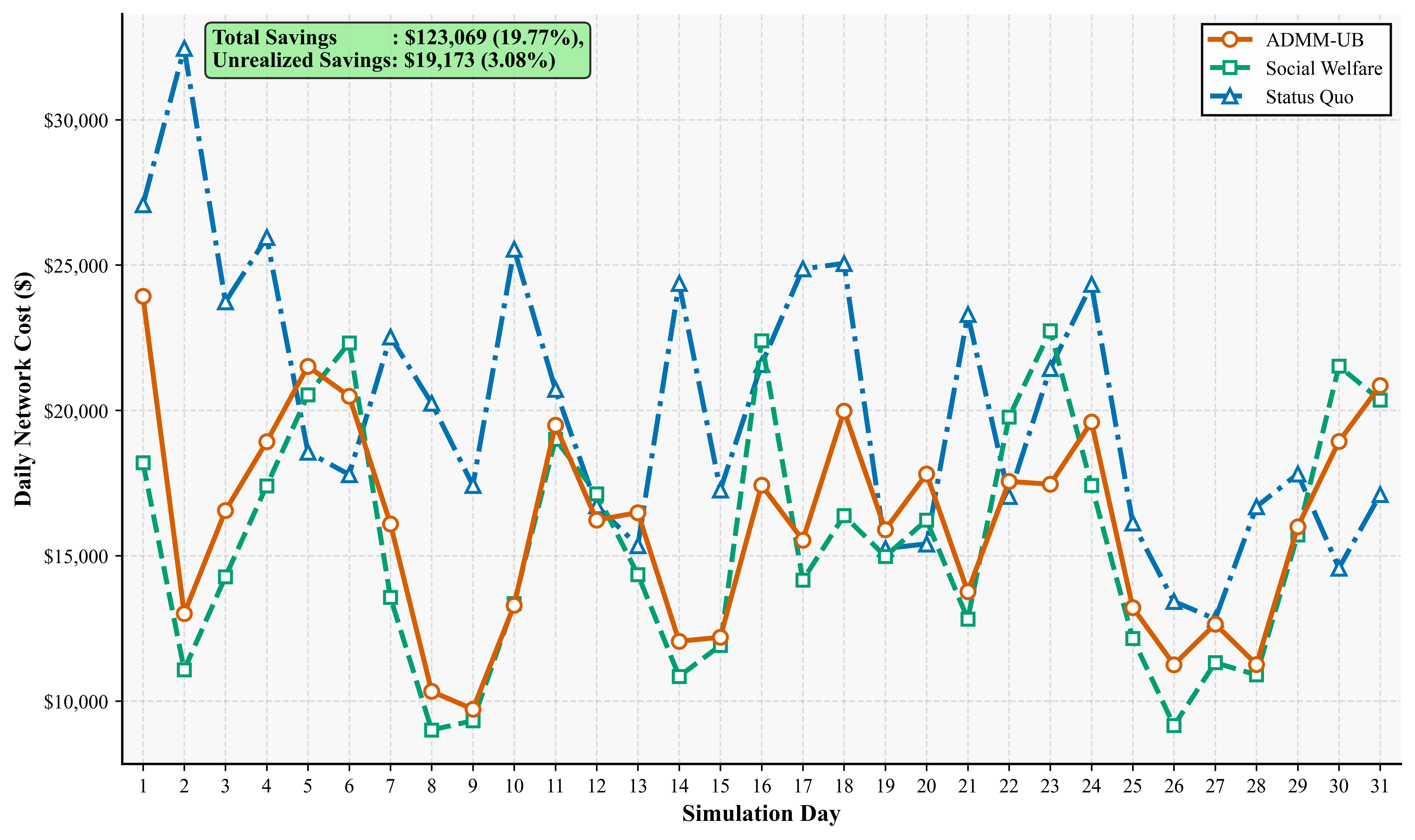}
  \caption{Daily network cost comparison across simulation period.}
  \label{fig:dayonecostcomp}
\end{figure}

\subsection{Simulation benchmarking results}
\label{sec:sim-benchmark}

As mentioned previously, we benchmark three operational trajectories for our simulation, namely, the status-quo trajectory, the coordinated ADMM-UB trajectory, and the centralized social-welfare rolling-horizon trajectory. Figure~\ref{fig:dayonecostcomp} shows these three trajectories over the 31-day rolling-horizon simulation. The ADMM-UB and centralized social-welfare cases are both implemented in closed-loop rolling-horizon manner: on each day, a week-ahead optimization problem is solved, only the first 24 hours of decisions are implemented, and the resulting plant states are propagated to the next day. Since these two policies may implement different first-day schedules, they generally induce different boundary inventories, operating modes, and mode-transition histories. Consequently, the daily optimization instances encountered by the ADMM-UB and centralized social-welfare trajectories need not be identical after the first day. 

Across the full 31-day simulation, the coordinated ADMM-UB strategy reduces the total network cost by $19.77\%$ relative to the status quo. Moreover, its realized full-month cost is within $3.08\%$ of the centralized social-welfare rolling-horizon trajectory, indicating that the distributed coordination policy captures most of the available system-level savings while preserving plant-level privacy.

As described in Section~\ref{sec:simulation-setup}, the upper-bound (UB) heuristics are used to construct a feasible week-ahead operating schedule. Plants then implement only the first 24 hours of decisions and repeat the procedure daily after updating their boundary inventories, operating modes, and mode-switch histories. Hyperparameter selection and other implementation details of Algorithm~\ref{alg:icc-admm-masked} are discussed in~\ref{app:admm-implement-details}.

\begin{table}[htbp]
\centering
\caption{Cost comparison and revenue sharing results (costs in k\$)}
\label{tab:cost_comparison_kusd_transposed}

\setlength{\tabcolsep}{6pt}
\renewcommand{\arraystretch}{1.15}

\begin{tabular}{lrrr}
\toprule
 & \textbf{ASU1} & \textbf{ASU2} & \textbf{ASU3} \\
\midrule
\makecell{SQ Cost (k\$)}                  & 171.44 & 111.30 & 339.83 \\
\makecell{UB Cost (k\$)}                     & 158.22 & 225.30 & 115.99 \\
\makecell{Cost reduction (\%, SQ$\to$UB)} & 7.71 \%   & -102.43\% & 65.87\%   \\
\makecell{Payment (+pay / -receive) (k\$)}   & -27.80 & -155.02& 182.82 \\
\makecell{Cost after revenue sharing (k\$)}  & 130.42 & 70.28  & 298.81 \\
\makecell{Cost reduction (\%, from SQ)\\after revenue sharing} & 23.93 \%  & 36.86  \% & 12.07 \%   \\
\bottomrule
\end{tabular}
\end{table}

\begin{table}[htbp]
\centering
\small
\setlength{\tabcolsep}{6pt}
\renewcommand{\arraystretch}{1.2}
\caption{Aggregated costs over the simulation horizon, summed over all days, for each plant. 
Op. cost and ship. cost denote operational cost and shipping cost, respectively. 
All costs are reported in k\$.}
\label{tab:costs-sq-ub}
\begin{tabular}{lrrrrrr}
\toprule
& \multicolumn{2}{c}{\textbf{Status quo}}
& \multicolumn{2}{c}{\textbf{UB derived}}
& \multicolumn{2}{c}{\textbf{\% reduction}} \\
\cmidrule(lr){2-3}\cmidrule(lr){4-5}\cmidrule(lr){6-7}
& \makecell{\textbf{Op}\\\textbf{cost}}
& \makecell{\textbf{Ship}\\\textbf{cost}}
& \makecell{\textbf{Op}\\\textbf{cost}}
& \makecell{\textbf{Ship}\\\textbf{cost}}
& \makecell{\textbf{Op}\\\textbf{cost}}
& \makecell{\textbf{Ship}\\\textbf{cost}} \\
\midrule
\textbf{ASU1} & 97.02  & 74.42 & 88.74  & 69.47 & 8.53 \%     & 6.65 \%    \\
\textbf{ASU2} & 56.11  & 55.19 & 133.07 & 92.23 & -137.17 \%  & -67.11 \%  \\
\textbf{ASU3} & 277.89 & 61.94 & 85.20  & 30.79 & 69.34 \%    & 50.30 \%  \\
\midrule
\textbf{Network} & 431.02 & 191.55 & 307.02 & 192.49 & 28.77 \%  & -0.49 \%  \\
\bottomrule
\end{tabular}
\end{table}

Table~\ref{tab:cost_comparison_kusd_transposed} reports plant costs over the simulation (costs in k\$), comparing the status quo (SQ) against the UB-derived coordinated schedule, and the resulting costs after revenue sharing. Coordination redistributes operational burden across plants: $\mathrm{ASU}_3$, which faces the highest electricity prices, achieves a large cost decrease (65.87\% from SQ to UB), indicating that a substantial fraction of its production is shifted to other plants. In contrast, $\mathrm{ASU}_2$ experiences a cost increase under the UB schedule (SQ$\to$UB reduction of $-102.43\%$), consistent with $\mathrm{ASU}_2$ being underutilized in the engineered status quo and therefore absorbing additional production and shipment obligations under coordination. $\mathrm{ASU}_1$, which has the largest capacity and already serves high baseline demand by construction, exhibits only a modest decrease (7.71\%), reflecting limited remaining flexibility.

The payment row quantifies transfers under the proposed revenue-sharing mechanism. Despite heterogeneous raw impacts under the UB schedule, the post-sharing costs show that each plant improves relative to its status quo baseline (23.93\%, 36.86\%, and 12.07\% reductions for $\mathrm{ASU}_1$, $\mathrm{ASU}_2$, and $\mathrm{ASU}_3$, respectively), consistent with the individual rationality guarantees discussed in the game-theoretic analysis.

Table~\ref{tab:costs-sq-ub} reports the cumulative operating and shipping costs for each plant over the full simulation horizon, comparing the status-quo policy against the coordinated solution. The plant-level cost shifts observed over the full horizon are consistent with the allocation patterns induced by the best-performing upper-bound heuristics on individual daily instances. In particular, the operating cost decreases substantially for $\mathrm{ASU}_3$ by 69.34\%, while it increases for $\mathrm{ASU}_2$ by 137.17\%. This indicates a coordinated production reallocation away from the high-electricity-price plant and toward the more underutilized plant. For $\mathrm{ASU}_1$, both operating and shipping costs decrease modestly, by 8.53\% and 6.65\%, respectively, suggesting incremental efficiency gains rather than a major structural shift in production allocation.

At the network level, coordination reduces total operating cost by 28.77\%, while total shipping cost is essentially unchanged in this instance (-0.49\%). More broadly, coordination can increase network shipping cost when electricity-price-driven production arbitrage leads to longer delivery routes, trading higher transportation expenditure against lower production cost. The decomposition in Table~\ref{tab:costs-sq-ub} makes this trade-off explicit.

\subsection{ICC-ADMM algorithm performance results}
\label{sec:admm-performance}

Since the benchmarking results in Section~\ref{sec:sim-benchmark} are evaluated along closed-loop rolling-horizon trajectories, they reflect the cumulative effect of repeatedly implementing only the first 24 hours of schedules generated by week-ahead optimization problems. This trajectory-level comparison does not, by itself, isolate the quality of the feasible upper-bound schedules produced by the Algorithm ~\ref{alg:icc-admm-masked} on each daily optimization instance. We therefore separately evaluate the day-wise performance of the upper-bound recovery procedure. For each simulation day, we compare the ADMM-derived feasible week-ahead schedule with a centralized week-ahead optimum solved using the same initial plant states and forecasts. This state-matched comparison quantifies the optimality gap of the ADMM-derived feasible schedule relative to the centralized optimum for the same daily optimization instance. Specifically, at the beginning of each day, we fix the initial plant states to those realized by the ADMM-UB trajectory and solve the corresponding centralized week-ahead social-welfare problem under the same forecasts. The resulting benchmark is used only for day-wise UB gap calculations and is distinct from the centralized social-welfare rolling-horizon trajectory shown in Figure~\ref{fig:dayonecostcomp}.

\begin{table}[htbp]
\centering
\caption{Statistics of the day-wise state-matched upper-bound (UB) gap. For each simulation day, the ADMM-derived feasible UB cost is compared with the centralized social-welfare optimum initialized from the same plant states realized by the ADMM-UB trajectory on that day. This benchmark is used only for day-wise gap evaluation and is distinct from the centralized social-welfare rolling-horizon trajectory shown in Figure~\ref{fig:dayonecostcomp}.}
\label{tab:ub_gap_stats}
\begin{tabular}{r r r r}
\hline
Max gap (\%) & Min gap (\%) & Avg gap (\%) & Std. dev. of gap (p.p.) \\
\hline
19.33 & 4.55 & 8.87 & 3.63 \\
\hline
\end{tabular}
\end{table}

Table~\ref{tab:ub_gap_stats} summarizes the resulting day-wise state-matched UB gap statistics. Over the 31-day simulation, the gap ranges from $4.55\%$ to $19.33\%$, with an average of $8.87\%$ and a standard deviation of $3.63\%$ percentage points. These values indicate moderate day-to-day variability in the quality of the ADMM-derived feasible schedules, driven by changes in ADMM-induced plant states, demand forecasts, and electricity-price forecasts across the rolling-horizon simulation. The full day-wise gap trajectory is provided in~\ref{app:app-state-matched-ub-gap} (Figure~\ref{fig:trueubgap}).

We highlight that the average day-wise state-matched gap is larger than the full-month unrealized savings of $3.08\%$ because the two quantities evaluate different notions of performance. The state-matched gap measures the suboptimality of the ADMM-derived feasible week-ahead schedule for each daily optimization instance, relative to a centralized optimum initialized from the same ADMM-induced plant states. In contrast, the full-month unrealized savings are evaluated along the closed-loop rolling-horizon trajectory, where only the first 24 hours of each week-ahead schedule are implemented before the problem is re-solved with updated plant states and revised forecasts. As a result, suboptimality in later-horizon decisions is not fully realized operationally, since those decisions are replaced by subsequent rolling-horizon optimizations. This receding-horizon feedback effect mitigates the impact of day-wise ADMM suboptimality on the accumulated monthly cost.

\begin{figure}[htpb]
  \centering
  \includegraphics[width=0.9\linewidth]{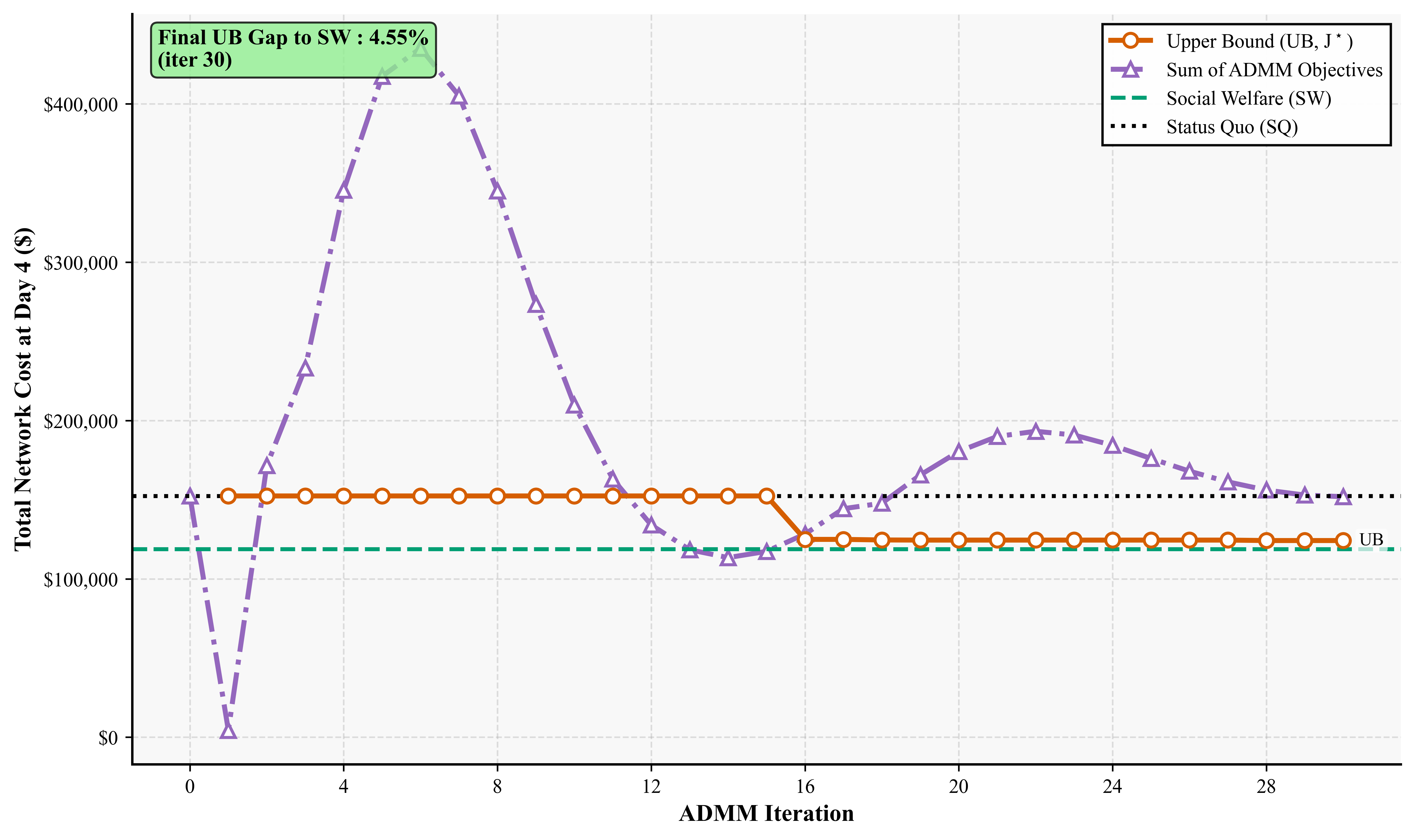}
  \caption{Exemplar ICC-ADMM convergence using Algorithm~\ref{alg:icc-admm-masked} for the network on day 4.}
  \label{fig:dayfourconvg}
\end{figure}

The smallest observed state-matched gap is $4.55\%$, occurring on day~4 with ADMM terminated after 30 iterations. Figure~\ref{fig:dayfourconvg} shows representative convergence behavior of Algorithm~\ref{alg:icc-admm-masked} on this day. The ICC begins to observe nontrivial feasible upper bounds around iteration~16, after the ADMM iterates have stabilized sufficiently for the UB recovery procedure to construct feasible schedules across all plants. The evolution of UB values across heuristic candidates is reported in~\ref{app:app-day4-heuristic-ub-evolution} (Figure~\ref{fig:iccobservation}). As discussed in Subsection~\ref{subsec:icc-admm-algo-descp}, the ICC maintains a portfolio of heuristics used to construct candidate schedules for the plants. A detailed comparison of their performance over the full simulation is provided in~\ref{app:app-heur-performance}.

In the next section, we characterize the leakage of plant-level private information under a worst-case collusion model, where all but one of the (\(|\mathcal{P}|)\) agents in the network are assumed to collude.

\section{Adversarial attack under worst-case collusion}
\label{sec:adv_model}

\begin{figure}[t]
  \centering
  \includegraphics[width=0.9\linewidth]{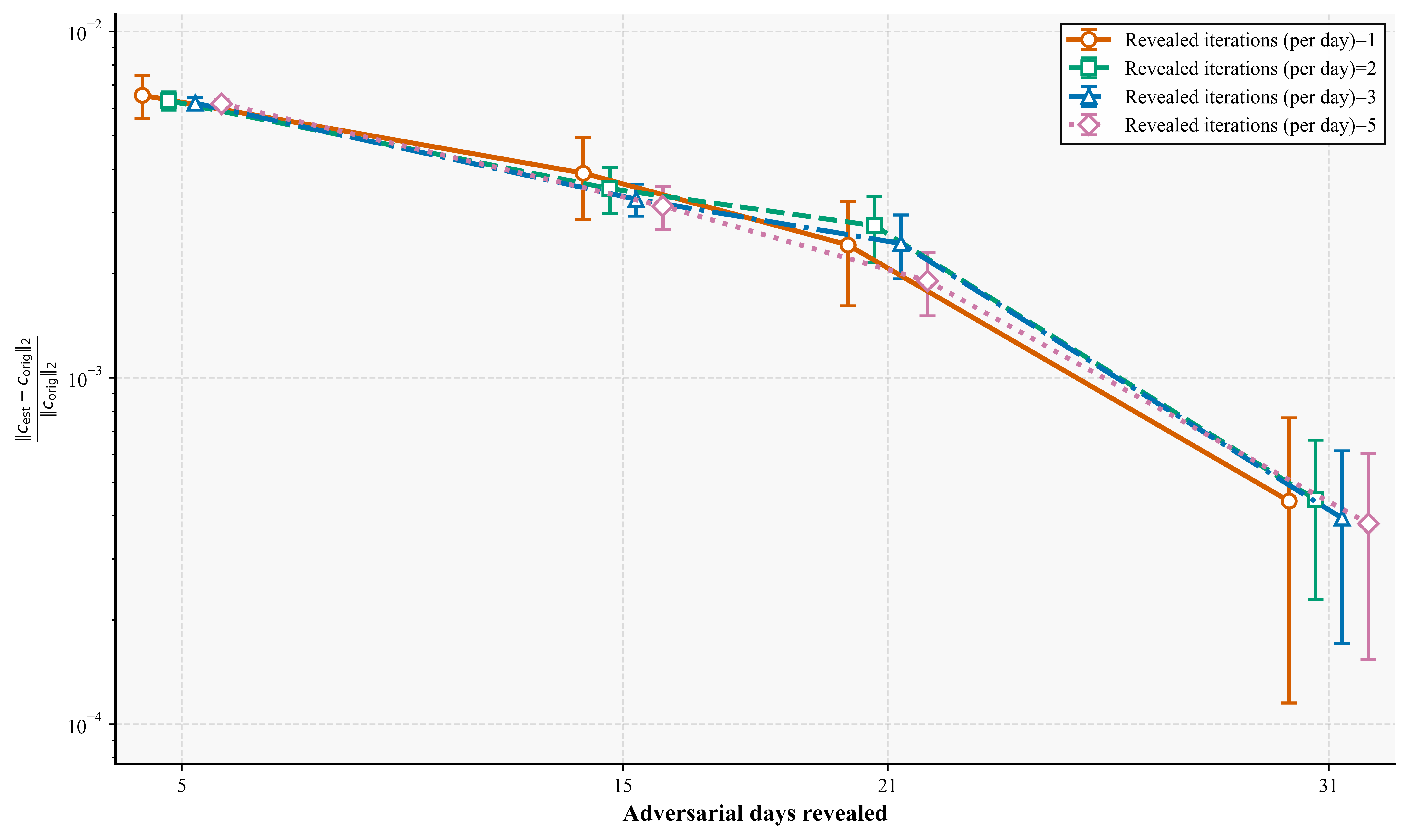}
    \caption{Relative reconstruction error of the adversarial ERM-based recovery attack under randomized ADMM-iterate leakage for $\mathrm{ASU}_1$. For each adversarial-day setting and leakage level, 100 randomized leakage realizations were collected by sampling revealed iterations from the last ten ADMM iterations of each day. Markers denote the sample mean of the relative coefficient-reconstruction error over 100 feasible randomized leakage realizations, and vertical bars denote one sample standard deviation across these realizations. Markers and error bars corresponding to different leakage levels are slightly horizontally offset around each adversarial-day value for visual clarity. The y-axis is shown on a logarithmic scale.}
  \label{fig:advcoeffnorm}
\end{figure}

Algorithm~\ref{alg:icc-admm-masked} protects plant-level messages by neighbor masking: plants transmit masked quantities
$\bigl(\widetilde z_p^{k+1},\,\widetilde{\overline z}_p^{k},\,\widetilde u_p^{k+1}\bigr)$ to the ICC, where masks cancel only under aggregation. In the worst-case threat model, the ICC colludes with all other $|\mathcal{P}|-1$ plants. In this setting, the colluding parties can reconstruct the victim plant's masks and recover its unmasked iterates, revealing
$\bigl(z_p^{k+1},\,\overline z_p^{k},\,u_p^{k+1}\bigr)$ for each iteration and day. These iterates encode plant-specific primal and dual information. With modest side information, the adversary can then formulate a parameter-recovery problem to estimate private objective parameters, such as fixed and variable production costs, thereby exposing sensitive operating margins and process efficiencies.

We instantiate this attack by targeting $\mathrm{ASU}_1$ and demonstrate privacy failure under a worst-case regime with $|\mathcal{P}|-1$ colluding plants. The attack is constructed around the local ADMM solve in Algorithm~\ref{alg:local-plant-solve}, where plant $p$ computes the next local iterate $x_p^{k+1}$ at each ADMM iteration. The adversary observes the revealed coordination signals over the $K_{\max}$ ADMM iterations performed on each simulation day and seeks to infer private objective parameters of the target plant.

The lack of access to the target plant's complete feasible region and full operational trajectories makes the private cost parameters generally non-identifiable from ADMM iterates alone. We therefore pose the adversarial recovery task as a structured empirical risk minimization (ERM) problem. In general, ERM estimates unknown parameters by minimizing an empirical loss over observed samples~\citep[]{Dvorkin2020}. In the present setting, the samples are the leaked day--iteration ADMM transcripts, and the empirical loss penalizes disagreement between the revealed ADMM behavior and the behavior induced by candidate private cost parameters under the adversary's available approximation of the target plant's local optimization model. Thus, the adversary seeks cost coefficients and latent operational trajectories that make the observed ADMM iterates appear as consistent as possible with the revealed coordination signals and the assumed side information.

The assumed side information is introduced only to make this ERM-based adversarial recovery model identifiable and computationally tractable. The mathematical intuition is as follows: given the local iterate sequence implied by the revealed primal and dual coordination signals, the adversary fits objective parameters so that the induced plant-level trajectories match the revealed iterates as closely as possible. Prior works commonly consider two regimes: worst-case revealment, in which the adversary is endowed with extensive auxiliary knowledge, and minimal revealment, in which only limited information is available \citep[]{Dwork2014,Jayaraman2021Revisiting}. To better reflect realistic deployments, where complete side information is unlikely to be disclosed, we adopt the minimal-revealment regime and benchmark privacy under partial disclosure rather than an overly pessimistic worst-case setting. This choice yields a more challenging recovery task, as the adversary must reconstruct hidden plant-level quantities from the smallest set of revealed signals.

To make the ERM-based recovery task identifiable and computationally tractable, we assume the adversary has access to the following side information:
\begin{enumerate}[label=\roman*.]
    \item \textbf{Mode schedule:} the binary mode decisions at each hour in the plant are revealed (e.g., \textsc{Off}, \textsc{Startup}, \textsc{Production}), reducing the adversary's recovery task to a continuous program on each day and ADMM iteration.
    \item \textbf{Boundary inventories:} the initial inventory state of the plant is revealed at the beginning of each simulation day.
    \item \textbf{Bounds on hourly production capacity:} the adversary knows lower and upper bounds on hourly production quantities for each product.
    \item \textbf{Cost ratios:} the adversary knows the ratio between variable-cost and fixed-cost components, as well as the relative proportions of fixed costs across operating modes.
\end{enumerate}

The cost-ratio assumptions are required for identifiability. The revealed iterates are primarily informative about continuous production, and hence variable-cost parameters, whereas fixed-cost terms are only weakly coupled through mode selection. Without additional structural constraints, such as known cost ratios, the ERM formulation can admit degenerate solutions that under-estimate fixed-cost components.

We now clarify which plant-level quantities involved in the local solve remain hidden from the adversary. The adversary does not observe product-wise intra-day production trajectories or intra-day inventory trajectories; it sees only day-boundary inventory levels. The adversary also does not learn the plant's ADMM objective values or any cost breakdowns, with only the ratio information revealed as discussed above. The mathematical details of the adversarial ERM model used for the attack are provided in~\ref{app:app-adv-model}.

\paragraph{Adversarial attack setup}
Adversarial analyses of distributed optimization algorithms often assume that only a limited number of intermediate iterates are leaked from each solve~\citep[]{Dvorkin2020}. Accordingly, we evaluate the ERM-based adversarial recovery model in~\eqref{eq:adv_qp} under randomized leakage of up to five ADMM iterations per simulation day. The adversary is assumed to solve the ERM model after accumulating leaked data over progressively longer exposure windows,
\[
D \in \{5,15,21,31\},
\]
corresponding to short, intermediate, three-week, and approximately month-long adversarial observation horizons. These horizons are chosen to evaluate how reconstruction fidelity evolves as the adversary accumulates information across increasingly diverse day-level operating conditions, rather than to resolve every incremental day. This is particularly relevant because consecutive within-day ADMM iterates are strongly correlated, whereas additional days expose the ERM model to different initial states, forecasts, and operating regimes.

For a given number $D$ of adversarial days, let $\mathcal D_D=\{1,\dots,D\}$. On each day $d\in\mathcal D_D$, the adversary observes a randomly sampled subset $\mathcal A_j^d$ of ADMM iterations from the final ten iterations of that day's coordination transcript, with
\[
|\mathcal A_j^d|=K,
\qquad
\mathcal A_j^d \subseteq \{K_{\max}-9,\dots,K_{\max}\},
\qquad
K\in\{1,2,3,5\}.
\]
Here, $j$ indexes a randomized leakage realization; that is, for each fixed $(D,K)$, the realization $j$ corresponds to one sampled collection
\[
\mathfrak A_{D,K}^{(j)}=\{\mathcal A_j^d\}_{d\in\mathcal D_D}.
\]
This randomized leakage model captures uncertainty in which communication rounds are exposed.

Thus, the experiments vary two disclosure dimensions: the length of the adversarial observation window, measured by the number of attacked days $D$, and the leakage intensity within each day, measured by the number of revealed ADMM iterates $K$. For each $(D,K)$ setting, we collect 100 feasible randomized leakage realizations, indexed by $j=1,\dots,100$. For each realization $j$, the adversary solves the ERM-based recovery model using the revealed iterates indexed by $\mathfrak A_{D,K}^{(j)}$ and returns an estimated coefficient vector $c_{\mathrm{est}}^{(j)}$. We then compute the relative reconstruction error
\[
e_{D,K}^{(j)}
=
\frac{\|c_{\mathrm{est}}^{(j)}-c_{\mathrm{orig}}\|_2}
{\|c_{\mathrm{orig}}\|_2},
\]
where $c_{\mathrm{orig}}$ is the true private objective coefficient vector of $\mathrm{ASU}_1$. 

\paragraph{Attack outcome and interpretation}
Figure~\ref{fig:advcoeffnorm} reports the mean relative reconstruction error $e_{D,K}^{(j)}$ over 100 feasible randomized leakage realizations, with error bars denoting one sample standard deviation. The error is shown as a function of the number of attacked days $D$ and the number of leaked ADMM iterates per day $K$; lower values indicate more accurate reconstruction of the target plant's private objective coefficients. Two observations are salient. First, the number of revealed iterations within a fixed day does not translate linearly into improved identifiability: even when more iterates are disclosed per day, the additional samples are strongly correlated because they arise from the same day-level instance and rolling-horizon state. Second, meaningful improvement is driven primarily by revealing iterates across different days, which correspond to different initial states and forecast realizations and therefore provide more informative excitation for the ERM-based recovery model.

Across all disclosure rates considered, from one to five iterations revealed per day, the adversary requires approximately one month of day-level exposure to reconstruct the private objective coefficients with high accuracy. This indicates that repeated within-day disclosures can be partially redundant, whereas disclosures spanning distinct operating conditions accumulate substantially more information. These results are consistent with the protection limits of neighbor masking: it protects against non-colluding coordinators and limited collusion, but not against the extreme all-but-one collusion regime studied here. Mitigating this residual risk may require mechanisms beyond exact secure aggregation, such as differential privacy (DP), which adds calibrated noise to ADMM messages to limit the informativeness of revealed iterates \citep{Dwork2014}. However, this would introduce a privacy--utility trade-off, since noise may affect convergence, feasibility recovery, and solution quality. Developing DP mechanisms that preserve operational reliability while providing formal privacy guarantees is left for future work.

\section{Conclusion}
\label{sec:conclusion}

We studied coordinated demand response in a multi-plant industrial gas network, where plants jointly satisfy aggregate customer demand while co-optimizing production and shipment decisions under time-varying electricity prices. The setting creates opportunities for system-level cost reduction, but also introduces practical challenges: plants cannot disclose local scheduling models or proprietary operational data, distributed coordination messages may leak private information, and participants require incentives to deviate from their decentralized status-quo schedules.

To address these challenges, we developed a privacy-aware ICC--ADMM coordination framework that combines coordinator-based distributed optimization with secure aggregation based on Diffie--Hellman neighbor masking. The masking scheme enables the ICC to recover only the aggregate quantities required for ADMM updates and upper-bound construction, while concealing individual plant-level messages. Since convergence guarantees for ADMM do not generally hold for the resulting nonconvex mixed-integer subproblems, we use a portfolio of secure-aggregation-compatible upper-bound heuristics to construct feasible network-wide incumbent schedules. We also incorporated a two-phase revenue-sharing mechanism to redistribute realized savings and ensure that each plant improves relative to its status quo.

In a 31-day rolling-horizon simulation of a three-plant industrial gas network, the coordinated policy reduced total network cost by $19.77\%$ relative to decentralized operation and achieved a full-month cost within $3.08\%$ of the centralized social-welfare benchmark. The day-wise state-matched UB gap ranged from $4.55\%$ to $19.33\%$, with an average of $8.87\%$, indicating that the heuristic recovery procedure produced feasible and competitive schedules under rolling-horizon execution. Revenue sharing further ensured net cost reductions for all plants.

Finally, we quantified a residual privacy limitation under a conservative worst-case collusion regime. Neighbor masking protects against an honest-but-curious coordinator, external eavesdroppers, and collusion as long as at least two plants remain non-colluding. However, if all but one plant collude, the remaining plant's unmasked iterates can be reconstructed and used, together with side information, to infer private objective parameters through an empirical data-fitting recovery model. Future work will (i) develop more systematic and less instance-dependent penalty-parameter selection, restart strategies, and stopping rules for nonconvex ICC--ADMM, (ii) broaden secure-aggregation-compatible upper-bound recovery heuristics and explore learning-based heuristic selection, (iii) strengthen privacy under extreme collusion through differential privacy, and (iv) evaluate scalability and robustness in larger industrial networks.

\newpage
\appendix

\section{Homomorphic Encryption and Secret Sharing for Secure Aggregation}
\label{app:app-sec-methods}

This appendix summarizes two classical alternatives to neighbor masking for securely computing the aggregate coordination message required by the ICC. Consistent with the notation in Section~\ref{sec:background}, let
\[
d_p := (d_p^1,\dots,d_p^T)\in\mathbb{R}^T
\]
denote the demand-related coordination message sent by plant $p\in\mathcal{P}$ at a given ADMM iteration. The coordinator does not require the individual plant-level vectors $d_p$; it only needs their aggregate
\begin{equation}
D \;=\; \sum_{p\in\mathcal{P}} d_p \in \mathbb{R}^T .
\label{eq:app_Dsum}
\end{equation}
The goal of secure aggregation is therefore to allow the ICC to recover $D$ exactly, while preventing direct observation of any individual plant message $d_p$.

We focus on two standard cryptographic alternatives to neighbor masking: (i) additive homomorphic encryption via the Paillier cryptosystem and (ii) Shamir secret sharing. These mechanisms differ in their computational cost, communication structure, trust assumptions, and robustness to collusion, but they share the same aggregation objective: the ICC should learn only the aggregate coordination signal \(D\), not the individual plant-level contributions \(d_p\). Throughout this appendix, the aggregation is understood componentwise over the scheduling horizon \(t=1,\dots,T\).

\subsection{Paillier Cryptosystem for Additively Homomorphic Aggregation}
\label{app:app-pallier}

\paragraph{Idea}
Paillier is a public-key cryptosystem that supports additive homomorphism: ciphertexts can be combined so that, after decryption, the plaintext equals the sum of the original plaintexts \citep{Paillier1999}. This property allows the coordinator to compute an encrypted aggregate from encrypted plant messages without observing any individual plaintext load.

\paragraph{Keys and notation}
Let $(\mathsf{pk},\mathsf{sk})$ denote the public and private keys. We write $\mathsf{Enc}_{\mathsf{pk}}(\cdot)$ for encryption and $\mathsf{Dec}_{\mathsf{sk}}(\cdot)$ for decryption. Paillier operates over integers modulo a large composite $n$ (details in \citep{Paillier1999}). Since our values are real-valued (for example, ton or kg), plants typically use a fixed-point encoding, e.g.,
\[
\tilde{d}_p^t = \lfloor s \, d_p^t \rceil \in \mathbb{Z},
\]
for a scaling factor $s>0$ (and optionally a signed encoding). After decryption, the recovered integer sum is rescaled by $1/s$.

It is useful to distinguish the coordinator from the trusted dealer. The coordinator is the entity that receives encrypted messages and combines them homomorphically. The trusted dealer, when assumed, is a separate setup entity whose role is only to generate keys and distribute the required cryptographic material. In the simplest deployment, the trusted dealer may also retain the decryption capability; in threshold variants, it instead distributes shares of the private key and then exits the protocol.

\paragraph{Protocol for secure aggregation of $D$}
For each time index $t\in\{1,\dots,T\}$ (or vectorized across $t$), the protocol proceeds as follows:
\begin{enumerate}[label=\Roman*.]
    \item \textbf{Key setup:} A trusted dealer generates the Paillier key pair $(\mathsf{pk},\mathsf{sk})$. The public key $\mathsf{pk}$ is distributed to all plants for encryption. The decryption capability associated with $\mathsf{sk}$ is either retained by a designated trusted decryptor or split into shares across multiple authorized parties for threshold decryption.

    \item \textbf{Encryption at plants:} Each plant $p$ encrypts its encoded value $\tilde{d}_p^t$ using the public key and sends
    \[
    c_p^t \;=\; \mathsf{Enc}_{\mathsf{pk}}(\tilde{d}_p^t)
    \]
    to the coordinator.

    \item \textbf{Homomorphic aggregation at the coordinator:} The coordinator multiplies ciphertexts using Paillier's group operation to obtain
    \begin{equation}
    c_{\mathrm{agg}}^t \;=\; \prod_{p\in\mathcal{P}} c_p^t,
    \label{eq:app_paillier_agg}
    \end{equation}
    which, by additive homomorphism, is an encryption of the sum:
    \[
    c_{\mathrm{agg}}^t \;=\; \mathsf{Enc}_{\mathsf{pk}}\!\Big(\sum_{p\in\mathcal{P}} \tilde{d}_p^t \Big).
    \]
    \item \textbf{Recovery of the aggregate:}
    \begin{itemize}
        \item \emph{Single decryptor setting:} If a trusted decryptor holds the full private key $\mathsf{sk}$, then the coordinator sends $c_{\mathrm{agg}}^t$ to that party, which computes
        \[
        \mathsf{Dec}_{\mathsf{sk}}(c_{\mathrm{agg}}^t)
        \;=\;
        \sum_{p\in\mathcal{P}} \tilde{d}_p^t,
        \]
        and the result is then rescaled to obtain $D^t$.

        \item \emph{Threshold setting:} If the private key is split across multiple authorized parties, then no single party can decrypt on its own. Instead, each authorized party $j$ uses its private-key share $\mathsf{sk}_j$ to compute a \emph{partial decryption} of $c_{\mathrm{agg}}^t$, which we denote abstractly by
        \[
        q_j^t \;=\; \mathsf{PDec}_{\mathsf{sk}_j}(c_{\mathrm{agg}}^t).
        \]
        Once a prescribed number of such partial decryptions are collected, they are combined to recover the aggregate plaintext:
        \[
        \sum_{p\in\mathcal{P}} \tilde{d}_p^t
        \;=\;
        \mathsf{Combine}(q_1^t,\dots,q_m^t),
        \]
        where $m$ satisfies the threshold requirement of the scheme. The recovered integer sum is then rescaled to obtain $D^t$.
    \end{itemize}
\end{enumerate}

\paragraph{Deployment variants and practical remarks}
The simplest Paillier deployment assumes that a trusted dealer generates the key pair and that a trusted decryptor holds the full private key. This makes the protocol easy to describe and implement, but it concentrates decryption trust in a single entity. For one aggregation of a scalar quantity, the online communication consists of $|\mathcal{P}|$ plant-to-coordinator ciphertext uploads, followed by one coordinator-to-decryptor transmission carrying the aggregate ciphertext, and one decryptor-to-coordinator return message carrying the decrypted aggregate. Thus, ignoring one-time key setup, the online phase requires $|\mathcal{P}|+2$ transmissions.

A more decentralized alternative is threshold Paillier \citep{pallier_threshold_2023}, in which the private key is split across multiple plants in the network. Let $\tau$ denote the number of partial decryptions required to recover the aggregate plaintext. In that case, the coordinator still receives $|\mathcal{P}|$ encrypted messages from the plants and aggregates them exactly as before, but decryption now requires an additional interaction round: the coordinator sends the aggregate ciphertext to the $\tau$ authorized parties, and receives $\tau$ partial decryptions in return. Hence, again ignoring one-time setup, the online phase requires $|\mathcal{P}| + 2\tau$ transmissions. Threshold decryption therefore improves trust distribution, but it also introduces extra communication and synchronization overhead relative to the single-decryptor setting.

\paragraph{Security intuition}
Under standard security assumptions, the ciphertexts $c_p^t$ do not reveal the encoded values $\tilde{d}_p^t$ to the coordinator or to an external observer without decryption capability. The coordinator can nevertheless compute the encrypted aggregate through \eqref{eq:app_paillier_agg}. In the single-decryptor setting, only the trusted decryptor can reveal the aggregate plaintext. In the threshold setting, the coordinator learns the aggregate only after collecting sufficiently many partial decryptions from authorized parties, while no single party can decrypt alone. Hence, Paillier enables recovery of the aggregate without exposing any individual plant's plaintext message to the coordinator.

\paragraph{Technical remark}
The main cryptographic details of Paillier encryption, ciphertext aggregation, and threshold decryption are considerably more technical than the high-level description above. The purpose of this appendix section is only to clarify the message flow and the location of decryption capability in the secure aggregation setting considered in this work. Additional technical details can be found in \citet{Liu2024}.

\subsection{Shamir Secret Sharing for Secure Aggregation}
\label{app:app-shamir}

\paragraph{Idea}
Shamir secret sharing splits a secret into multiple shares such that any subset of at least $k$ shares can reconstruct the secret, while fewer than $k$ reveal no information \citep{Shamir1979}. For aggregation, the key advantage is \emph{linearity}: shares of individual secrets can be added to obtain shares of the sum.

In the coordinator-mediated workflow considered here, secure aggregation is implemented in two stages: (i) one round in which plants send recipient-labeled shares to the coordinator, which then forwards each share to its intended plant, and (ii) one uplink of a single aggregate share from each plant to the coordinator for final reconstruction.

\paragraph{Setup}
Fix the reconstruction threshold as $k=|\mathcal{P}|$ and choose distinct nonzero public evaluation points $\alpha_1,\dots,\alpha_{|\mathcal{P}|}$ in a finite field $\mathbb{F}$ of sufficiently large size.\footnote{As with Paillier, real values can be fixed-point encoded into field elements.} Each plant $p$ is associated with one point $\alpha_p$.

This strongest-threshold choice is consistent with the coordinator-mediated workflow used here: each plant retains its own self-share locally and sends only the remaining $|\mathcal{P}|-1$ recipient-labeled shares to the coordinator for forwarding. As a result, the coordinator never observes all shares of any one plant's polynomial and therefore cannot reconstruct an individual secret during the forwarding stage.

We describe the protocol for a single scalar (e.g., a single hour $t$); it is applied independently for each $t=1,\dots,T$ to aggregate a vector.

\paragraph{Protocol for secure aggregation of $D^t$}
Let $\tilde{d}_p^t\in\mathbb{F}$ denote plant $p$'s encoded secret at hour $t$.
\begin{enumerate}[label=\Roman*.]
    \item \textbf{Local polynomial construction:} Each plant $p$ samples a random polynomial of degree $k-1$:
    \begin{equation}
    f_p(x) \;=\; \tilde{d}_p^t \;+\; a_{p,1}x \;+\; a_{p,2}x^2 \;+\; \cdots \;+\; a_{p,k-1}x^{k-1},
    \label{eq:app_shamir_poly}
    \end{equation}
    where coefficients $a_{p,1},\dots,a_{p,k-1}\in\mathbb{F}$ are chosen uniformly at random. The constant term is the secret $\tilde{d}_p^t$.

    \item \textbf{Coordinator-mediated share redistribution:} Plant $p$ evaluates its polynomial at all public points $\alpha_r$, $r\in\mathcal{P}$. It retains its own local share
    \[
    s_{p\to p} \;=\; f_p(\alpha_p),
    \]
    and sends each remaining share
    \[
    s_{p\to r} \;=\; f_p(\alpha_r), \qquad r\neq p,
    \]
    to the coordinator together with the recipient label $r$. The coordinator then forwards each received share to its intended plant.
    
    \item \textbf{Local share summation (forming a share of the global sum):}
    After receiving forwarded shares $\{s_{p\to r}\}_{p\neq r}$ and combining them with its retained self-share $s_{r\to r}$, plant $r$ computes
    \begin{equation}
    S_r \;=\; \sum_{p\in\mathcal{P}} s_{p\to r}
    \;=\; \sum_{p\in\mathcal{P}} f_p(\alpha_r).
    \label{eq:app_sumshare}
    \end{equation}
    Define the \emph{sum polynomial} $F(x) := \sum_{p\in\mathcal{P}} f_p(x)$. Then \eqref{eq:app_sumshare} is exactly
    \[
    S_r \;=\; F(\alpha_r),
    \]
    so each plant now holds one Shamir share of the aggregate.

    \item \textbf{Upload to coordinator and reconstruction:}
    Each plant $r$ sends its single value $S_r$ to the coordinator. Since $k=|\mathcal{P}|$, the coordinator reconstructs the aggregate only after collecting all values $\{(\alpha_r,S_r)\}_{r\in\mathcal{P}}$. It then interpolates $F(\cdot)$ and recovers the constant term:
    \begin{equation}
    F(0) \;=\; \sum_{p\in\mathcal{P}} \tilde{d}_p^t,
    \label{eq:app_recover_sum}
    \end{equation}
    which is the desired aggregate for hour $t$. Repeating this procedure for $t=1,\dots,T$ yields the full vector $D$.
\end{enumerate}

\paragraph{Qualitative and practical remarks}
After the forwarding stage, each plant already possesses one point $(\alpha_r,S_r)$ on the sum polynomial $F(\cdot)$, and the coordinator can recover $F(0)$ once it has collected all $|\mathcal{P}|$ aggregate shares. Thus, plants need only upload one value $S_r$ each in the second stage. In regard to implementation, the first stage requires $|\mathcal{P}|(|\mathcal{P}|-1)$ plant-to-coordinator transmissions and $|\mathcal{P}|(|\mathcal{P}|-1)$ coordinator-to-plant forwarding transmissions, followed by $|\mathcal{P}|$ uplink messages carrying the aggregate shares $\{S_r\}$. This communication overhead can be substantial when secure aggregation is invoked repeatedly inside an iterative distributed optimization scheme.

\paragraph{Security intuition}
For any fixed plant $p$, fewer than $k$ shares of the polynomial $f_p(\cdot)$ reveal no information about the secret $\tilde{d}_p^t$ because the remaining coefficients in \eqref{eq:app_shamir_poly} are random \citep{Shamir1979}. In the coordinator-mediated implementation above, the coordinator observes only the $|\mathcal{P}|-1$ forwarded shares of each individual polynomial and not the locally retained self-share. Since $k=|\mathcal{P}|$, the coordinator does not have enough shares to reconstruct any individual plant's secret during the forwarding stage. By linearity, the uploaded values $\{S_r\}$ are instead shares of the sum polynomial $F(\cdot)$, whose constant term is the aggregate. Thus, the coordinator can reconstruct only the aggregate once it gathers all aggregate shares, while the individual values $\tilde{d}_p^t$ remain hidden.

\section{ADMM Subproblems and Residuals}
\label{app:app-admm-subproblems}

This appendix provides the explicit plant and coordinator subproblems corresponding to the ADMM iterations in subsection~\ref{subsec:distributed-secure-agg} from Equation \eqref{eq:plant-x-update}--\eqref{eq:nu-update-icc}, along with the residual expressions used for stopping and hyper-parameter tuning.

\subsection{Scaled augmented Lagrangian and ADMM updates}
\label{app:scaled-al-admm}

Recall the scaled augmented Lagrangian \eqref{eq:icc-al}:
\[
\mathcal{L}_\rho\bigl(\{x_p\},\{u_p\},\{\lambda_p\},\nu\bigr)
=
\sum_{p\in\mathcal{P}}
\Bigl[
f_p(x_p)
+\tfrac{\rho}{2}\,\bigl\|P^{u}_{p}x_p - u_p + \lambda_p\bigr\|_2^2
\Bigr]
+\tfrac{\rho}{2}\,
\Bigl\|\textstyle\sum_{p\in\mathcal{P}}u_p - D + \nu\Bigr\|_2^2 
\]

Given iterates $\{u_p^{k}\}$, $\{\lambda_p^{k}\}$, and $\nu^{k}$, the vanilla ADMM iterations (for $k=0,1,2,\dots$) are:

\paragraph{Plant $x$--update}
Each plant $p$ solves
\begin{equation}
\begin{aligned}
x_p^{k+1}
\;:=\;
\arg\min_{x_p\in\mathcal{F}_p}
\Bigl\{
f_p(x_p)
+\tfrac{\rho}{2}\,\bigl\|P^{u}_{p}x_p - u_p^{k} + \lambda_p^{k}\bigr\|_2^2
\Bigr\},
\qquad \forall p\in\mathcal{P}.
\end{aligned}
\label{eq:app-x-update}
\end{equation}
In our setting, \eqref{eq:app-x-update} is a mixed-integer quadratic program (MIQP) due to the plant scheduling binaries in $x_p$ and the quadratic penalty term. It is solved locally by each plant using an off-the-shelf MIP solver.

\paragraph{Coordinator $u$--update}
The ICC updates the shipment copies $\{u_p\}$ by solving
\begin{equation}
\begin{aligned}
\{u_p^{k+1}\}_{p\in\mathcal{P}}
\;:=\;
\arg\min_{\{u_p\}}
\Bigl\{
\tfrac{\rho}{2}\sum_{p\in\mathcal{P}}
\bigl\|P^{u}_{p}x_p^{k+1}-u_p+\lambda_p^{k}\bigr\|_2^2
+\tfrac{\rho}{2}\Bigl\|\sum_{p\in\mathcal{P}}u_p - D + \nu^{k}\Bigr\|_2^2
\Bigr\}
\end{aligned}
\label{eq:app-u-update}
\end{equation}

\paragraph{Dual updates}
The scaled dual variables update as
\begin{equation}
\lambda_p^{k+1}
\;:=\;
\lambda_p^{k}
+\bigl(P^{u}_{p}x_p^{k+1}-u_p^{k+1}\bigr),
\qquad \forall p\in\mathcal{P},
\label{eq:app-lambda-update}
\end{equation}
and
\begin{equation}
\nu^{k+1}
\;:=\;
\nu^{k}
+\Bigl(\sum_{p\in\mathcal{P}}u_p^{k+1}-D\Bigr)
\label{eq:app-nu-update}
\end{equation}

\subsection{Closed-form coordinator update}
\label{app:closed-form-u}

The coordinator problem \eqref{eq:app-u-update} admits a closed-form solution. Define the shifted local shipments
\[
\bar{z}_p^{\,k}
:=P^{u}_{p}x_p^{k+1}+\lambda_p^{k},
\qquad
\bar{z}^{\,k}
:=\frac{1}{|\mathcal{P}|}\sum_{p\in\mathcal{P}}\bar{z}_p^{\,k},
\qquad
s^{k}:=D-\nu^{k}
\]
Then the minimizer of \eqref{eq:app-u-update} is given component-wise by
\begin{equation}
u_p^{k+1}
\;=\;
\bar{z}_p^{\,k}-\delta^{k},
\qquad
\delta^{k}
:=\frac{1}{|\mathcal{P}|+1}\Bigl(|\mathcal{P}|\,\bar{z}^{\,k}-s^{k}\Bigr),
\qquad \forall p\in\mathcal{P}.
\label{eq:app-u-closed}
\end{equation}

\begin{proof}
Let $\mathcal{P}=\{1,\dots,P\}$ with $P:=|\mathcal{P}|$, and let $u_p\in\mathbb{R}^n$ for each $p\in\mathcal{P}$. Consider the unconstrained convex quadratic problem
\begin{equation}
\min_{\{u_p\}_{p\in\mathcal{P}}}\;
\frac{\rho}{2}\sum_{p=1}^{P}\bigl\|\bar{z}_p^{\,k}-u_p\bigr\|_2^{2}
+\frac{\rho}{2}\Bigl\|\sum_{p=1}^{P}u_p-s^{k}\Bigr\|_2^{2}.
\label{eq:subprob}
\end{equation}
Since \eqref{eq:subprob} is unconstrained, the KKT conditions reduce to the first-order
stationarity conditions. Define $S:=\sum_{p=1}^P u_p$. The objective can be written as
\[
F(\{u_p\})=\frac{\rho}{2}\sum_{p=1}^{P}\|u_p-\bar{z}_p^{\,k}\|_2^2
+\frac{\rho}{2}\|S-s^k\|_2^2
\]
For any fixed $p\in\mathcal{P}$, differentiating with respect to $u_p$ yields
\[
\nabla_{u_p}F
=\rho\,(u_p-\bar{z}_p^{\,k})+\rho\,(S-s^k).
\]
Stationarity $\nabla_{u_p}F=0$ therefore gives, for all $p\in\mathcal{P}$,
\begin{equation}
u_p-\bar{z}_p^{\,k}+(S-s^k)=0
\qquad\Longleftrightarrow\qquad
u_p=\bar{z}_p^{\,k}-(S-s^k).
\label{eq:up_shift}
\end{equation}
Hence all optimal blocks share a common shift. Let
\begin{equation}
\delta := S-s^k.
\label{eq:delta_def}
\end{equation}
Then \eqref{eq:up_shift} becomes $u_p=\bar{z}_p^{\,k}-\delta$ for all $p\in\mathcal{P}$, and summing
over $p$ gives
\[
S=\sum_{p=1}^{P}u_p=\sum_{p=1}^{P}\bigl(\bar{z}_p^{\,k}-\delta\bigr)
=\sum_{p=1}^{P}\bar{z}_p^{\,k}-P\delta
\]
Using $S=s^k+\delta$ from \eqref{eq:delta_def}, we obtain
\[
s^k+\delta=\sum_{p=1}^{P}\bar{z}_p^{\,k}-P\delta
\qquad\Longrightarrow\qquad
(P+1)\delta=\sum_{p=1}^{P}\bar{z}_p^{\,k}-s^k,
\]
and therefore
\begin{equation}
\delta
=\frac{1}{P+1}\Bigl(\sum_{p=1}^{P}\bar{z}_p^{\,k}-s^k\Bigr)
=\frac{1}{P+1}\Bigl(P\,\bar{z}^{\,k}-s^k\Bigr),
\label{eq:delta_closed}
\end{equation}
where $\bar{z}^{\,k}:=\frac{1}{P}\sum_{p=1}^{P}\bar{z}_p^{\,k}$. Substituting \eqref{eq:delta_closed}
into $u_p=\bar{z}_p^{\,k}-\delta$ yields the closed-form minimizer
\[
u_p^{k+1}=\bar{z}_p^{\,k}-\delta^k,\qquad
\delta^k:=\frac{1}{|\mathcal{P}|+1}\Bigl(|\mathcal{P}|\,\bar{z}^{\,k}-s^k\Bigr),
\qquad \forall p\in\mathcal{P}
\]
Finally, the objective in \eqref{eq:subprob} is a strictly convex quadratic function of
$\{u_p\}_{p\in\mathcal{P}}$ (since $\rho>0$), hence the stationary point above is the unique global
minimizer.
\end{proof}
\noindent
Equation~\eqref{eq:app-u-closed} shows that every plant shares the same offset $\delta^{k}$; thus the ICC only needs the aggregate $\sum_{p}\bar{z}_p^{\,k}$ (equivalently $\bar{z}^{\,k}$) to compute $\delta^{k}$, after which $u_p^{k+1}$ can be formed locally.

\subsection{Two-block form and residuals}
\label{app:residuals}

To express the residuals compactly, rewrite \eqref{eq:icc-consensus} in the two-block form
\begin{equation}
\begin{aligned}
\min_{\{x_p\in\mathcal{F}_p\},\,\{u_p\}}
\quad & \sum_{p\in\mathcal{P}} f_p(x_p) \\
\text{s.t.}\quad
& P^{u}_{p}x_p - u_p = 0, \qquad \forall p\in\mathcal{P},\\
& \sum_{p\in\mathcal{P}}u_p - D = 0
\end{aligned}
\label{eq:app-two-block}
\end{equation}
Define stacked variables $x := (x_p)_{p\in\mathcal{P}}$ and $u := (u_p)_{p\in\mathcal{P}}$, and write
\begin{equation}
A x + B u = c,
\label{eq:app-AxBu}
\end{equation}
where $c := (0,\dots,0,D)$ and the block matrices $A$ and $B$ encode the local equalities $P^{u}_{p}x_p-u_p=0$ and the global equality $\sum_p u_p - D=0$. (The explicit block structure follows directly from \eqref{eq:app-two-block}.)

\paragraph{Primal residual.}
At iteration $k+1$, define
\[
r_{p}^{k+1} := P^{u}_{p}x_p^{k+1}-u_p^{k+1}, \qquad \forall p\in\mathcal{P},
\qquad
r_{0}^{k+1} := \sum_{p\in\mathcal{P}}u_p^{k+1}-D
\]
We report the aggregated primal residual norm as
\begin{equation}
\|r^{k+1}\|_2
:=
\sqrt{\sum_{p\in\mathcal{P}}\|r_{p}^{k+1}\|_2^2 + \|r_{0}^{k+1}\|_2^2 }.
\label{eq:app-primal-residual}
\end{equation}

\paragraph{Dual residual.}
We monitor dual progress via the change in local primal copies available with the plants:
\begin{equation}
\|s^{k+1}\|_2
:=
\rho\,
\sqrt{
\sum_{p\in\mathcal{P}}\|u_p^{k+1}-u_p^{k}\|_2^2}
\label{eq:app-dual-residual}
\end{equation}

When the shipment copies stabilize (and so does their aggregate), $\|s^{k+1}\|_2$ approaches zero, indicating dual convergence.

\section{Upper-bound heuristics for allocating the aggregate mismatch}
\label{app:app-ub-heuristics}

Algorithm~\ref{alg:icc-ub-aggregate} constructs plant-wise perturbations
$\{\epsilon_p^{(h,k)}\}_{p\in\mathcal{P}}$ that redistribute the aggregate mismatch
$\epsilon := \sum_{p\in\mathcal{P}} z_p - D$ while preserving feasibility of the network balance, i.e.,
$\sum_{p\in\mathcal{P}}\epsilon_p^{(h,k)}=\epsilon^k$. Each heuristic $h\in\mathcal{H}$ induces plant-side candidate shipment targets (cf.\ Algorithm~\ref{alg:plant-coord-ub-eval}), which are subsequently certified by the plant-side UB subproblems (Algorithm~\ref{alg:plant-upperbound}).
This appendix summarizes natural extensions of the template heuristics in Algorithm~\ref{alg:icc-ub-aggregate}, and documents an instance-dependent heuristic used in our experiments.

\subsection{Extending \textsc{AssignToTwoPlants} to \textsc{AssignTo}\texorpdfstring{$m$}\textsc{Plants}}
\label{app:assign-to-m-plants}

The heuristics \textsc{AssignToOnePlant} and \textsc{AssignToTwoPlants} are special cases of a general family that allocates $\epsilon^k$ to a subset of plants.
For any integer $m\in\{1,\dots,|\mathcal{P}|-1\}$ and subset $\mathcal{S}\subset\mathcal{P}$ with $|\mathcal{S}|=m$, define
\begin{equation}
\textsc{AssignTo}m\textsc{Plants}(\mathcal{S}):\qquad
\begin{aligned}
\epsilon_{p}^{(h,k)} &:=
\begin{cases}
\omega_p \odot \epsilon^k, & \text{if } p\in\mathcal{S},\\
0, & \text{if } p\notin\mathcal{S},
\end{cases}\\[2pt]
\text{with}\quad &\sum_{p\in\mathcal{S}} \omega_p = \mathbf{1}
\quad (\text{component-wise}).
\end{aligned}
\end{equation}

Here, $\omega_p\in[0,1]^m$ is a (possibly component-wise) split weight, $\odot$ is the Hadamard product, and the constraint
$\sum_{p\in\mathcal{S}}\omega_p=\mathbf{1}$ ensures that $\sum_{p\in\mathcal{P}}\epsilon_p^{(h,k)}=\epsilon^k $.
The cases in Algorithm~\ref{alg:icc-ub-aggregate} correspond to:
(i) $m=|\mathcal{P}|$ with $\omega_p \equiv 1/|\mathcal{P}|$ (equal split),
(ii) $m=1$ with $\omega_{p^\star}\equiv 1$, and
(iii) $m=2$ with $\omega_{p_1}\equiv \omega_{p_2}\equiv 1/2$.

In principle, one may enumerate $m$ up to $|\mathcal{P}|-1$ (hence ``\textsc{AssignToNminus1Plants}'') and/or restrict to a candidate set of subsets based on engineering considerations (e.g., geographic proximity, shared product capability, or historical feasibility). This yields a spectrum of increasing heuristic richness (and compute load) while preserving the privacy property that the ICC requires only the aggregate mismatch $\epsilon^k $ to construct candidates.

\subsection{Non-uniform split ratios and capacity-guided splits}
\label{app:nonuniform-splits}

The equal split rule is often conservative when plants are heterogeneous. A straightforward generalization is to use non-uniform weights over a selected subset of plants.
For $m=2$, this includes ratios such as $20{:}80$, $30{:}70$, and their reversals; more generally, for $m>2$, one can use a library of weight vectors
$\{\omega^{(\ell)}\}_{\ell\in\mathcal{L}}$ satisfying $\omega^{(\ell)}\ge 0$ and $\sum_{p\in\mathcal{S}}\omega_p^{(\ell)}=\mathbf{1}$.
Because $\epsilon^k$ is a vector indexed by $(i,r,w)$, the weights may be either:
(i) \emph{scalar} per plant (applied uniformly across coordinates), or
(ii) \emph{component-wise} (weights vary across $(i,r,w)$), enabling product/region/epoch-specific reallocation.

Instance-dependent splits can exploit exogenous plant attributes that are known to the coordinator, or that the plants are willing to share consistently throughout the coordination process without privacy concerns. For example, a capacity-guided split can be formed via plant-specific scalars $c_p>0$ (e.g., effective throughput capacity) and weights

\begin{equation}
\omega_p = \frac{c_p}{\sum_{q\in\mathcal{S}} c_q},\qquad p\in\mathcal{S},
\end{equation}
or via coordinate-wise capacities $c_{p,j}$ aligned with the shipment index $j\leftrightarrow(i,r,w)$.
These heuristics remain aggregate-only at the ICC: they require only $\epsilon^k$ plus exogenous parameters $(c_p)$ or $(c_{p,j})$, and never require access to individual $z_p$.

\subsection{Instance-dependent heuristic used: electricity-price-weighted split}
\label{app:electricity-split}

In our setting, the shipment vector is indexed by product--region--day $(i,r,w)$ with $w\in\{1,\dots,7\}$, while each plant schedules hourly operation over a 168-hour horizon with time-varying electricity prices. We therefore implemented an instance-dependent heuristic that biases mismatch reallocation toward plants with \emph{lower electricity prices}.

Let $j$ index shipment coordinates in the stacked representation ($j\leftrightarrow(i,r,w)$). For each plant $p$ and coordinate $j$, let $E_{p}[j]$ denote the plant's (known/forecasted) average electricity price relevant to that shipment coordinate (in our implementation, derived from the underlying hourly price profile over the corresponding day $w$ and the plant's operating horizon). Define the per-coordinate minimum price
\begin{equation}
E_{\min}[j] := \min_{p\in\mathcal{P}} E_p[j],
\end{equation}

and a stabilized ``attractiveness'' score
\begin{equation}
a_p[j] := \frac{1}{E_p[j]-E_{\min}[j]+\delta},\qquad \delta>0.
\end{equation}

We then normalize per coordinate to obtain weights
\begin{equation}
\omega_p[j] :=
\begin{cases}
\displaystyle \frac{a_p[j]}{\sum_{q\in\mathcal{P}} a_q[j]}, & \sum_{q} a_q[j] > \texttt{tiny},\\[2mm]
\displaystyle \frac{1}{|\mathcal{P}|}, & \text{otherwise,}
\end{cases}
\end{equation}

where \texttt{tiny} is a small tolerance to robustly handle ties (e.g., identical prices across plants).
The resulting heuristic (\textsc{ElectricitySplitEpsilon}) allocates the mismatch component-wise as

\begin{equation}
\epsilon_p^{(h,k)}[j] = \omega_p[j]\ \epsilon^k[j],\qquad \forall p\in\mathcal{P},\ \forall j,
\end{equation}
and induces plant-side candidate shipments by shifting the plant's iterate by the allocated mismatch (cf.\ Algorithm~\ref{alg:plant-coord-ub-eval}).
By construction, $\sum_{p\in\mathcal{P}}\epsilon_p^{(h,k)}=\epsilon^k$ component-wise, and plants with lower electricity prices receive a larger share of the mismatch adjustment.

This heuristic is attractive in our setting for two reasons: (i) it is \emph{aggregation-compatible} (the ICC requires only $\epsilon^k$ plus exogenous price summaries), and (ii) it aligns the UB candidate generation with an operational proxy for marginal cost, thereby increasing the likelihood that at least one candidate yields a feasible, status-quo-improving upper bound.

\section{Payoff distribution using game theory}
\label{app:app-game-theory}

\subsection{Feasibility of budget-balanced transfers}
\label{app:app-feasible-transfers}

Fix a coordinated feasible operating point $\{x_p^{\text{ub}}\}_{p\in\mathcal{P}}$ derived using some upper-bound heuristic \( h \), and let $ J_p^h$ or simply, 
$J_p := f_p(x_p)$ denote plant $p$'s coordinated cost at this point. Let $\tilde f_p$ denote its status-quo cost.
We seek transfers $\{z_p\}_{p\in\mathcal{P}}$ such that:
\begin{align}
J_p - z_p &\le \tilde f_p, \qquad \forall p\in\mathcal{P}, \label{eq:app-IR}\\
\sum_{p\in\mathcal{P}} z_p &= 0. \label{eq:app-BB}
\end{align}

\begin{theorem}
\label{thm:feasible-transfers}
If
\begin{equation}
\sum_{p\in\mathcal{P}} J_p \;\le\; \sum_{p\in\mathcal{P}} \tilde f_p,
\label{eq:app-TG-nonneg}
\end{equation}
then there exist transfers $\{z_p\}_{p\in\mathcal{P}}$ satisfying \eqref{eq:app-IR}--\eqref{eq:app-BB}.
Moreover, if the inequality in \eqref{eq:app-TG-nonneg} is strict, then the transfers can be chosen so that
$J_p-z_p<\tilde f_p$ for all $p$.
\end{theorem}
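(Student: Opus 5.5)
The proof will be constructive: I will write down an explicit transfer vector and check both \eqref{eq:app-IR} and \eqref{eq:app-BB} directly. Write $TG := \sum_{p\in\mathcal{P}}\tilde f_p - \sum_{p\in\mathcal{P}} J_p$, which is nonnegative under \eqref{eq:app-TG-nonneg}. Mirroring the Nash-bargaining form \eqref{eq:zstar-main}, I will set
\[
z_p := \bigl(J_p - \tilde f_p\bigr) + \frac{TG}{|\mathcal{P}|}, \qquad \forall p\in\mathcal{P}.
\]
The idea is that each plant is first reimbursed exactly for its deviation from the status quo, and the aggregate saving is then split equally among the plants.

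\textbf{Budget balance.} Summing over $p$ gives
\[
\sum_p z_p = \sum_p J_p - \sum_p \tilde f_p + TG.
\]
By the definition of $TG$, the right-hand side is $0$, so \eqref{eq:app-BB} holds.

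\textbf{Individual rationality.} For each $p$, the post-transfer cost is
\[
J_p - z_p = \tilde f_p - \frac{TG}{|\mathcal{P}|}.
\]
Since $TG \ge 0$, this is at most $\tilde f_p$, which establishes \eqref{eq:app-IR}.

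\textbf{Strict case.} If \eqref{eq:app-TG-nonneg} holds strictly, then $TG>0$. Hence $TG/|\mathcal{P}|>0$, and the same identity gives $J_p - z_p < \tilde f_p$ for every $p$.

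The argument is elementary throughout. The only step needing care is the sign convention: $z_p$ is credited to plant $p$ and enters its cost as $J_p - z_p$, so the bookkeeping in the individual-rationality computation must follow that convention. One should also note that $|\mathcal{P}|\ge 1$, so dividing by it is well defined.
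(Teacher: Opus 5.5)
Your proof is correct, but it takes a different route from the paper. The paper partitions $\mathcal{P}$ into winners $\mathcal{A}=\{p: J_p\le\tilde f_p\}$ and losers $\mathcal{B}=\{p: J_p>\tilde f_p\}$. It pays each loser exactly its deficit $d_p=J_p-\tilde f_p$. The winners fund the total deficit $D=\sum_{p\in\mathcal{B}}d_p$ through payments $0\le\eta_p\le s_p=\tilde f_p-J_p$ summing to $D$, which is possible because $S=\sum_{p\in\mathcal{A}}s_p\ge D$. That construction exhibits a whole family of feasible transfers (any admissible split of $D$ among the winners). It shows that individual rationality is not tied to any particular allocation rule, and that losers need only be made whole. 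You instead pick one explicit point, the Nash-bargaining transfer \eqref{eq:zstar-main}, and verify both conditions through the single identity $J_p-z_p=\tilde f_p-TG/|\mathcal{P}|$. This anticipates Proposition~\ref{prop:nash-closed-form} rather than relying on a separate construction.

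Your route is also more robust in the strict case. In the paper's construction every loser ends at exactly $J_p-z_p=\tilde f_p$. A winner with $s_p=0$ cannot satisfy $0\le\eta_p<s_p$. So the claimed strict improvement for every plant does not follow from that construction as written, unless the leftover $S-D>0$ is redistributed to all plants. Your equal split of $TG>0$ delivers strict improvement for every plant uniformly.
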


\begin{proof}
Partition $\mathcal{P}$ into winners and losers relative to the status quo:
\[
\mathcal{A} := \{p\in\mathcal{P}:\ J_p \le \tilde f_p\},
\qquad
\mathcal{B} := \{p\in\mathcal{P}:\ J_p > \tilde f_p\}.
\]
For $p\in\mathcal{A}$ define the (nonnegative) surplus
$s_p := \tilde f_p - J_p \ge 0$, and for $p\in\mathcal{B}$ define the deficit
$d_p := J_p - \tilde f_p > 0$.
Let
\[
S := \sum_{p\in\mathcal{A}} s_p,
\qquad
D := \sum_{p\in\mathcal{B}} d_p.
\]
Condition \eqref{eq:app-TG-nonneg} is equivalent to
\[
\sum_{p\in\mathcal{P}}(\tilde f_p - J_p) = S - D \ge 0,
\quad\text{hence}\quad S \ge D.
\]

Construct transfers as follows. For each $p\in\mathcal{B}$, set
\[
z_p := d_p,
\]
so that $J_p - z_p = \tilde f_p$ for all $p\in\mathcal{B}$. For plants in $\mathcal{A}$, choose nonnegative
payments $\{\eta_p\}_{p\in\mathcal{A}}$ such that $\sum_{p\in\mathcal{A}}\eta_p = D$ (which is possible since $S\ge D$),
and set
\[
z_p := -\eta_p,\qquad \forall p\in\mathcal{A}.
\]
Then budget balance holds:
\[
\sum_{p\in\mathcal{P}} z_p
=
\sum_{p\in\mathcal{B}} d_p - \sum_{p\in\mathcal{A}} \eta_p
=
D - D
=
0.
\]
Moreover, for any $p\in\mathcal{A}$ we have $0\le \eta_p \le s_p$ (by appropriate choice), implying
\[
J_p - z_p = J_p + \eta_p \le J_p + s_p = \tilde f_p.
\]
Thus \eqref{eq:app-IR} holds for all plants. If \eqref{eq:app-TG-nonneg} is strict, then $S>D$ and we can choose
$\eta_p < s_p$ for all $p\in\mathcal{A}$, yielding strict improvement for every plant.
\end{proof}

\subsection{Nash bargaining solution under budget balance}
\label{app:nash-bargaining}

Fix a coordinated feasible operating point with costs $\{J_p\}_{p\in\mathcal{P}}$, where $J_p:=f_p(x_p)$,
and disagreement (status-quo) costs $\{\tilde f_p\}_{p\in\mathcal{P}}$.
Let $z_p\in\mathbb{R}$ be a transfer credited to plant $p$, with budget balance
\begin{equation}
\sum_{p\in\mathcal{P}} z_p = 0.
\label{eq:app-budget-balance}
\end{equation}
The realized cost of plant $p$ is $J_p - z_p$, hence the surplus relative to disagreement is
\begin{equation}
\Delta_p(z_p) := \tilde f_p - (J_p - z_p) = (\tilde f_p - J_p) + z_p.
\label{eq:app-surplus}
\end{equation}
Nash bargaining selects transfers by maximizing the Nash product of surpluses:
\begin{equation}
\begin{aligned}
\max_{\{z_p\}}\quad & \prod_{p\in\mathcal{P}} \Delta_p(z_p)\\
\text{s.t.}\quad & \sum_{p\in\mathcal{P}} z_p = 0,\\
& \Delta_p(z_p) > 0,\qquad \forall p\in\mathcal{P}.
\end{aligned}
\label{eq:app-nash-product}
\end{equation}

\begin{proposition}
\label{prop:nash-closed-form}
Assume the total savings are positive:
\begin{equation}
TG := \sum_{p\in\mathcal{P}}(\tilde f_p - J_p) > 0.
\label{eq:app-TG}
\end{equation}
Then \eqref{eq:app-nash-product} has a unique optimizer $\{z_p^\star\}$ satisfying
\begin{equation}
\Delta_p(z_p^\star)=\frac{TG}{|\mathcal{P}|},\qquad \forall p\in\mathcal{P},
\label{eq:app-equal-surplus}
\end{equation}
and the corresponding transfers are
\begin{equation}
z_p^\star
=
(J_p-\tilde f_p)
+\frac{TG}{|\mathcal{P}|},
\qquad \forall p\in\mathcal{P}.
\label{eq:app-z-star}
\end{equation}
\end{proposition}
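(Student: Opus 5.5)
The plan is to change variables from transfers to surpluses and reduce the problem to a symmetric maximization of a product under a linear constraint. Writing $g_p := \tilde f_p - J_p$, we have $\Delta_p = g_p + z_p$ by \eqref{eq:app-surplus}. The map $z \mapsto \Delta$ is an affine bijection of $\mathbb{R}^{|\mathcal P|}$. Under it, budget balance \eqref{eq:app-budget-balance} becomes $\sum_p \Delta_p = \sum_p g_p = TG$. Hence \eqref{eq:app-nash-product} is equivalent to
\[
\max\Bigl\{\prod_{p\in\mathcal P}\Delta_p \;:\; \sum_{p\in\mathcal P}\Delta_p = TG,\ \Delta_p>0\ \forall p\Bigr\}.
\]
The feasible set is nonempty precisely because $TG>0$; for instance, $\Delta_p = TG/|\mathcal P|$ is feasible. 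This already uses the hypothesis \eqref{eq:app-TG}.

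Next I would apply the AM--GM inequality to any feasible $\Delta$:
\[
\prod_p \Delta_p \le \Bigl(\tfrac{1}{|\mathcal P|}\sum_p\Delta_p\Bigr)^{|\mathcal P|} = \bigl(TG/|\mathcal P|\bigr)^{|\mathcal P|}.
\]
Equality holds if and only if all $\Delta_p$ are equal, and hence all equal $TG/|\mathcal P|$. This gives existence of a maximizer, its value, and its uniqueness at once, which establishes \eqref{eq:app-equal-surplus}.

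As an alternative, I could maximize $\sum_p \log\Delta_p$, which is strictly concave on the open orthant. The Lagrange condition $1/\Delta_p = \mu$ forces equal surpluses, and strict concavity gives uniqueness. I would probably mention this route only as a remark.

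Finally, I would invert the change of variables: $z_p^\star = \Delta_p^\star - g_p = (J_p - \tilde f_p) + TG/|\mathcal P|$, which is \eqref{eq:app-z-star}. Uniqueness transfers back because the map is a bijection. For consistency I would also verify directly that $\sum_p z_p^\star = -TG + TG = 0$ and that $\Delta_p(z_p^\star) > 0$.

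The only delicate point is the open constraint $\Delta_p>0$: a maximizer might fail to exist on a non-compact set. The AM--GM bound sidesteps this, because it exhibits an explicit feasible point attaining the upper bound. So I expect no real obstacle beyond stating the equality case of AM--GM carefully.
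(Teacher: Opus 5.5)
Your proof is correct, but your main argument differs from the paper's. The paper stays in the transfer variables. It passes to $\sum_p \log(a_p+z_p)$ with $a_p=\tilde f_p-J_p$, notes strict concavity on the feasible set so that the KKT conditions are necessary and sufficient, and reads off equal surpluses from the stationarity condition $(a_p+z_p)^{-1}=\mu$. That is essentially the route you relegate to a remark.

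Your change of variables $\Delta_p=g_p+z_p$ turns budget balance into $\sum_p\Delta_p=TG$. After that, AM--GM gives the optimal value, existence of a maximizer on the open, non-compact feasible set, and uniqueness (from the equality case) in one step. It needs no appeal to constraint qualification or to sufficiency of first-order conditions for concave programs. It also makes explicit the existence point that the paper handles only implicitly, by constructing a stationary point and invoking sufficiency.

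The paper's Lagrangian route is the one that extends more naturally. With bargaining weights $w_p$, the same stationarity computation gives $\Delta_p\propto w_p$ at once. Side constraints on transfers, such as payment caps, can be handled by adding multipliers. Your argument would need weighted AM--GM for the first case and does not directly accommodate the second.

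Both proofs use $TG>0$ in the same place: to make the feasible set nonempty, equivalently to guarantee that the equal-surplus point has $\Delta_p>0$.
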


\begin{proof}
Maximizing \eqref{eq:app-nash-product} is equivalent to maximizing
$\sum_{p}\log(\Delta_p(z_p))$. Let $a_p:=\tilde f_p-J_p$ so that $\Delta_p(z_p)=a_p+z_p$.
The problem is strictly concave over $\{z:\ a_p+z_p>0,\ \sum_p z_p=0\}$, hence KKT conditions are necessary and sufficient.
The Lagrangian is
\[
\mathcal{L}(z,\mu)=\sum_{p\in\mathcal{P}}\log(a_p+z_p)-\mu\Big(\sum_{p\in\mathcal{P}} z_p\Big).
\]
Stationarity yields $(a_p+z_p)^{-1}=\mu$ for all $p$, hence $a_p+z_p=c$ for a common constant $c>0$.
Summing and using $\sum_p z_p=0$ gives $|\mathcal{P}|c=\sum_p a_p = TG$, i.e., $c=TG/|\mathcal{P}|$,
which proves \eqref{eq:app-equal-surplus}. Solving $a_p+z_p^\star=TG/|\mathcal{P}|$ for $z_p^\star$ yields
\eqref{eq:app-z-star}. Budget balance holds by construction, and $TG>0$ ensures $\Delta_p(z_p^\star)>0$ for all $p$.
\end{proof}

\section{ADMM implementation details}
\label{app:admm-implement-details}

The hyperparameters in Algorithm~\ref{alg:icc-admm-masked} are
\(\beta, \tau, \theta, \rho^{0}, \rho^{f}, \varepsilon_{\mathrm{pri}}, \varepsilon_{\mathrm{dual}},\) and \(K_{\max}\).

Across all simulation days, we fix \(K_{\max}=30\) and use identical primal and dual stopping tolerances,
\((\varepsilon_{\mathrm{pri}},\varepsilon_{\mathrm{dual}})=(0.01, 0.01)\).
For the primal update damping, we predominantly set \(\beta=0.75\), and reduce it to \(\beta=0.5\) on two days to place greater weight on the newly computed primal iterate.

Penalty updates are performed every \(\rho^{f}=1\) iteration. The initial penalty parameter \(\rho^{0}\) varies across days, with minimum \(1.68\), maximum \(2.43\), mean \(1.96\), and standard deviation \(0.195\); overall, \(\rho^{0}\) remains tightly concentrated around \(2\).
For penalty adaptation, we use \((\tau,\theta)=(1.5,10)\) on most days, and occasionally switch to an aggressive update regime \((\tau,\theta)=(300,1.5)\) to rapidly increase the penalty and promote primal feasibility.

Consequently, the ADMM configuration used for the majority of days is
{\small
\[
\beta=0.75, \tau=1.5, \theta=10, \rho^{0}=1.9, \rho^{f}=1,
(\varepsilon_{\mathrm{pri}},\varepsilon_{\mathrm{dual}})=(0.01,0.01), K_{\max}=30.
\]
}
On days 23 and 31, we instead use
{\small
\[
\beta=0.5,\tau=300,\theta=1.5, \rho^{0}=1.9, \rho^{f}=1,
(\varepsilon_{\mathrm{pri}},\varepsilon_{\mathrm{dual}})=(0.01,0.01), K_{\max}=30,
\]
}
thereby simultaneously (i) accelerating penalty growth to tighten primal feasibility and (ii) reducing \(\beta\) to increase the influence of the latest primal update.

This intervention is supported by the behavior shown in Fig.~\ref{fig:trueubgap}: on days 23 and 31 the relative gap between the ADMM-derived best upper bound and the true social welfare cost is unusually high (approximately \(18\%\)--\(20\%\)). We therefore employ the aggressive penalty-increase strategy on days we encounter instability and poor convergence of the algorithm.

\section{Supplementary ADMM plots and results}
\label{app:app-admm-plots}

\subsection{Day-wise state-matched UB gap}
\label{app:app-state-matched-ub-gap}

Figure~\ref{fig:trueubgap} reports the day-wise state-matched upper-bound (UB) gap achieved by the ADMM-derived feasible schedules under exhaustive heuristic evaluation. For each simulation day, the gap is computed relative to the centralized social-welfare optimum initialized from the same plant states realized by the ADMM-UB trajectory on that day. Summary statistics are provided in the figure legend and in Table~\ref{tab:ub_gap_stats} in the main text.

\begin{figure}[htbp]
  \centering
  \includegraphics[width=0.9\linewidth]{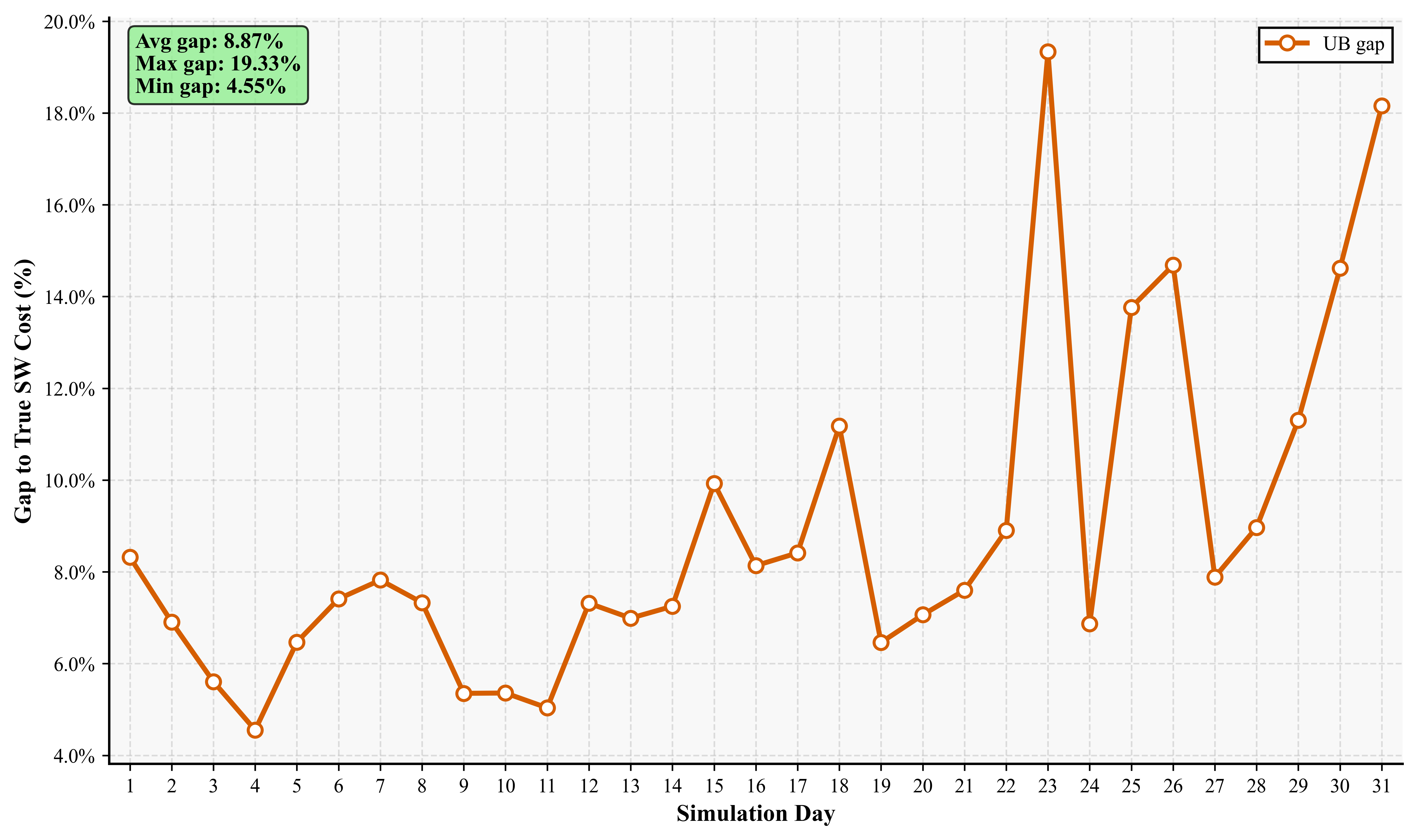}
  \caption{Day-wise state-matched UB gap between the ADMM-derived best feasible upper bound and the centralized social-welfare optimum initialized from the same ADMM-realized plant states.}
  \label{fig:trueubgap}
\end{figure}

\subsection{Heuristic upper-bound evolution on day 4}
\label{app:app-day4-heuristic-ub-evolution}

Figure~\ref{fig:iccobservation} compares the network upper-bound costs obtained by the eight UB heuristics on day~4. The best feasible upper bound is obtained by \texttt{epsilon\_split\_1\_2} at iteration~28, with a network cost of \$124,213.75. This heuristic allocates the aggregate mismatch only across $\mathrm{ASU}_1$ and $\mathrm{ASU}_2$, excluding $\mathrm{ASU}_3$. This behavior is consistent with the electricity-price profile on day~4, where $\mathrm{ASU}_3$ has a substantially higher daily average electricity price than the other plants. Although $\mathrm{ASU}_1$ and $\mathrm{ASU}_2$ have comparable daily average prices, the plant scheduling problems are solved on an hourly grid; hence, intra-day price variation, inventory states, mode-transition constraints, and shipment feasibility can make a joint perturbation over plants~1 and~2 preferable to assigning the mismatch to a single plant. The heuristics involving $\mathrm{ASU}_3$, as well as the all-plant and electricity-weighted splits, remain dominated on this day. The heuristics \texttt{epsilon\_split\_P1} and \texttt{epsilon\_electricity\_split} do not produce feasible upper bounds over the displayed iterations; therefore, although they appear in the legend, no corresponding UB trajectory is visible.
\begin{figure}[htbp]
  \centering
  \includegraphics[width=0.9\linewidth]{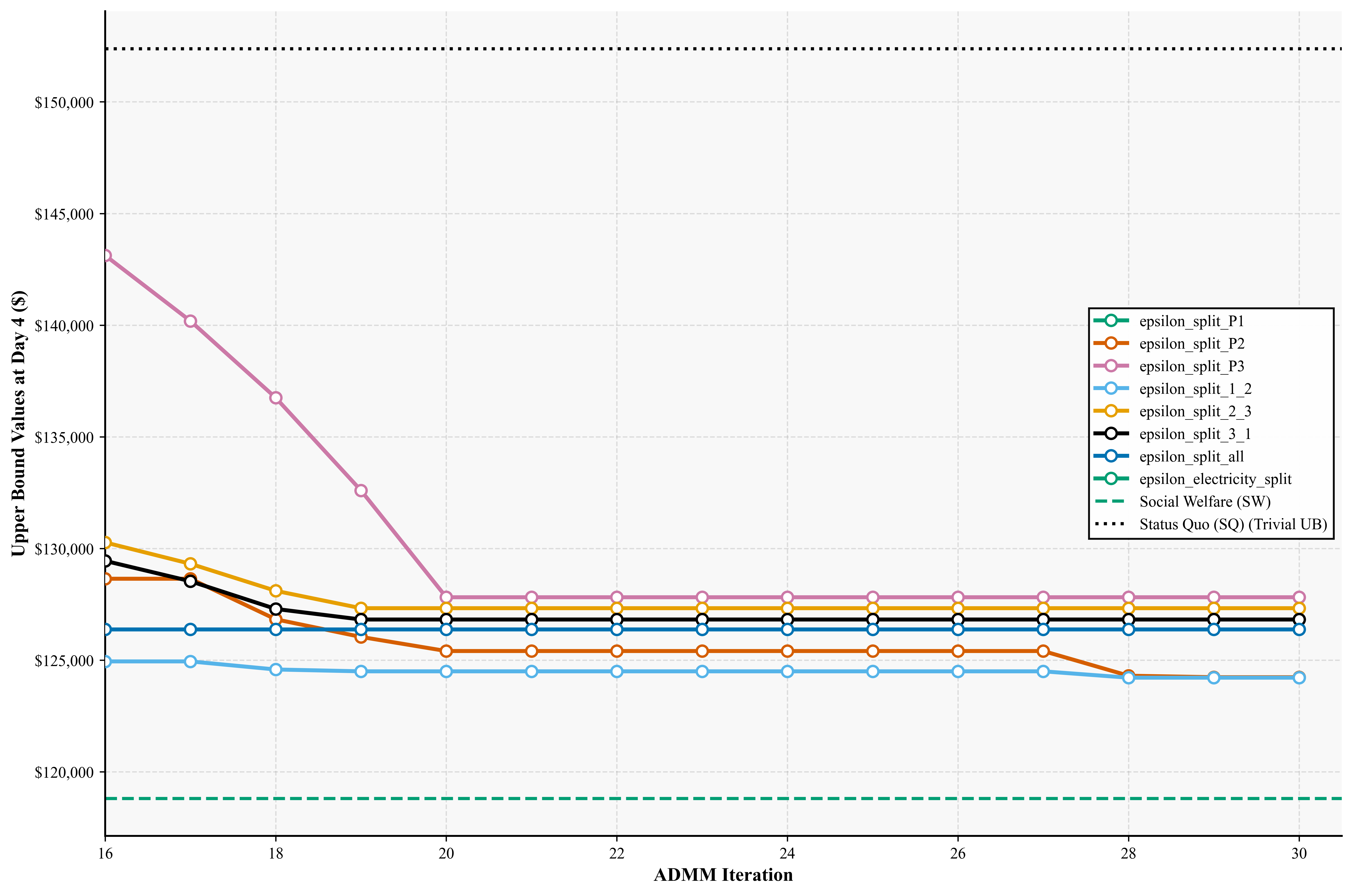}
  \caption{Evolution of heuristic-derived network upper-bound costs on day~4.}
  \label{fig:iccobservation}
\end{figure}

\subsection{Performance of heuristics on the case study}
\label{app:app-heur-performance}

\begin{table}[htbp]
\centering
\caption{Top five UB heuristics ranked by number of days won. A heuristic wins a day if it attains the smallest state-matched UB gap, computed between the ADMM-derived feasible upper-bound cost and the centralized social-welfare optimum initialized from the same ADMM-realized plant states. Gaps are reported in \%.}
\label{tab:top5_heuristics}

\resizebox{\columnwidth}{!}{%
\begin{tabular}{lrrrrrr}
\toprule
\textbf{Heuristic name} &
\textbf{\shortstack{Win\\days}} &
\textbf{\shortstack{Min\\gap (\%)}} &
\textbf{\shortstack{Max\\gap (\%)}} &
\textbf{\shortstack{Avg.\\gap (\%)}} &
\textbf{\shortstack{Var.\\(\%$^{2}$)}} &
\textbf{\shortstack{Solved\\days}} \\
\midrule
$\epsilon\,\mathrm{split}\_{1\_2}$            & 19 & 4.5532 & 19.7458 &  8.2297 & 0.1273 & 24 \\
$\epsilon\,\mathrm{split}\_{2}$             &  6 & 4.5709 & 19.3347 &  9.5529 & 0.1137 & 26 \\
$\epsilon\,\mathrm{split}\_{3}$             &  2 & 7.5847 & 20.1917 & 13.4054 & 0.0840 & 27 \\
$\epsilon\,\mathrm{electricity\_split}$     &  2 & 6.9054 & 19.3462 & 10.3019 & 0.1759 &  6 \\
$\epsilon\,\mathrm{split} \_ {3\_1}$            &  1 & 6.7457 & 14.0326 & 10.1983 & 0.0361 & 23 \\
\bottomrule
\end{tabular}%
}
\end{table}

Table~\ref{tab:top5_heuristics} reports the five best-performing UB heuristics ranked by the number of simulation days on which they achieve the smallest state-matched UB gap to the corresponding centralized benchmark. A ``win day'' denotes a day on which a heuristic yields the lowest feasible upper-bound cost among all evaluated candidates, with the gap computed relative to the centralized social-welfare optimum initialized from the same ADMM-realized plant states. Thus, the table summarizes both the relative quality and the consistency of the heuristic candidates across the rolling-horizon simulation. Two complementary trends emerge.

First, the clear winner is $\epsilon\,\mathrm{split}\_{1\_2}$, which wins on 19 of 31 days and achieves the lowest overall average gap among the listed heuristics ($8.23\%$). This behavior is consistent with the structure of our engineered case study: $\mathrm{ASU}_1$ and $\mathrm{ASU}_2$ have lower average electricity prices over the month (approximately $0.06$ and $0.04$~\$/kWh, respectively), whereas $\mathrm{ASU}_3$ has the highest average price (approximately $0.09$~\$/kWh), as shown in Figure~\ref{fig:elpmonth}. Consequently, reallocating the aggregate mismatch $\epsilon^k$ away from $\mathrm{ASU}_3$ and toward $\mathrm{ASU}_1$--$\mathrm{ASU}_2$ is typically more cost-effective. In addition, $\mathrm{ASU}_1$ is constructed to have the largest capacity in our setup, further reinforcing why heuristics that load $\mathrm{ASU}_1$ (and, jointly, $\mathrm{ASU}_2$) tend to win more often.

Second, the number of \emph{solved days} (days on which a heuristic yields a feasible UB schedule for all plants) does not necessarily correlate with win frequency. For instance, $\epsilon\,\mathrm{split}\_{3}$ is feasible on the most days among the top five (27 solved days) yet wins only twice and has a relatively large average gap ($13.41\%$). This indicates that feasibility alone is insufficient: although assigning mismatch to $\mathrm{ASU}_3$ is often operationally feasible, it is rarely cost-competitive due to $\mathrm{ASU}_3$'s persistently higher electricity prices. Similarly, $\epsilon\,\mathrm{split}\_{2}$ is feasible on 26 days and wins 6, providing a strong baseline among the instance-independent rules that route mismatch to a single plant.

Finally, the instance-dependent rule $\epsilon\,\mathrm{electricity\_split}$ wins on two days but is feasible on only six days. This suggests that price-aware allocation can be effective when its implied shipment targets are compatible with plant-level constraints and inventories, but can be brittle when feasibility is tight. Overall, these results motivate maintaining a \emph{portfolio} of heuristics within ICC--ADMM: simple instance-independent rules offer broad feasibility coverage, while targeted instance-dependent rules can deliver additional improvements when the system state allows.

\section{Adversarial empirical risk minimization model}
\label{app:app-adv-model}

We consider a single attacked plant, with the plant index suppressed. The adversarial empirical risk minimization (ERM) model is written for a fixed adversarial setting \((D,K)\) and a fixed randomized leakage realization \(j\), as defined in the main text. Let
\[
\mathcal D_D=\{1,\dots,D\}
\]
denote the set of attacked simulation days. For each day \(d\in\mathcal D_D\), the adversary observes a randomly sampled subset \(\mathcal A_j^d\) of ADMM iterations from the final ten iterations of that day's coordination transcript, with
\[
|\mathcal A_j^d|=K,
\qquad
\mathcal A_j^d\subseteq \{K_{\max}-9,\dots,K_{\max}\}.
\]
The collection of revealed day--iteration pairs is denoted by
\[
\Omega_{D,K}^{(j)}
:=
\{(d,a): d\in\mathcal D_D,\ a\in\mathcal A_j^d\}.
\]
The adversary's goal is to estimate unknown per-mode fixed-cost coefficients and per-mode--product variable-cost coefficients by fitting a convex ERM model that is consistent with the revealed iterates indexed by \(\Omega_{D,K}^{(j)}\) and the assumed side information.

\paragraph{Index sets}
Let $\mathcal M$ denote the set of operational modes (e.g., \textsc{Off}, \textsc{Liquid\_Prod}, \textsc{Liquid\_Startup}), $\mathcal I$ the product set, $\mathcal R$ the region set, and $\mathcal W=\{1,\dots,W\}$ the lookahead (horizon) days. Define the hourly index set $\mathcal H=\{1,\dots,24W\}$, $\mathcal H_0=\{0\}\cup\mathcal H$, and day-boundary hours $\mathcal H_{24}=\{24,48,\dots,24W\}$; for $h\in\mathcal H_{24}$, let $w=h/24\in\mathcal W$.

\paragraph{Revealed quantities}
For each revealed day--iteration pair \((d,a)\in\Omega_{D,K}^{(j)}\), the adversary observes the mode schedule
\(y_{m,h}^{d,a}\in\{0,1\}\) for all \((m,h)\in\mathcal M\times\mathcal H\), the plant primal update
\(z_{i,r,w}^{d,a}\in\mathbb R\) for all \((i,r,w)\in\mathcal I\times\mathcal R\times\mathcal W\), the ADMM penalty
\(\rho^{d,a}\in\mathbb R_{+}\), and the electricity price
\(\alpha_{h}^{EP,d}\in\mathbb R_{+}\) for all \(h\in\mathcal H\). The adversary also observes the day-boundary inventory
\(IV_{i,0}^{d}\in\mathbb R\) for all \(i\in\mathcal I\) and \(d\in\mathcal D_D\), as well as product-wise production bounds
\(\underline{PD}_{i}\in\mathbb R_{+}\) and \(\overline{PD}_{i}\in\mathbb R_{+}\) for all \(i\in\mathcal I\). Finally, for each \((d,a)\in\Omega_{D,K}^{(j)}\) and \((i,r,w)\in\mathcal I\times\mathcal R\times\mathcal W\), it observes the ADMM center term
\begin{equation}
\label{eq:q_def}
q_{i,r,w}^{d,a}
:=
\bigl(u^{d,a}-\lambda^{d,a}\bigr)_{i,r,w}\in\mathbb R,
\end{equation}
and two aggregate cost ratios,
\(CR_{fixed\_to\_variable}^{agg}\) and \(CR_{production\_to\_startup\_mode}^{agg}\).

\paragraph{Unknowns}
The adversary estimates nonnegative cost coefficients
\(\widehat{\delta}_m\ge 0\) for fixed costs by mode and
\(\widehat{\gamma}_{m,i}\ge 0\) for variable costs by mode and product. These coefficients define the estimated private objective-parameter vector
\[
\widehat c
:=
\Bigl(
\{\widehat{\delta}_m\}_{m\in\mathcal M},
\{\widehat{\gamma}_{m,i}\}_{(m,i)\in\mathcal M\times\mathcal I}
\Bigr),
\]
with entries ordered consistently with the true coefficient vector \(c_{\mathrm{orig}}\). In the numerical attack experiments, this vector corresponds to the reported estimate \(c_{\mathrm{est}}^{(j)}\) for randomized leakage realization \(j\). Thus, the reconstruction error in the main text compares \(c_{\mathrm{est}}^{(j)}=\widehat c\) with the true private coefficient vector \(c_{\mathrm{orig}}\). The adversary also estimates latent operational profiles for each revealed day--iteration pair \((d,a)\in\Omega_{D,K}^{(j)}\), including
\(\widehat{PD}_{i,h}^{d,a}\in\mathbb R_{+}\),
\(\widehat{IV}_{i,h}^{d,a}\in\mathbb R\),
\(\widehat z_{i,r,w}^{d,a}\ge 0\),
\(\widehat E_{d,a,h}\ge 0\), and cost scalars
\(\widehat C_{d,a}\), \(\widehat C^{\mathrm{fix}}_{d,a}\), and
\(\widehat C^{\mathrm{var}}_{d,a}\), all nonnegative.

\paragraph{Adversarial ERM problem (convex QP)}
Given a correctness weight $\Gamma>0$, the adversary solves:
\begin{subequations}
\label{eq:adv_qp}
\begin{align}
\min \quad
& \sum_{(d,a)\in\Omega_{D,K}^{(j)}}
\Bigg[
\widehat{C}_{d,a}
+\frac{\rho^{d,a}}{2}\!\sum_{(i,r,w)\in\mathcal I\times\mathcal R\times\mathcal W}\!
\bigl(\widehat z_{i,r,w}^{d,a}-q_{i,r,w}^{d,a}\bigr)^{2}
\nonumber\\
& \qquad +\Gamma\!\sum_{(i,r,w)\in\mathcal I\times\mathcal R\times\mathcal W}\!
\bigl(\widehat z_{i,r,w}^{d,a}-z_{i,r,w}^{d,a}\bigr)^{2}
\Bigg] \label{eq:adv_obj}\\[1mm]
\text{s.t.}\quad
& \widehat E_{d,a,h}
=\sum_{m\in\mathcal M}\widehat\delta_m\,y_{m,h}^{d,a}
+\sum_{m\in\mathcal M}\sum_{i\in\mathcal I}\widehat\gamma_{m,i}\,\widehat{PD}_{i,h}^{d,a},
\nonumber\\
& \qquad \forall (d,a,h)\in\Omega_{D,K}^{(j)}\times\mathcal H,
\label{eq:adv_power}\\
& \widehat C_{d,a}=\sum_{h\in\mathcal H}\alpha_h^{EP,d}\,\widehat E_{d,a,h},  \qquad  \forall (d,a)\in\Omega_{D,K}^{(j)}, \label{eq:adv_cost_total}\\
& \widehat C^{\mathrm{fix}}_{d,a}=\sum_{h\in\mathcal H}\sum_{m\in\mathcal M}\alpha_h^{EP,d}\,y_{m,h}^{d,a}\,\widehat\delta_m, \qquad  \forall (d,a)\in\Omega_{D,K}^{(j)}, \label{eq:adv_cost_fix}\\
& \widehat C^{\mathrm{var}}_{d,a}=\sum_{h\in\mathcal H}\sum_{m\in\mathcal M}\sum_{i\in\mathcal I}\alpha_h^{EP,d}\,\widehat\gamma_{m,i}\,\widehat{PD}_{i,h}^{d,a},  \qquad \forall (d,a)\in \Omega_{D,K}^{(j)}, \label{eq:adv_cost_var}\\
& \sum_{r\in\mathcal R}\widehat z_{i,r,w}^{d,a}
=\sum_{\tau=h-23}^{h}\widehat{PD}_{i,\tau}^{d,a}
-\Bigl(\widehat{IV}_{i,h}^{d,a}-\widehat{IV}_{i,h-24}^{d,a}\Bigr),
\nonumber\\
& \qquad \forall (d,a,i,h)\in \Omega_{D,K}^{(j)}\times\mathcal I\times\mathcal H_{24},
\label{eq:adv_inv_balance}\\
& \widehat{IV}_{i,0}^{d,a}=IV_{i,0}^{d}, \qquad \forall (d,a,i)\in \Omega_{D,K}^{(j)}\times\mathcal I, \label{eq:adv_inv_init}\\
& \underline{PD}_{i}\,y_{\textsc{LiquidProd},h}^{d,a}\le \widehat{PD}_{i,h}^{d,a}\le \overline{PD}_{i}\,y_{\textsc{LiquidProd},h}^{d,a},
\nonumber\\
& \qquad \forall (d,a,i,h)\in \Omega_{D,K}^{(j)} \times\mathcal I\times\mathcal H,
\label{eq:adv_PD_bounds}\\
& \widehat\delta_{\textsc{Off}}=0,\qquad
\widehat\gamma_{\textsc{Off},i}=0,\qquad
\widehat\gamma_{\textsc{LiquidSU},i}=0,\quad \forall i\in\mathcal I,
\label{eq:adv_mode_structure}\\
& \sum_{(d,a)\in\Omega_{D,K}^{(j)}}\widehat C^{\mathrm{fix}}_{d,a}
=CR_{\textsc{FixToVar}}^{agg}\sum_{(d,a)\in\Omega_{D,K}^{(j)}}\widehat C^{\mathrm{var}}_{d,a},
\label{eq:adv_ratio_fix_var}\\
& \sum_{(d,a)\in\Omega_{D,K}^{(j)}}\widehat C^{\mathrm{fix}, \textsc{LiquidProd}}_{d,a} = CR_{\textsc{ProdToSuMode}}^{agg} \sum_{(d,a)\in\Omega_{D,K}^{(j)}}\widehat C^{\mathrm{fix}, \textsc{LiquidSU}}_{d,a}
\label{eq:adv_ratio_prod_su} 
\end{align}
\end{subequations}

Model~\eqref{eq:adv_qp} fits a cost-consistent latent plant trajectory $(\widehat{PD},\widehat{IV},\widehat z)$ that (i) remains close to the revealed primal update $z^{d,a}$, (ii) is consistent with the ADMM center term $q^{d,a}$ under the penalty $\rho^{d,a}$, and (iii) satisfies day-boundary inventory accounting and capacity bounds under the revealed mode schedule. The ratio constraints \eqref{eq:adv_ratio_fix_var}--\eqref{eq:adv_ratio_prod_su} encode aggregate side information that improves identifiability of fixed-cost components.

\section{ICC--ADMM algorithms}
\label{app:app-icc-algorithms}

\begin{algorithm}[!htbp]
\caption{\textsc{LocalPlantADMMStep}$(p,k)$}
\label{alg:local-plant-solve}
\DontPrintSemicolon
\SetKwInOut{Input}{Input}
\SetKwInOut{InState}{Initial iterates}
\SetKwInOut{State}{Internal State}
\SetKwInOut{Param}{Hyperparameter}
\SetKwInOut{Output}{Output}

\Input{Plant index $p$, iteration $k$}
\InState{ $( \{u_p^0,\lambda_p^0, \nu_p^0,z_p^{*,0} \}  =\mathbf{0}_m, \rho^0_p)$}
\State{  $S_p^{\text{init}}, (u_p^k,\lambda_p^k, \nu_p^k, \rho^k_p, z_p^{\star,k} )$; neighbor masking seeds $\{s_{pq}\}_{q\in\mathcal{P}\setminus\{p\}}$ }
\Param{$\beta$}
\Output{$(\widetilde{z}_p^{k+1},\widetilde{\overline{z}}_p^{k})$.}

\tcp{\emph{Local MIQP solve (plant-side subproblem)}}
$x_p^{k+1} \leftarrow \arg\min\limits_{x_p\in\mathcal{F}_p(S_p^{\text{init}})}
\Big\{ f_p(x_p) + \tfrac{\rho^k_p}{2}\|P_p^u x_p - u_p^{k} + \lambda_p^{k}\|_2^2 \Big\}$\;

$z_p^{k+1} \leftarrow P_p^u x_p^{k+1}$\;

$z_p^{\star,k+1} \leftarrow \beta z_p^{\star,k} + (1-\beta)z_p^{k+1}$\tcp*[r]{damping}

$\overline{z}_p^{k} \leftarrow z_p^{\star,k+1} + \lambda_p^{k}$\tcp*[r]{shifted local shipments}

\For{$q\in \mathcal{N}(p)=\mathcal{P}\setminus\{p\}$ }{ 
$m_{pq}^k \leftarrow \mathrm{PRG}(s_{pq},k)$ \;
}
$mask_p^k \leftarrow \sum_{q\in\mathcal{N}(p):\,p<q} m_{pq}^k
-\sum_{q\in\mathcal{N}(p):\,q<p} m_{qp}^k$ \tcp*[r]{mask generation}

$\widetilde{z}_p^{k+1} \leftarrow z_p^{\star,k+1} + mask_p^k$\;
$\widetilde{\overline{z}}_p^{k} \leftarrow \overline{z}_p^{k} + mask_p^k$\;

$z_p^{\star,k} \leftarrow z_p^{\star,k+1} $ \tcp*[r]{store for next iteration}

\Return{$(\widetilde{z}_p^{k+1},\widetilde{\overline{z}}_p^{k})$}\;
\end{algorithm}

\begin{algorithm}[!htbp]
\caption{\textsc{ICCUBHeuristics}$(D,\{\widetilde{z}_p^{k+1}\}_{p\in\mathcal{P}},k)$}
\label{alg:icc-ub-aggregate}
\DontPrintSemicolon
\SetKwInOut{Input}{Inputs}
\SetKwInOut{Output}{Outputs}

\Input{
Total demand $D\in\mathbb{R}^{m}$; masked plant shipments $\{\widetilde{z}_p^{k+1}\}_{p\in\mathcal{P}}$; iteration \(k\).}
\Output{
A set of candidate offset allocations $\{\epsilon_p^{(h,k)}\}_{p\in\mathcal{P}}$ for each candidate heuristic $h$ such that $\sum_{p\in\mathcal{P}}\epsilon_p^{(h,k)}=\epsilon^k$.}

\tcp{\emph{Aggregate mismatch (masks cancel under summation)}}
$Q^k \leftarrow \sum_{p\in\mathcal{P}} \widetilde{z}_p^{k+1}$\;
$\epsilon^k \leftarrow Q^k - D$\;

\tcp{\emph{Construct a list of instantiated heuristic candidates}}
$\mathcal{H} \leftarrow \{\textsc{SplitEpsilonEqual}\}$\;
$\mathcal{H} \leftarrow \mathcal{H}\ \cup\ \{\textsc{AssignToOnePlant}(p): p\in\mathcal{P}\}$\;
$\mathcal{H} \leftarrow \mathcal{H}\ \cup\ \{\textsc{AssignToTwoPlants}(p_1,p_2): (p_1,p_2)\in\mathcal{S}_2\}$\tcp*[r]{e.g., $\mathcal{S}_2=\{(p_1,p_2)\in\mathcal{P}^2: p_1<p_2\}$ or a subset}

\tcp{\emph{Generate candidate allocations}}
\ForEach{$h\in\mathcal{H}$}{
    \uIf{$h=\textsc{SplitEpsilonEqual}$}{
        $\epsilon_p^{(h,k)} \leftarrow \epsilon^k/|\mathcal{P}|,\quad \forall p\in\mathcal{P}$\;
    }
    \uElseIf{$h=\textsc{AssignToOnePlant}(p^\star)$}{
        $\epsilon_{p^\star}^{(h,k)} \leftarrow \epsilon^k$;\quad
        $\epsilon_{p}^{(h)} \leftarrow 0,\ \forall p\in\mathcal{P}\setminus\{p^\star\}$\;
    }
    \uElseIf{$h=\textsc{AssignToTwoPlants}(p_1,p_2)$}{
        $\epsilon_{p_1}^{(h,k)} \leftarrow \epsilon^k/2$;\quad
        $\epsilon_{p_2}^{(h,k)} \leftarrow \epsilon^k/2$;\quad
        $\epsilon_{p}^{(h,k)} \leftarrow 0,\ \forall p\in\mathcal{P}\setminus\{p_1,p_2\}$\;
    }
    \Else{
    \tcp{Additional heuristic rules cf.~\ref{app:app-ub-heuristics}.}
    }
}
\Return{$\{\epsilon_p^{(h,k)}\}_{p\in\mathcal{P},\,h\in\mathcal{H}}$}\;
\end{algorithm}

\begin{algorithm}[!htbp]
\caption{\textsc{PlantCoordinatorUpdate\&UBEval}$(p,k,\bar z^{\,k},D,\{\epsilon_p^{(h,k)}\}_{h\in\mathcal{H}})$}
\label{alg:plant-coord-ub-eval}
\DontPrintSemicolon
\SetKwInOut{Input}{Inputs}
\SetKwInOut{State}{Local state}
\SetKwInOut{Output}{Outputs}

\Input{
Plant index $p$; iteration $k$; broadcast mean $\bar z^{\,k}\in\mathbb{R}^{m}$; total demand $D\in\mathbb{R}^{m}$;
candidate offsets $\{\epsilon_p^{(h,k)}\}_{h\in\mathcal{H}}$ for plant $p$.}
\State{
Local variables $\nu_p^{k}\in\mathbb{R}^{m}$, $\overline z_p^{k}\in\mathbb{R}^{m}$, $mask_p^{k}\in\mathbb{R}^{m}$ (from \textsc{LocalPlantStep}); $\mathcal{F}_p(S_p^{\text{init}})$.}
\Output{
masked message $\widetilde u_p^{k+1}$;
and UB-evaluation summaries $\{(J_{p}^{(h,k)},\,\texttt{feas}_{p}^{(h,k)})\}_{h\in\mathcal{H}}$.}

\tcp{ \emph{Coordinator step cf.\ \eqref{eq:icc-u-update} performed locally}}
$s_p^{k} \leftarrow D-\nu_p^{k}$\tcp*[r]{local copy of coordinator dual}
$\delta_p^{k} \leftarrow \frac{1}{|\mathcal{P}|+1}\Bigl(|\mathcal{P}|\,\bar z^{\,k}-s_p^{k}\Bigr)$\;
$u_p^{k+1} \leftarrow \overline z_p^{k} - \delta_p^{k}$\;

\tcp{\emph{Evaluate UB candidates}}
\ForEach{$h\in\mathcal{H}$}{
    $\hat d_{p}^{(h,k)} \leftarrow z_p^{*,k+1} - \epsilon_{p}^{(h,k)}$\tcp*[r]{candidate shipping for plant $p$}
    \uIf{$\hat d_{p}^{(h,k)} \notin \mathbb{R}_+^{m}$}{
        $\texttt{feas}_{p}^{(h,k)} \leftarrow 0$;\quad
        $J_{p}^{(h,k)} \leftarrow +\infty$\;
    }
    \Else{
        \tcp{Problem-dependent UB solve; fix shipments to $\hat d_p^{(h)}$, check feasibility and possibly compute a cost}
        $(J_{p}^{(h,k)}, \, \texttt{feas}_{p}^{(h,k)}, S_p^{\text{new},(h,k)}) \leftarrow \textsc{PlantUpperBound}(p,h,k,\hat d_{p}^{(h,k)},S_p^{\text{init}})$\;
    }
}

\tcp{\emph{Mask the primal shipment copy before sending to ICC }}
$\widetilde u_p^{k+1} \leftarrow u_p^{k+1} + mask_p^{k}$\;

\Return{$\widetilde u_p^{k+1},\{(J_{p}^{(h,k)},\texttt{feas}_{p}^{(h,k)})\}_{h\in\mathcal{H}})$}\;
\end{algorithm}

\begin{algorithm}[!htbp]
\caption{\textsc{PlantUpperBound}$(p,h,k,\hat d_p^{(h,k)},S_p^{\text{init}})$}
\label{alg:plant-upperbound}
\DontPrintSemicolon
\SetKwInOut{Input}{Inputs}
\SetKwInOut{Output}{Outputs}

\Input{
Plant index $p$; candidate shipment target $\hat d_p^{(h,k)}\in\mathbb{R}^{m}$; initial state $S_p^{\text{init}}$.}
\Output{
Feasibility flag $\texttt{feas}_p^{(h,k)}\in\{0,1\}$; upper-bound cost $J_p^{(h,k)}\in\mathbb{R}\cup\{+\infty\}$; updated state $S_p^{\text{new}}$.}

\tcp{\emph{necessary checks}}
\uIf{$\hat d_p ^{(h,k)}\notin \mathbb{R}_+^{m}$}{
    $\texttt{feas}_p^{(h,k)} \leftarrow 0$;\quad $J_p \leftarrow ^{(h,k)}+\infty$;\quad \Return{$(J_p,\texttt{feas}_p,S_p^{\text{init}})$}\;
}

\tcp{\emph{UB solve by enforcing shipments}}
Solve the following plant feasibility problem:
\begin{equation}
\label{eq:ub-plant-problem}
\begin{aligned}
J_p^{(h,k)} \;:=\; \min_{x_p}\quad & f_p(x_p) \\
\text{s.t.}\quad & P_p^u x_p = \hat d_p^{(h,k)}, \\
& x_p \in \mathcal{F}_p(S_p^{\text{init}}), \\
& x_p \in \mathbb{R}_+^{r_p}\times\{0,1\}^{b_p}.
\end{aligned}
\end{equation}

\uIf{problem \eqref{eq:ub-plant-problem} is feasible}{
    $\texttt{feas}_p^{(h,k)} \leftarrow 1$\;
    Obtain an optimizer $x_p^{\mathrm{ub}}$ (or any feasible solution) from the solver\;
    \tcp{\emph{Update state implied by the feasible schedule}}
    $S_p^{\text{new},{(h,k)}} \leftarrow \textsc{StateTransition}(S_p^{\text{init}},x_p^{\mathrm{ub}})$\;
}
\Else{
    $\texttt{feas}_p^{(h,k)} \leftarrow 0$;\quad $J_p^{(h,k)} \leftarrow +\infty$;\quad $S_p^{\text{new},{(h,k)}} \leftarrow S_p^{\text{init}}$\;
}

\Return{$( J_p^{(h,k)}, \, \texttt{feas}_p^{(h,k)},\, S_p^{\text{new},{(h,k)}})$}\;
\end{algorithm}

\begin{algorithm}[!htbp]
\caption{\textsc{ICCUBEval}$(k, \{(J_{p}^{(h,k)},\texttt{feas}_{p}^{(h,k)})\}_{p\in\mathcal{P},\,h\in\mathcal{H}}, \{\tilde f_p\}_{p\in\mathcal{P}})$}
\label{alg:icc-ub-eval}
\DontPrintSemicolon
\SetKwInOut{Input}{Inputs}
\SetKwInOut{State}{Local State}
\SetKwInOut{Output}{Outputs}

\Input{
Current iteration $k$;
per-plant UB summaries $\{(J_{p}^{(h,k)},\texttt{feas}_{p}^{(h,k)})\}$ for each $p\in\mathcal{P}$ and $h\in\mathcal{H}$, where
$\texttt{feas}_{p}^{(h,k)}\in\{0,1\}$ and $J_{p}^{(h,k)}\in\mathbb{R}\cup\{+\infty\}$;
status-quo plant costs $\{\tilde f_p\}_{p\in\mathcal{P}}$.}

\State{Incumbent best upper bound $J^{\star}\in\mathbb{R}\cup\{+\infty\}$ with incumbent label $(h^{\star},k^{\star}, \texttt{feas}^{\star} )$}

\Output{
Updated incumbent best upper bound $J^{\star}$; corresponding heuristic $h^{\star}$; global feasibility flag $\texttt{feas}^{\star}\in\{0,1\}$; and iteration index $k^{\star}$ at which $J^{\star}$ was attained.}

\tcp{\emph{Evaluate feasibility and aggregate UB cost for each heuristic at iteration $k$}}
\ForEach{$h\in\mathcal{H}$}{
    $\texttt{feas}^{(h,k)} \leftarrow \prod\limits_{p\in\mathcal{P}} \texttt{feas}_{p}^{(h,k)}$\tcp*[r]{AND across plants}
    \uIf{$\texttt{feas}^{(h,k)} = 1$}{
        $J^{(h,k)} \leftarrow \sum\limits_{p\in\mathcal{P}} J_{p}^{(h,k)}$\tcp*[r]{aggregate UB cost}
    }
    \Else{
        $J^{(h,k)} \leftarrow +\infty$\;
    }
}

\tcp{\emph{Best feasible heuristic at iteration $k$ (if any)}}
$h^{k} \leftarrow \arg\min\limits_{h\in\mathcal{H}} J^{(h,k)}$\;
$J^{k} \leftarrow J^{(h^{k},k)}$\;

\tcp{\emph{Global feasibility at iteration $k$: must be feasible and improve on network status quo}}
$\texttt{feas}^{k} \leftarrow
\mathbb{I}\!\left[\,J^{k}<+\infty \ \wedge\  J^{k}\le \sum_{p\in\mathcal{P}}\tilde f_p\,\right]$\;

\tcp{\emph{Update incumbent best upper bound only if improved at iteration $k$}}
\If{$\texttt{feas}^{k}=1$ \textbf{and} $J^{k} < J^{\star}$}{
    $J^{\star} \leftarrow J^{k}$\;
    $h^{\star} \leftarrow h^{k}$\;
    $k^{\star} \leftarrow k$\;
    $\texttt{feas}^{\star} \leftarrow 1$\tcp*[r]{improved incumbent found at iter $k$}
}

\Return{$(J^{\star},h^{\star},k^{\star},\texttt{feas}^{\star})$}\;
\end{algorithm}

\begin{algorithm}[!htbp]
\caption{\textsc{PlantUpdateDual\&State\&Residuals}$(p,k,\bar u^{\,k+1},h^\star,\texttt{feas}^\star, k^\star)$}
\label{alg:plant-dual-state-residual}
\DontPrintSemicolon
\SetKwInOut{Input}{Inputs}
\SetKwInOut{AState}{Local ADMM state}
\SetKwInOut{BState}{Local UB state}
\SetKwInOut{Output}{Outputs}

\Input{
Plant index $p$; iteration $k$; broadcast mean shipment $\bar u^{\,k+1}\in\mathbb{R}^{m}$;
selected heuristic $h^\star$; global feasibility flag $\texttt{feas}^\star\in\{0,1\}$.}

\AState{
Local iterates $(u_p^{k+1},u_p^{k},\lambda_p^{k},\nu_p^{k})$;
plant shipments ($z_p^{k+1}, z_p^{\star,k+1}$) from \textsc{LocalPlantADMMStep};
feasible UB candidate state and schedules, associated with $h$, ($x_{p}^{\mathrm{ub},(h)}$, $S_p^{(h)}, J_p^{(h)} \; \text{if} \; \texttt{feas}_{p}^{(h)} ) $.}

\BState{Incumbent best upper bound $ \{ S_p^{\text{new}}, x_p^{\star},J_p^{\star} \} $}

\Output{Scalar residual pieces $(\alpha_p^{k+1},\gamma_p^{k+1})$.}

\tcp{\emph{Dual ascent (scaled ADMM)}}
$\nu_p^{k+1} \leftarrow \nu_p^{k} + \Bigl(|\mathcal{P}|\,\bar u^{\,k+1}-D\Bigr)$\;
$\lambda_p^{k+1} \leftarrow \lambda_p^{k} + \Bigl(z_p^{\star,k+1}-u_p^{k+1}\Bigr)$\;

\tcp{\emph{Plant commits state update implied by the selected UB heuristic}}
\If{$\texttt{feas}^\star = 1 \ \texttt{and} \ k^\star = k$ }{
    \tcp{If plants computed UB candidates, they locally store the best feasible one per $h$.
    When $h^\star$ is selected, plant $p$ commits the corresponding state update.}
    $S_p^{\text{new}} \leftarrow S_p^{\text{new},(h^\star,k)}$ \tcp*[r]{updates boundary conditions}
    $x_p^{\star} \leftarrow  x_p^{(h^\star,k)} $ \tcp*[r]{plant updates decisions}
    $J_p^{\star} \leftarrow  J_p^{(h^\star, k)} $ \tcp*[r]{plant updates UB cost}
}

\tcp{\emph{Residual pieces (sent to ICC as scalars)}}
$\alpha_p^{k+1} \leftarrow \bigl\|z_p^{k+1}-u_p^{k+1}\bigr\|_2^2$\tcp*[r]{local primal residual piece}
$\gamma_p^{k+1} \leftarrow  (\rho_p^k)^2 \bigl\|u_p^{k+1}-u_p^{k}\bigr\|_2^2$\tcp*[r]{local dual residual piece}

\tcp{\emph{Store iterate for next iteration}}
$u_p^{k} \leftarrow u_p^{k+1}$\;

\Return{$\bigr( (S_p^{\text{new}}, x_p^{\star}, J_p^{\star}), (\lambda_p^{k+1}, \nu_p^{k+1}), (\alpha_p^{k+1},\gamma_p^{k+1}) \bigl) $}\;
\end{algorithm}

\begin{algorithm}[!htbp]
\caption{\textsc{PlantRhoUpdateAndDualRescale}$(p,k,\|r^{k+1}\|_2,\|s^{k+1}\|_2)$}
\label{alg:plant-rho-update-rescale}
\DontPrintSemicolon
\SetKwInOut{Input}{Inputs}
\SetKwInOut{State}{Local state}
\SetKwInOut{Param}{Hyperparameters}
\SetKwInOut{Output}{Outputs}

\Input{Plant index $p$; iteration $k$; global residual norms $\|r^{k+1}\|_2$ and $\|s^{k+1}\|_2$ broadcast by the ICC.}
\State{Local penalty $\rho_p^{k}>0$; scaled dual variables $\lambda_p^{k+1}$ and $\nu_p^{k+1}$ after the dual-ascent step.}
\Param{Residual-balancing parameters $(\tau,\theta)$ announced once by the ICC, with $\tau>1$ and $\theta>1$; penalty update frequency $\rho^{f}\in\mathbb{Z}_{\ge 1}$.}
\Output{Updated $(\rho_p^{k+1},\lambda_p^{k+1},\nu_p^{k+1})$.}

\tcp{\emph{Default: no change}}
$\rho_p^{k+1} \leftarrow \rho_p^{k}$\;

\tcp{\emph{Update $\rho$ only every $\rho^{f}$ iterations}}
\If{$k \bmod \rho^{f} = 0$}{
    \uIf{$\|r^{k+1}\|_2 > \tau\,\|s^{k+1}\|_2$}{
        $\rho_p^{k+1} \leftarrow \theta\,\rho_p^{k}$\;
    }
    \ElseIf{$\|s^{k+1}\|_2 > \tau\,\|r^{k+1}\|_2$}{
        $\rho_p^{k+1} \leftarrow \rho_p^{k}/\theta$\;
    }
}

\tcp{\emph{Rescale scaled duals if $\rho$ changed (scaled ADMM)}}
\If{$\rho_p^{k+1}\neq \rho_p^{k}$}{
    $c_{\rho,p}^{k+1} \leftarrow \rho_p^{k}/\rho_p^{k+1}$\;
    $\lambda_p^{k+1} \leftarrow c_{\rho,p}^{k+1}\,\lambda_p^{k+1}$\;
    $\nu_p^{k+1} \leftarrow c_{\rho,p}^{k+1}\,\nu_p^{k+1}$\;
}

\Return{$(\rho_p^{k+1},\lambda_p^{k+1},\nu_p^{k+1})$}\;
\end{algorithm}

\newpage

\FloatBarrier
\clearpage


\section{Air separation unit model}
\label{app:asumodel}

In this section, we present the mixed-integer linear programming (MILP) formulation for the operation of a set of air separation unit (ASU) plants producing liquid nitrogen (LIN), liquid oxygen (LOX), and liquid argon (LAR). The ASUs differ in production capacities, storage capacities, and operating characteristics. The plants are required to satisfy specified daily production targets over a weekly planning horizon, while production rates and operating decisions are modeled at an hourly resolution. We assume that ASUs do not rely on third-party external purchases, since the assigned production targets are feasible within the available plant capacities. Before presenting the complete scheduling formulation, we first describe the internal operating structure of the ASU model, following \citet{Karwan2007}.

An ASU uses air as the raw material and separates it into its constituent components through a sequence of cryogenic unit operations. The incoming air is first compressed and purified to remove contaminants, then cooled to cryogenic temperatures and partially liquefied before being fed to the distillation system. The separated nitrogen, oxygen, and argon streams are subsequently processed through liquefaction equipment to obtain the required liquid products. Depending on the status of the major equipment units and whether they are active, inactive, or being prepared for operation, the plant may operate in different modes. In this work, we consider three operating modes: OFF, Liquid Start-up, and Liquid Production. The objective of the scheduling problem is to determine the optimal hourly operating mode and production rate of each ASU so as to minimize operating cost under the considered electricity pricing scheme while meeting the prescribed production targets.

Electricity prices are treated as deterministic inputs to the optimization model. Specifically, the 24-hour day-ahead electricity prices are assumed to be known, while prices beyond the day-ahead period are represented using forecasts. Although production targets are specified on a daily basis, the scheduling decisions must be made at an hourly resolution because of equipment-level operational constraints and hourly variations in electricity prices. Consequently, electricity consumption is directly linked to the operating mode of the ASU and, in the Liquid Production mode, to the selected production rate. The feasible production region in the Liquid Production mode is approximated using the convex hull of historical production data. Figure~\ref{fig:convhull} illustrates this approximation for representative production rates when the liquefier is active, resulting in a three-dimensional convex polytope for the considered liquid products. This approximation can be further refined using the Convex Region Surrogate (CRS) framework described by \citet{Zhang2015crs}. Mode transitions are restricted by liquefier start-up and shut-down requirements, and additional minimum up-time and down-time constraints may be imposed to limit excessive switching and equipment wear. These transition restrictions are represented using the state-transition graph shown in Figure~\ref{fig:transition}.

Table~\ref{tab:table_1} reports the transition-time parameters associated with 
the liquefier state-transition graph. Since the optimization horizon is 
discretized at an hourly resolution, transition times with $\theta=1$ hour are 
implicitly captured within a single time interval and therefore do not introduce 
additional inter-temporal restrictions. The proposed framework can be extended to 
incorporate additional operational details, such as ramp-up trajectories, 
start-up product losses, and product-purity requirements. In this work, we model 
the ASU plant in a manner similar to the plant configuration studied by \citet{Zhang2015}, where the integration of a cryogenic energy storage system with an 
existing ASU plant is assessed. However, the cryogenic energy storage system is 
not considered in the present study. The complete ASU operating environment is 
shown in Figure~\ref{fig:asuenv}. Given daily production targets over a weekly 
planning horizon, the ASU operator must determine the hourly operating decisions 
for each plant.

\begin{itemize}
    \item the mode of operation,
    \item the production quantity for each product,
    \item the amount of power to be purchased from the electricity market,
    \item the quantity of liquids to be stored.
\end{itemize}

\begin{figure}[h!]
  \centering
   \includegraphics[width=0.7\textwidth]{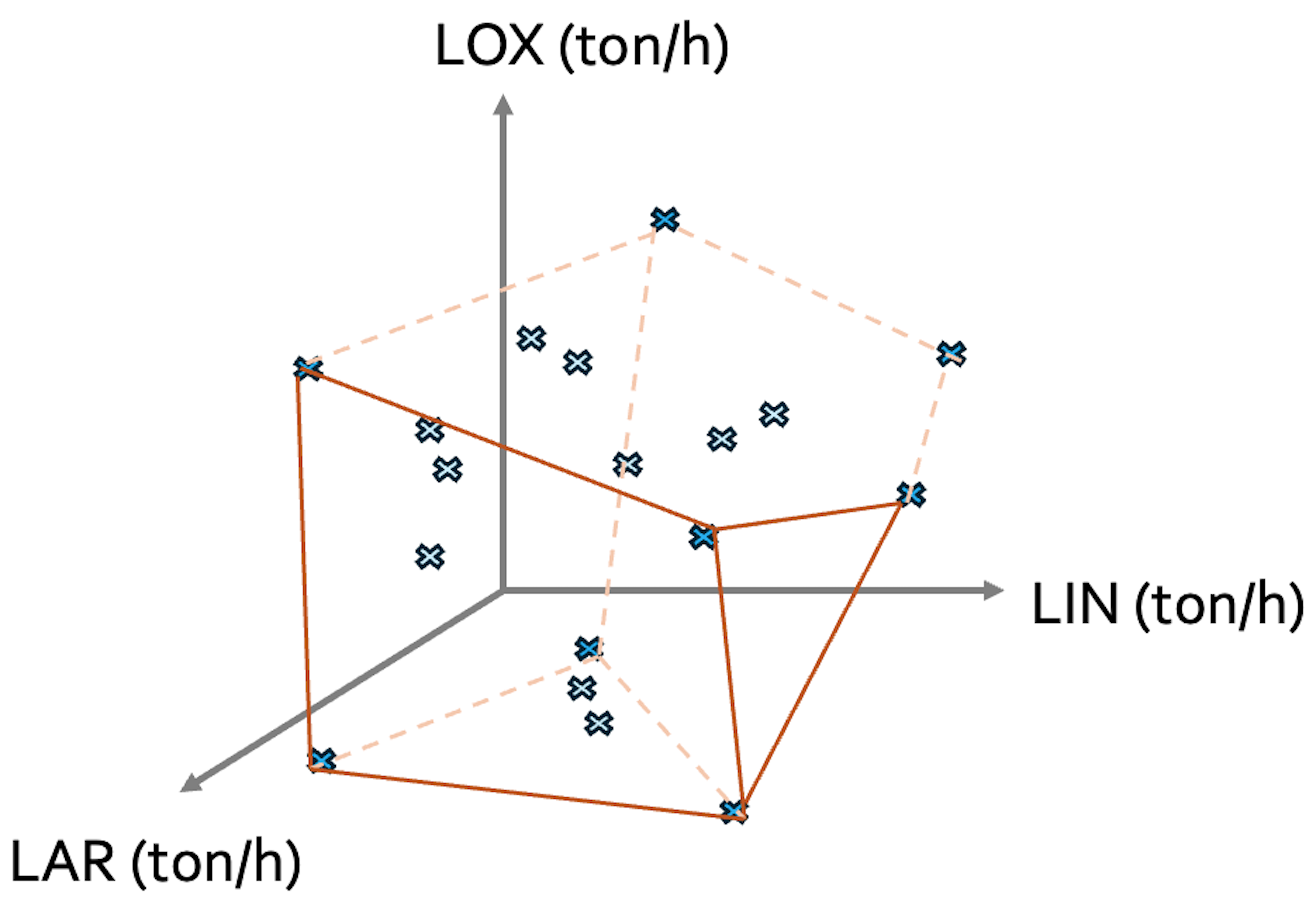}
  \caption{Exemplar approximation of feasible region using convex hull.}
  \label{fig:convhull}
\end{figure}

\begin{figure}[h!]
  \centering
  \includegraphics[width=\textwidth]{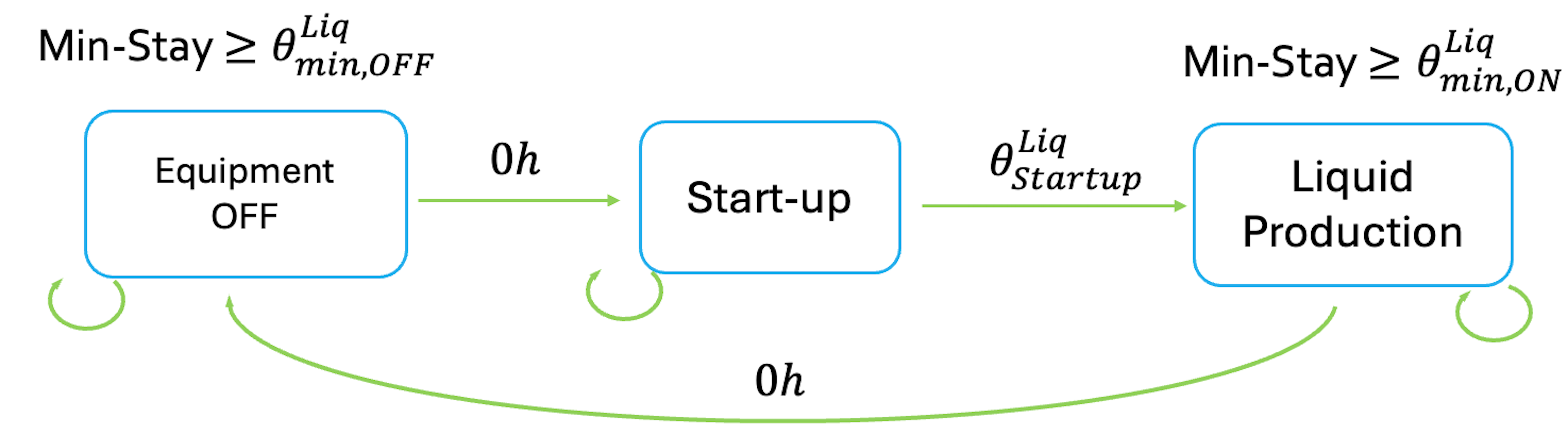}
  \caption{State graph of the internal liquefier in the ASU. The arrows indicate the allowed transitions and their required transition times.}
  \label{fig:transition}
\end{figure}

\begin{table}[htbp]
\centering
\caption{Description of theta values used in the state graph}
\label{tab:table_1}
\begin{tabularx}{\textwidth}{cXXX}
\toprule
& \textbf{$\theta_{\text{min,OFF}}^{\text{Liq}}$} 
& \textbf{$\theta_{\text{Startup}}^{\text{Liq}}$} 
& \textbf{$\theta_{\text{min,ON}}^{\text{Liq}}$} \\
\midrule
\textbf{Description} 
& Minimum time the liquefier must be OFF 
& Time it takes to start liquid production once turned ON 
& Minimum time the liquefier must be ON \\
\bottomrule
\end{tabularx}
\end{table}

\begin{figure}[h!]
  \centering
  \includegraphics[width=\textwidth]{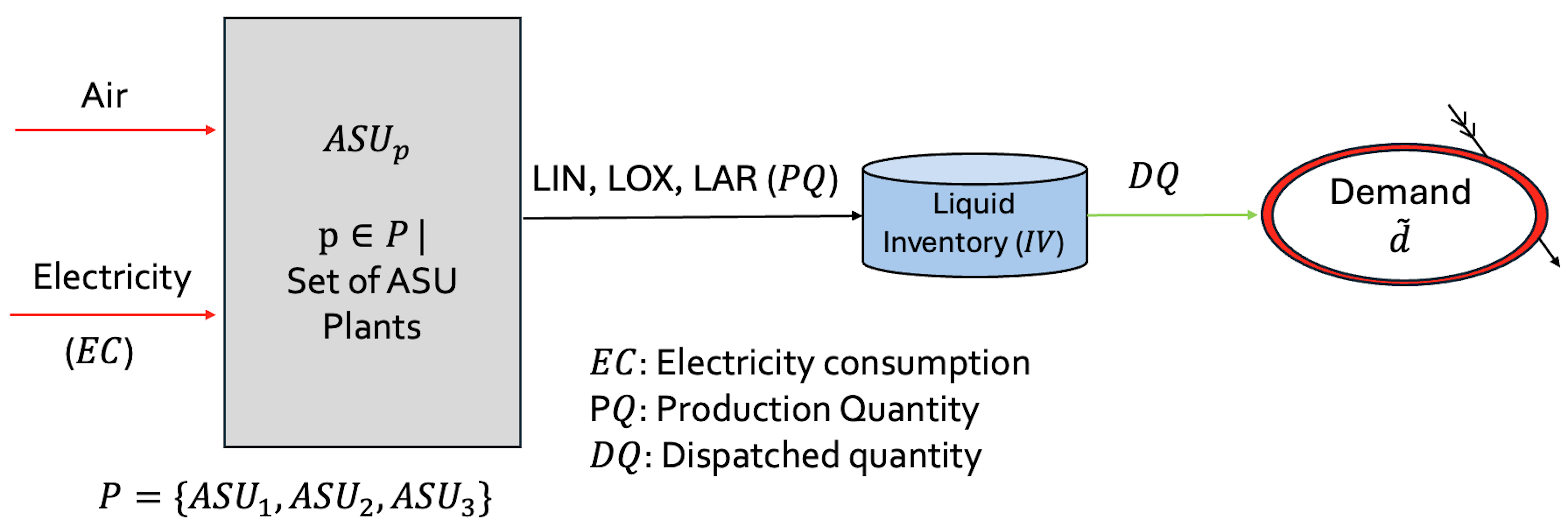}
  \caption{External ASU environment.}
  \label{fig:asuenv}
\end{figure}

\begin{figure}[h!]
  \centering
  \includegraphics[width=\textwidth]{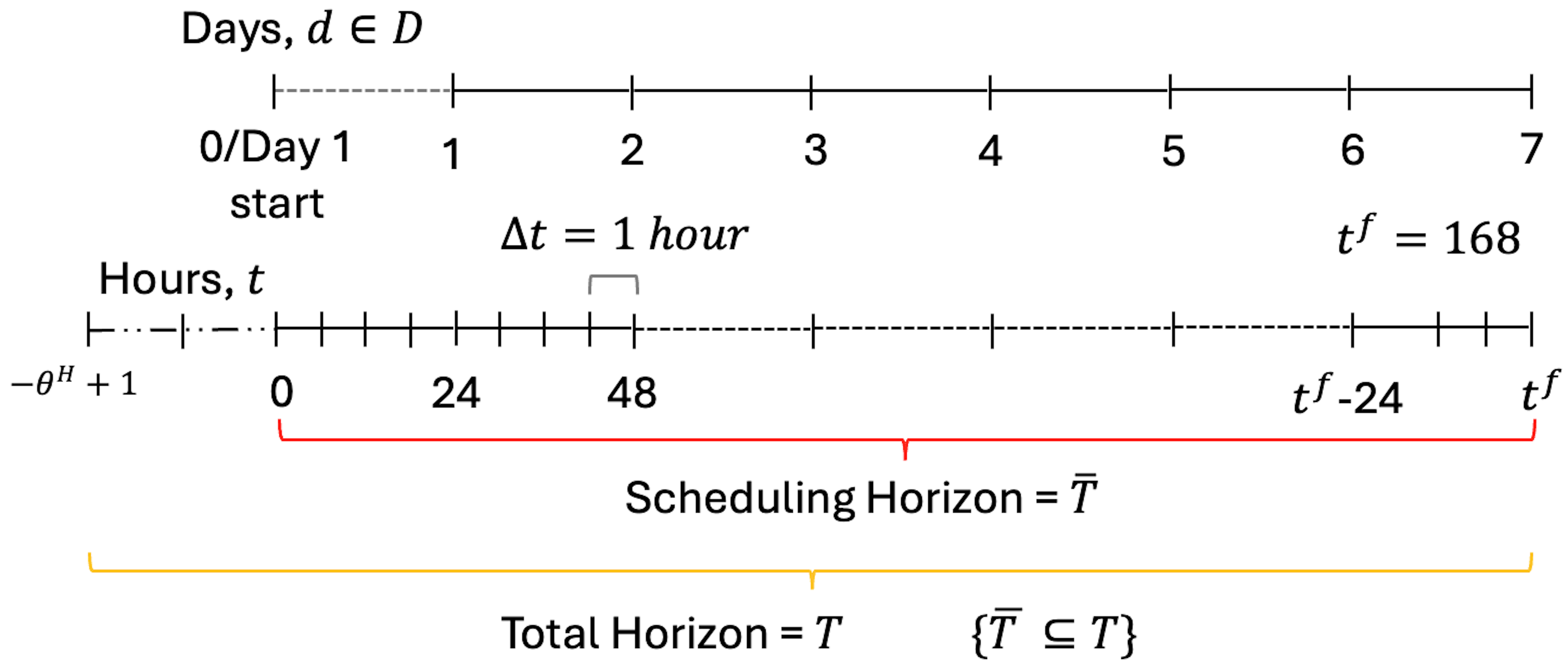}
  \caption{Daily and hourly discretization of weekly time horizon.}
  \label{fig:horizon}
\end{figure}

To formulate the scheduling problem, we discretize the 7-day planning horizon 
into uniform time intervals of 1 hour, resulting in $7 \times 24 = 168$ hourly 
decision periods. A decision indexed by time $t$ represents the operating 
decision applied during the interval $(t-1,t]$. The hourly grid representation of 
the 7-day horizon is shown in Figure~\ref{fig:horizon}. The decision-making 
horizon is denoted by $\bar T=\{1,2,\ldots,t^f\}$ and is embedded within the 
extended time horizon 
$T=\{-\theta^H+1,-\theta^H+2,\ldots,0,1,\ldots,t^f\}$, which includes historical 
time periods required to enforce mode-transition constraints. The historical 
mode information is provided as a boundary condition to ensure that future 
operating decisions are consistent with the prior state of each plant. Since ASU 
production targets are specified on a daily basis, we introduce the day index 
$d \in D=\{1,2,\ldots,7\}$ in addition to the hourly index $t$. The discretized 
optimization setting from the operators' perspective is shown in 
Figure~\ref{fig:situation}. The set of liquid products is denoted by 
$\bar I=\{\mathrm{LIN},\mathrm{LOX},\mathrm{LAR}\}$, and the daily production 
target for product $i \in \bar I$ at plant $p$ on day $d$ is denoted by 
$\tilde d_{pid}$. Finally, let $\mathcal P=\{\mathrm{ASU}_1,\mathrm{ASU}_2,
\mathrm{ASU}_3\}$ denote the set of ASU plants, with $p \in \mathcal P$. We next 
present the mathematical formulation of the scheduling model.

\subsection{Mass balance}

We formulate the inventory balance constraints using the ASU operating 
environment shown in Figure~\ref{fig:asuenv}. For each plant $p \in P$ and 
liquid product $i \in \bar I$, the inventory level evolves according to the 
difference between liquid production and dispatch decisions. The inventory 
balance is given in Equation~\ref{eq:iv}. Specifically, 
Equation~\ref{eq:iva} updates the inventory at the end of time period $t$ as the 
inventory available at the end of the previous period, plus the liquid production 
quantity $PQ_{pit}$ generated by the ASU, minus the dispatched quantity 
$DQ_{pit}$. The dispatch variables are subsequently linked to the daily target 
requirements. Equation~\ref{eq:ivb} imposes the product- and plant-specific 
storage capacity limits, ensuring that the inventory level does not exceed the 
available tank capacity.

\begin{figure}[h!]
  \centering
  \includegraphics[width=\textwidth]{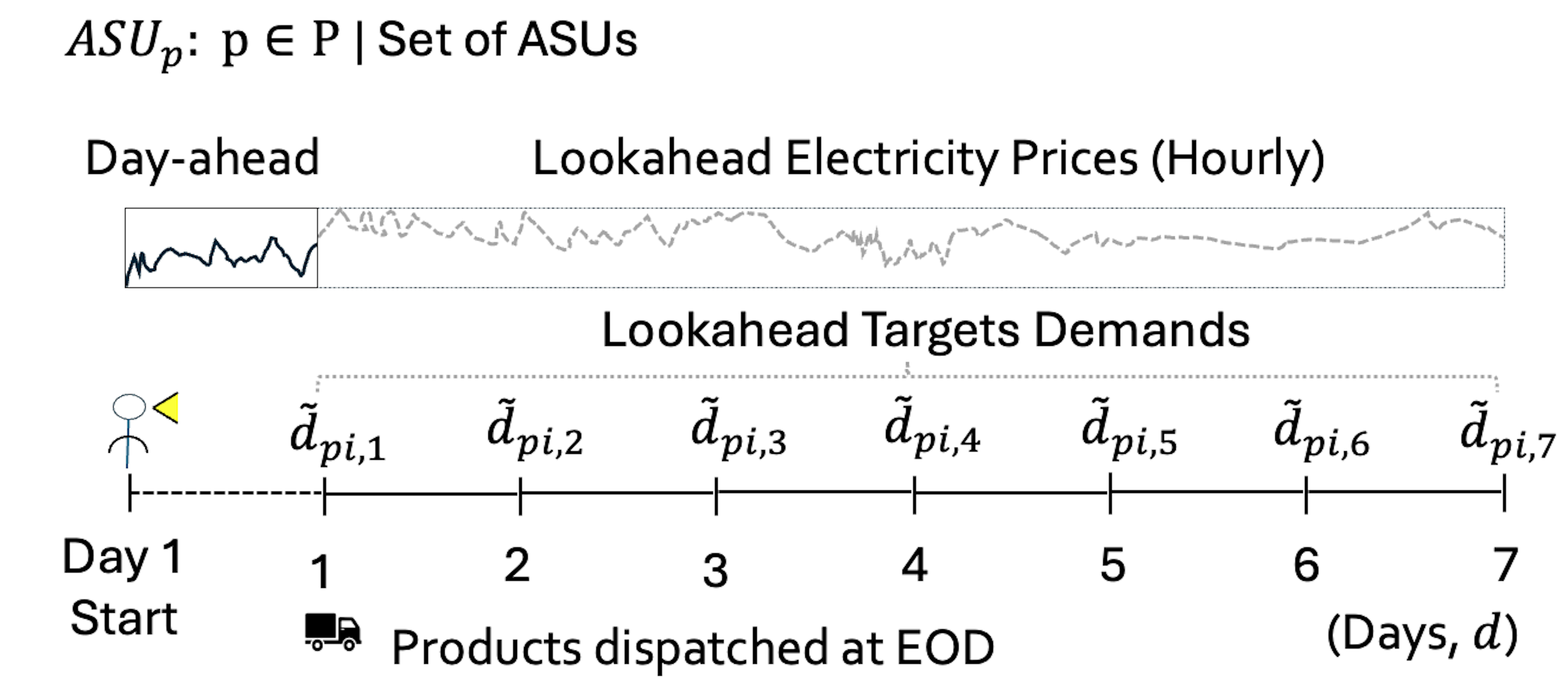}
  \caption{External state observed by an air separation unit operator.}
  \label{fig:situation}
\end{figure}

\begin{subequations}\label{eq:iv}
\begin{align}
IV_{pit} &= IV_{pi, t-1} + PQ_{pit} - DQ_{pit}  && \forall p \in P, i \in \bar{I}, t \in \bar{T} \label{eq:iva} \\
IV_{pit} &\leq IV_{pi}^u && \forall p \in P, i \in \bar{I}, t \in \bar{T} \label{eq:ivb}
\end{align}
\end{subequations}

Next, we impose the daily liquid product target to be met by each ASU in Equation~\ref{eq:target}. Equation~\ref{eq:targeta} enforces that the sum of hourly dispatched quantities over each day is equal to the target quantity $\tilde d_{pid}$. Equation~\ref{eq:targetb} ensures that liquid products are dispatched only at the end of each day over the weekly planning horizon.

\begin{subequations}\label{eq:target}
\begin{align}
\sum_{t=24(d-1)+1}^{24d} DQ_{pit} &= \tilde d_{pid}
&& \forall p \in P,\ i \in \bar I,\ d \in D, \label{eq:targeta}\\
DQ_{pit} &= 0
&& \forall p \in P,\ i \in \bar I,\ d \in D,\ 
t=24(d-1)+1,\ldots,24d-1.
\label{eq:targetb}
\end{align}
\end{subequations}

\subsection{Energy balance}

The considered ASU model has no auxiliary power source other than the regional energy market from which they receive locational marginal prices. Also, it has no ability to produce energy inhouse using for example, an CES system as proposed in \citet{Zhang2015}. Hence, the amount of power purchased would be directly equal to the amount consumed. Equation~\ref{eq:energy} states that electricity consumed by ASU in any time-period $E C_t$ would be equal to the amount of energy purchased from the energy market $E P_t$.
\begin{equation}\label{eq:energy}
EC_{pt} = EP_{pt} \quad \forall p \in P, t \in \bar{T}
\end{equation}
The electricity consumption in-turn would depend on the mode of operation in each plant. We model the consumption as linear function of production quantity as described in next section as part of disjunction.

\subsection{Internal operations}

In this section, we formulate the internal operating model of the ASU. As 
illustrated by the state-transition graph in Figure~\ref{fig:transition}, an ASU 
can physically occupy only one operating mode at any given time. Let
\[
M=\{m_{\mathrm{OFF}},m_{\mathrm{LS}},m_{\mathrm{LP}}\}
\]
denote the set of operating modes, corresponding to OFF, Liquid Start-up, and 
Liquid Production, respectively. We introduce the binary variable $y_{pmt}$, 
which is equal to 1 if plant $p \in P$ operates in mode $m \in M$ at time 
$t \in \bar T$, and 0 otherwise. Since each plant can operate in exactly one mode 
at any time, the following assignment constraint is imposed:
\begin{equation}\label{eq:mode_assignment}
\sum_{m \in M} y_{pmt} = 1
\qquad \forall p \in P,\ t \in \bar T.
\end{equation}

Among the considered operating modes, nonzero production is possible only in the 
Liquid Production mode. The OFF and Liquid Start-up modes therefore correspond to 
zero production. For each plant $p \in P$, the feasible production region in the 
Liquid Production mode is approximated using the convex hull of historical 
production samples. Let $S_p$ denote the number of historical samples available 
for plant $p$, and let each sample be represented by the production vector
\[
\mathbf{q}_p^s = \left(Q_{pi}^s\right)_{i \in I},
\qquad s=1,\ldots,S_p.
\]
The historical production data set for plant $p$ is then given by
\[
HPQ_p =
\left\{
\mathbf{q}_p^1,\mathbf{q}_p^2,\ldots,\mathbf{q}_p^{S_p}
\right\}.
\]
Let $J_p$ denote the set of vertices of the convex hull of $HPQ_p$, and let 
$v_{pji}$ denote the production quantity of product $i \in I$ at vertex 
$j \in J_p$. The hourly production quantity $PQ_{pit}$ is represented using 
convex-combination weights $\lambda_{pjt}$ as follows:
\begin{subequations}\label{eq:production}
\begin{align}
PQ_{pit} &= \sum_{j \in J_p} \lambda_{pjt} v_{pji}
&& \forall p \in P,\ i \in I,\ t \in \bar T,
\label{eq:productiona}\\
\sum_{j \in J_p} \lambda_{pjt} &= y_{p,m_{\mathrm{LP}},t}
&& \forall p \in P,\ t \in \bar T,
\label{eq:productionb}\\
\lambda_{pjt} &\ge 0
&& \forall p \in P,\ j \in J_p,\ t \in \bar T.
\label{eq:productionc}
\end{align}
\end{subequations}

Equation~\ref{eq:productiona} expresses the production vector as a convex 
combination of the vertices of the feasible production region. 
Equation~\ref{eq:productionb} links the convex-hull representation to the 
Liquid Production mode. When $y_{p,m_{\mathrm{LP}},t}=1$, the production vector 
lies within the convex hull of historical Liquid Production samples. Conversely, 
when $y_{p,m_{\mathrm{LP}},t}=0$, all convex-combination weights are forced to 
zero, and hence $PQ_{pit}=0$ for all products $i \in I$.

The electricity consumption of each ASU is modeled as the sum of a mode-dependent 
fixed electricity requirement and a production-dependent variable component. The 
fixed component captures the electricity required to maintain operation in a 
given mode, while the variable component captures the marginal electricity 
consumption associated with liquid production. The electricity consumption is 
given by
\begin{equation}\label{eq:electricity}
EC_{pt}
=
\sum_{m \in M} \delta_{pm} y_{pmt}
+
\sum_{i \in I} \gamma_{pi}^{\mathrm{LP}} PQ_{pit}
\qquad
\forall p \in P,\ t \in \bar T.
\end{equation}

Here, $\delta_{pm}$ denotes the fixed electricity requirement of plant $p$ when 
operating in mode $m$, and $\gamma_{pi}^{\mathrm{LP}}$ denotes the marginal 
electricity consumption associated with producing product $i$ in the Liquid 
Production mode. The parameterization reflects the physical behavior of the ASU 
across operating modes. Specifically, the OFF mode is assumed to have no 
electricity consumption, so
\[
\delta_{p,m_{\mathrm{OFF}}}=0
\qquad \forall p \in P.
\]
The Liquid Start-up mode requires electricity to bring the unit to an operable 
state, but no product is generated in this mode. Therefore, it is represented by 
a nonzero fixed electricity requirement and no production-dependent component. 
In contrast, the Liquid Production mode has both a nonzero fixed electricity 
requirement and a production-dependent variable electricity component. 
Equivalently, the mode-dependent electricity structure can be summarized as
\[
\delta_{p,m_{\mathrm{OFF}}}=0, 
\qquad
\delta_{p,m_{\mathrm{LS}}}>0,
\qquad
\delta_{p,m_{\mathrm{LP}}}>0
\qquad \forall p \in P,
\]
with the variable electricity coefficient applied only in the Liquid Production 
mode through $\gamma_{pi}^{\mathrm{LP}}$.

\subsection{Mode-transition constraints}

The feasible transitions among the liquefier operating modes are represented
using the state-transition graph shown in Figure~\ref{fig:transition}. Let
\[
M=\{m_{\mathrm{OFF}},m_{\mathrm{LS}},m_{\mathrm{LP}}\}
\]
denote the set of operating modes, corresponding to OFF, Liquid Start-up, and
Liquid Production, respectively. The set of allowable transitions is defined as
\[
\mathcal{TR}
=
\{(m_{\mathrm{OFF}},m_{\mathrm{LS}}),
  (m_{\mathrm{LS}},m_{\mathrm{LP}}),
  (m_{\mathrm{LP}},m_{\mathrm{OFF}})\},
\]
while the set of prohibited transitions is
\[
\mathcal{DTR}
=
\{(m_{\mathrm{OFF}},m_{\mathrm{LP}}),
  (m_{\mathrm{LP}},m_{\mathrm{LS}}),
  (m_{\mathrm{LS}},m_{\mathrm{OFF}})\}.
\]
We define the binary variable $y_{pmt}$ to be equal to 1 if plant $p$ is in mode
$m$ at time $t$, and 0 otherwise. Similarly, $z_{pmm't}$ is a binary variable
that is equal to 1 if plant $p$ transitions from mode $m$ to mode $m'$ at time
$t$, and 0 otherwise.

Since each plant can occupy exactly one mode at any time, we impose
\begin{equation}
\sum_{m \in M} y_{pmt}=1
\qquad
\forall p \in P,\ t \in \bar T.
\label{eq:mode_assignment}
\end{equation}
The evolution of the operating mode is governed by a transition-flow balance.
For each mode, the change in the corresponding mode-selection variable is equal
to the number of incoming transitions minus the number of outgoing transitions:
\begin{equation}
y_{pmt}-y_{pm,t-1}
=
\sum_{\substack{(m',m)\in \mathcal{TR}}}
z_{pm'm,t-1}
-
\sum_{\substack{(m,m')\in \mathcal{TR}}}
z_{pmm',t-1}
\qquad
\forall p \in P,\ m \in M,\ t \in \bar T.
\label{eq:transition_flow}
\end{equation}
At most one mode transition is allowed in any time period:
\begin{equation}
\sum_{(m,m')\in \mathcal{TR}} z_{pmm't}
\leq 1
\qquad
\forall p \in P,\ t \in \bar T.
\label{eq:one_transition}
\end{equation}
Transitions that are not allowed by the state graph are explicitly prohibited:
\begin{equation}
z_{pmm't}=0
\qquad
\forall p \in P,\ (m,m')\in \mathcal{DTR},\ t \in T.
\label{eq:forbidden_transition}
\end{equation}

The liquefier transition logic is further constrained by minimum OFF time,
start-up duration, and minimum ON time requirements. The parameters
$\theta_{\mathrm{min,OFF}}^{\mathrm{Liq}}$,
$\theta_{\mathrm{Startup}}^{\mathrm{Liq}}$, and
$\theta_{\mathrm{min,ON}}^{\mathrm{Liq}}$ are expressed in integer numbers of
hourly time periods. The following constraints are imposed only when the lagged
time indices belong to the extended horizon $T$.

If the liquefier transitions from Liquid Production to OFF, it must remain in
the OFF mode for at least $\theta_{\mathrm{min,OFF}}^{\mathrm{Liq}}$ time
periods:
\begin{equation}
y_{p,m_{\mathrm{OFF}},t}
\geq
\sum_{r=0}^{\theta_{\mathrm{min,OFF}}^{\mathrm{Liq}}-1}
z_{p,m_{\mathrm{LP}},m_{\mathrm{OFF}},t-1-r}
\qquad
\forall p \in P,\ t \in \bar T.
\label{eq:min_off}
\end{equation}
Once the liquefier is turned on from the OFF mode, it must remain in the
Liquid Start-up mode for the prescribed start-up duration
$\theta_{\mathrm{Startup}}^{\mathrm{Liq}}$:
\begin{equation}
y_{p,m_{\mathrm{LS}},t}
\geq
\sum_{r=0}^{\theta_{\mathrm{Startup}}^{\mathrm{Liq}}-1}
z_{p,m_{\mathrm{OFF}},m_{\mathrm{LS}},t-1-r}
\qquad
\forall p \in P,\ t \in \bar T.
\label{eq:startup_duration}
\end{equation}
After the start-up duration has elapsed, the liquefier must transition from
Liquid Start-up to Liquid Production:
\begin{equation}
z_{p,m_{\mathrm{LS}},m_{\mathrm{LP}},t}
\geq
z_{p,m_{\mathrm{OFF}},m_{\mathrm{LS}},
t-\theta_{\mathrm{Startup}}^{\mathrm{Liq}}}
\qquad
\forall p \in P,\ t \in \bar T.
\label{eq:startup_to_production_transition}
\end{equation}
Equivalently, the plant is forced to enter the Liquid Production mode after the
completion of start-up:
\begin{equation}
y_{p,m_{\mathrm{LP}},t}
\geq
z_{p,m_{\mathrm{OFF}},m_{\mathrm{LS}},
t-\theta_{\mathrm{Startup}}^{\mathrm{Liq}}-1}
\qquad
\forall p \in P,\ t \in \bar T.
\label{eq:startup_to_production_mode}
\end{equation}
Finally, once the liquefier enters Liquid Production, it must remain in the
Liquid Production mode for at least
$\theta_{\mathrm{min,ON}}^{\mathrm{Liq}}$ time periods:
\begin{equation}
y_{p,m_{\mathrm{LP}},t}
\geq
\sum_{r=0}^{\theta_{\mathrm{min,ON}}^{\mathrm{Liq}}-1}
z_{p,m_{\mathrm{LS}},m_{\mathrm{LP}},t-1-r}
\qquad
\forall p \in P,\ t \in \bar T.
\label{eq:min_on}
\end{equation}

\subsection{Boundary conditions}

Solving the scheduling model requires specifying the initial state of each ASU
and the desired terminal inventory levels. The inventory boundary conditions are
given by
\begin{subequations}\label{eq:inventory_boundary}
\begin{align}
IV_{pi0} &= IV_{pi}^{\mathrm{initial}}
&& \forall p \in P,\ i \in \bar I,
\label{eq:inventory_boundarya}\\
IV_{pi,t^f} &\geq IV_{pi}^{\mathrm{final}}
&& \forall p \in P,\ i \in \bar I.
\label{eq:inventory_boundaryb}
\end{align}
\end{subequations}
Equation~\ref{eq:inventory_boundarya} fixes the inventory available at the
beginning of the scheduling horizon, while
Equation~\ref{eq:inventory_boundaryb} ensures that the inventory remaining at
the end of the horizon is no less than the prescribed terminal level.

The initial operating mode of each ASU must also be specified. Let
$y_{pm}^{\mathrm{initial}}$ denote the initial mode of plant $p$, where
$y_{pm}^{\mathrm{initial}}=1$ if plant $p$ is initially in mode $m$, and 0
otherwise. These parameters satisfy
\[
\sum_{m \in M} y_{pm}^{\mathrm{initial}} = 1
\qquad \forall p \in P.
\]
The initial mode boundary condition is
\begin{equation}\label{eq:initial_mode}
y_{pm0} = y_{pm}^{\mathrm{initial}}
\qquad
\forall p \in P,\ m \in M.
\end{equation}

Finally, the switching constraints require knowledge of recent operating history,
since the minimum OFF time, start-up duration, and minimum ON time restrictions
may depend on transitions that occurred before the beginning of the optimization
horizon. Let $z_{pmm't}^{\mathrm{historical}}$ be equal to 1 if plant $p$
transitioned from mode $m$ to mode $m'$ at historical time $t$, and 0 otherwise.
The historical transition boundary condition is
\begin{equation}\label{eq:historical_transition}
z_{pmm't} = z_{pmm't}^{\mathrm{historical}}
\qquad
\forall p \in P,\ (m,m') \in \mathcal{TR},\
-\theta^H+1 \leq t \leq -1.
\end{equation}
Here,
\[
\mathcal{TR}
=
\left\{
(m_{\mathrm{OFF}},m_{\mathrm{LS}}),
(m_{\mathrm{LS}},m_{\mathrm{LP}}),
(m_{\mathrm{LP}},m_{\mathrm{OFF}})
\right\},
\]
and
\[
\theta^H =
\max
\left\{
\theta_{\mathrm{min,OFF}}^{\mathrm{Liq}},
\theta_{\mathrm{Startup}}^{\mathrm{Liq}},
\theta_{\mathrm{min,ON}}^{\mathrm{Liq}}
\right\}.
\]
Therefore, the historical horizon extends only as far back as the largest
transition-related lag required by the liquefier state-transition constraints.

\subsection{Objective function}

The objective of the scheduling problem is to minimize the total operating cost
of the ASU system over the planning horizon. Since third-party purchases and
external conversion processes are not considered in this formulation, the
operating cost is determined solely by the electricity required to operate the
ASUs. The total operating cost for plant $p$ is therefore given by
\begin{equation}\label{eq:plant_cost}
TOC_p =
\sum_{t \in \bar T} \varepsilon_{pt} EC_{pt}
\qquad \forall p \in P,
\end{equation}
where $\varepsilon_{pt}$ denotes the unit electricity price for plant $p$ at time
$t$, and $EC_{pt}$ denotes the electricity consumption of plant $p$ at time $t$.

All physical quantities, including production, inventory, dispatch, and
electricity consumption variables, are constrained to be nonnegative where
appropriate. The production targets assigned to the ASUs are assumed to be
feasible within the available plant capacities; therefore, external product
purchases are not included in the formulation. The ASUs are also assumed to be
supported by sufficient transportation capacity, so fleet-sizing and
vehicle-routing decisions are not modeled explicitly.

Finally, we note that the objective function used in the coordinated demand
response case study in the main text also includes shipping costs associated
with serving different customer regions. These costs are omitted here because
the present formulation describes the static status-quo scheduling problem, in
which demand is aggregated and plant operations are optimized only with respect
to time-varying electricity prices.

\section*{CRediT authorship contribution statement}

\noindent\textbf{Akshdeep Singh Ahluwalia:}
Conceptualization, Methodology, Software, Validation, Formal analysis,
Investigation, Data curation, Visualization, Writing -- original draft,
Writing -- review \& editing.

\medskip
\noindent\textbf{Zachary Wilson:}
Conceptualization, Supervision, Writing -- review \& editing.

\medskip
\noindent\textbf{Jeffrey E. Arbogast:}
Conceptualization, Supervision, Writing -- review \& editing.

\medskip
\noindent\textbf{Can Li:}
Conceptualization, Supervision, Project administration, Funding acquisition,
Writing -- review \& editing.

\section*{Declaration of competing interest}
Can Li reports that financial support for this research was provided by Air Liquide. Zachary Wilson and Jeffrey E. Arbogast are employees of Air Liquide. Akshdeep Singh Ahluwalia declares that he has no known competing financial interests or personal relationships that could have appeared to influence the work reported in this paper.

\section*{Funding}
This work was supported by Air Liquide through the 2023 Air Liquide Scientific Challenge.

\section*{Data availability}
The data used in this study are synthetic and are available from the corresponding author upon reasonable request. The computational implementation is not publicly available.

\section*{Disclaimer}

The results and conclusions presented in this work are based on
theoretical research using synthetic data and a simulated industrial gas supply chain. The case study is not representative of any actual Air Liquide supply chain.

\section*{Declaration of generative AI and AI-assisted technologies in the writing process}

During the preparation of this work, the authors used OpenAI's ChatGPT in order to improve the language and readability of the manuscript. After using this tool, the authors reviewed and edited the content as needed and take full responsibility for the content of the publication.

\bibliographystyle{elsarticle-harv} 
\bibliography{export}



\end{document}